\renewcommand{\fnum@figure}{\textbf{Fig.~\thefigure}}
\def\@fnsymbol#1{\ensuremath{\ifcase#1\or * \or \ddagger\or
   ~ \or \mathparagraph\or \|\or **\or \dagger\dagger
   \or \ddagger\ddagger \else\@ctrerr\fi}}
\newcommand{\fO}{\mathcal{O}} %
\newcommand{\cA}{\mathcal{A}}
\newcommand{\rt}{\mathtt{root}} 
\newcommand{\depth}{\mathtt{depth}} 
\newcommand{\ch}{\mathtt{ch}} 
\newcommand{\cut}{\mathtt{cut}} 
\newcommand{\hgt}{\mathtt{depth}} 
\newcommand{\cost}{\mathtt{cost}} 
\newcommand{\OPT}{\mathtt{OPT}} %
\newcommand{\scl}{\mathtt{sc}} %
\newcommand{\pp}{\mathtt{pp}} %
\newcommand\TombStone{\rule{.7ex}{1.7ex}}
\renewcommand{\qedsymbol}{\TombStone}
\newcommand{\qedd}{\let\qed\relax\quad\raisebox{-.1ex}{$\qedsymbol$}}
\newcommand{\ceil}[1]{\lceil #1 \rceil}
\newcommand{\floor}[1]{\lfloor #1 \rfloor}
\newtheorem{lemma}{Lemma} %
\newtheorem{corollary}{Corollary}
\newtheorem{observation}{Observation}
\newtheorem*{openQuestion}{Open question}
\theoremstyle{definition}
\newtheorem{definition}{Definition}
\title{{Splay trees on trees}}
\def\titlefootnotetext{Institut f\"ur Informatik, Freie Universit\"at Berlin, \texttt{laszlo.kozma@fu-berlin.de}.\newline {Research supported by DFG grant KO 6140/1-1.}}
\author{
Benjamin Aram Berendsohn\thanks{Institut f\"ur Informatik, Freie Universit\"at Berlin, \texttt{beab@zedat.fu-berlin.de}}  \and 
L\'aszl\'o Kozma\thanks{\titlefootnotetext}}%
\begin{document}
\date{}
	\maketitle
	
	\begin{abstract}
		Search trees on trees (STTs) are a far-reaching generalization of binary search trees (BSTs), allowing the efficient exploration of tree-structured domains. (BSTs are the special case in which the underlying domain is a path.) Trees on trees have been extensively studied under various guises in computer science and discrete mathematics.
		
		Recently Bose, Cardinal, Iacono, Koumoutsos, and Langerman (SODA 2020) considered \emph{adaptive} STTs and observed that,
apart from notable exceptions, the machinery developed for BSTs in the past decades does not readily transfer to STTs. In particular, they asked %
		whether the optimal STT can be efficiently computed or approximated (by analogy to Knuth's algorithm for optimal BSTs), and whether natural self-adjusting BSTs such as Splay trees (Sleator, Tarjan, 1983) can be extended to this more general setting. 
		
		We answer both questions affirmatively. First, we show that a $(1 + \frac{1}{t})$-approximation of an optimal size-$n$ STT for a given search distribution can be computed in time $\fO(n^{2t + 1})$ for all integers $t \geq 1$. Second, we identify a broad family of STTs with linear rotation-distance, allowing the generalization of Splay trees to the STT setting. We show that our generalized Splay satisfies a \emph{static optimality} theorem, asymptotically matching the cost of the optimal STT in an online fashion, i.e.\ without knowledge of the search distribution. Our results suggest an extension of the \emph{dynamic optimality conjecture} for Splay trees to the broader setting of trees on trees.
		 
	\end{abstract}
	
	\newpage
	\tableofcontents
	\newpage
	
	\section{Introduction}\label{sec:intro}

Binary search trees (BSTs) support the efficient storage and retrieval of items from a totally ordered set, and are among the best-studied structures in computer science. The set of possible items, i.e.\ the ``search space'' of the BST is typically assumed to be a collection of integers. One may also take the collection of nodes on a path as the search space, with the obvious ordering along the path. %

This view suggests a broad generalization of BSTs, letting the underlying search space be, instead of a path, a \emph{general tree}. Denoting the underlying tree by $S$, the goal of a search is to locate a certain node $x$ of $S$. The search proceeds via oracle calls, where $x$ is ``compared'' to some node $y$ of $S$. The oracle either answers $x=y$ (in which case the search can stop), or identifies the connected component that contains node $x$, after the removal of node $y$ from $S$. The search then continues recursively within the identified connected component.

\begin{figure*}[h]
  \centering
  \includegraphics[width=8cm]{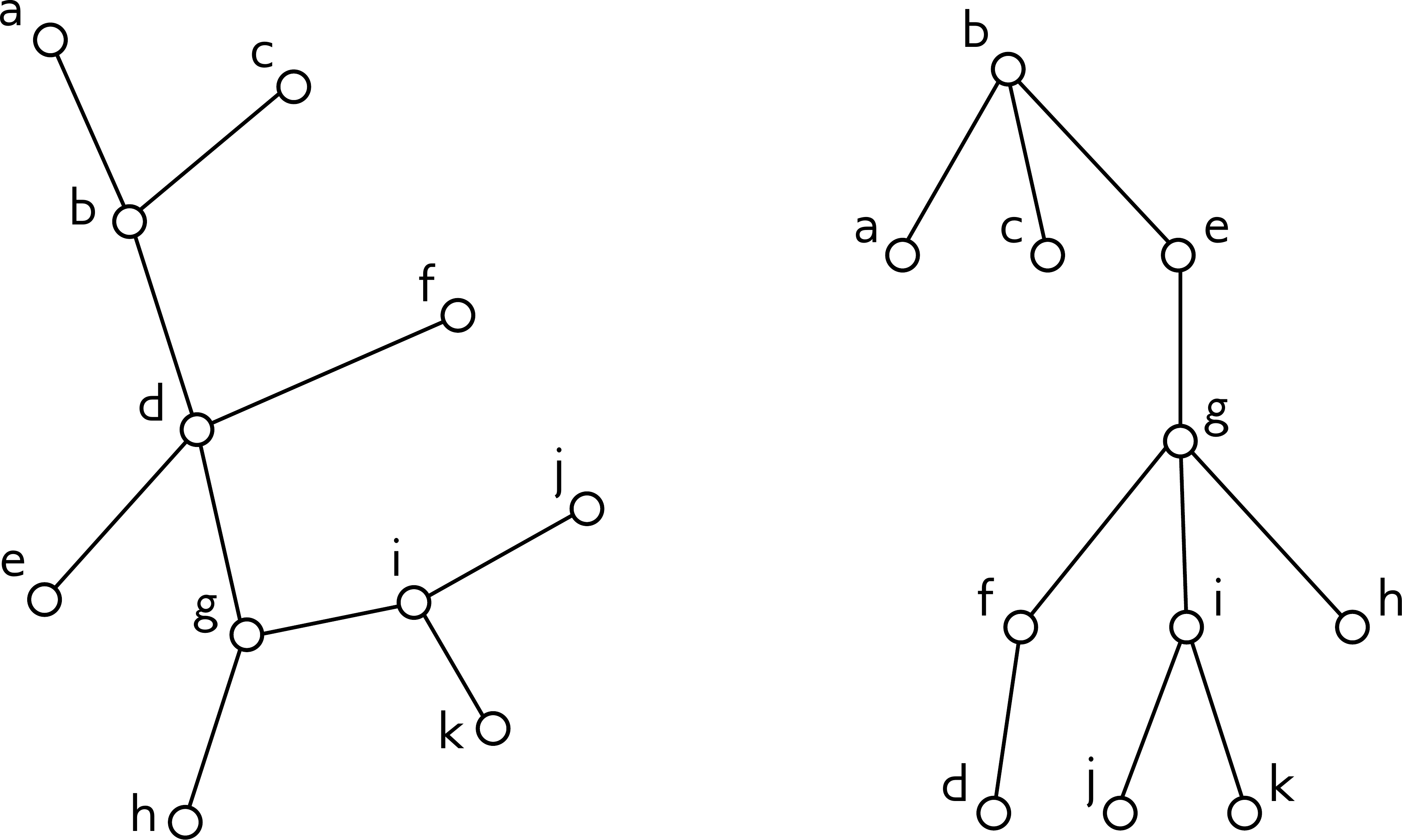}
  \caption{~~(\emph{left}) Tree $S$; ~~(\emph{right}) A search tree (STT) $T$ on $S$.\label{fig1}}
\end{figure*}

We may view a search strategy of this kind as a secondary tree $T$ on the nodes of $S$, built as follows. The root of $T$ is an arbitrary node $r$ of $S$, and the children of $r$ in $T$ are the roots of trees built recursively on the connected components of $S \setminus r$. We refer to such a tree $T$ as an STT (search tree on tree) on $S$. Oracle-calls are assumed constant-time, the search for $x$ is thus considered to take time proportional to the length of the search path from the root of $T$ to $x$. See \Cref{fig1} for an example. Note that $T$ is rooted, while $S$ is unrooted. Further note that the edge-sets of $T$ and $S$ may differ, and that the number of children of every node $x$ in $T$ is \emph{at most} the degree of $x$ in $S$. It follows that in the special case where $S$ is a path, the STT is, in fact, a BST. %

\medskip

One can further generalize STTs to allow searching in arbitrary graphs. Search trees on graphs (and trees) have been extensively studied in various settings.
Given a graph $G$, the minimal height of a search tree on $G$ is known as the \emph{treedepth} of $G$ (see e.g.\ \cite[\S\,6]{sparsity} for a comprehensive treatment). In other contexts, search trees on graphs have been studied as \emph{tubings}~\cite{carr2006coxeter}, \emph{vertex rankings}~\cite{deogun, bodlaender98, EvenSmorodinsky}, \emph{ordered colorings}~\cite{katchalski1995ordered}, or \emph{elimination trees}~\cite{liu1990role, pothen1990, AspvallHeggernes, bodlaender95} with applications in \emph{matrix factorization}, see e.g.~\cite[\S\,12]{duff2017}. In polyhedral combinatorics, search trees on trees are seen as vertices of a \emph{tree associahedron}, a special case of \emph{graph associahedra}~\cite{carr2006coxeter, devadoss2009realization, Postnikov}, and a generalization of the classical associahedron. %
The associahedron (whose vertices correspond to BSTs or other equivalent Catalan-structures) is a central and well-studied object of combinatorics and discrete geometry, see e.g.\ the recent book~\cite{muller2012associahedra} or survey~\cite{CSZ15} for a broad overview of this remarkable structure, its history, and further references.

Finding a search tree of minimal height on a graph is, in general, NP-hard~\cite{pothen1988}, but solvable in polynomial time for some special classes of graphs~\cite{AspvallHeggernes, deogun}. In particular, the minimum height search tree on a tree can be found in linear time by Sch\"affer's algorithm~\cite{schaeffer1989}, rediscovered multiple times during the past decades. An STT of logarithmic depth (analogously to a balanced BST) can be obtained via \emph{centroid decomposition}, an idea that goes back to the 19-th century work of Camille Jordan~\cite{Jordan1869}.

\medskip

In the context of searching, minimum height is a very limited form of optimality, only bounding the worst-case cost of a single search. For a given distribution of searches, the shape of the optimal tree may be very different from a minimum height tree. In the special case of BSTs, finding the optimal search tree for a given distribution is a well-understood problem. Knuth's textbook dynamic programming algorithm solves this task in $\fO(n^2)$ time~\cite{knuth_optimum}, and linear time constant-approximations have long been known~\cite{mehlhorn1975nearly, mehlhorn77abest}.

Recently, Bose, Cardinal, Iacono, Koumoutsos, and Langerman~\cite{Bose20} studied STTs in the context of searching, and explored whether the techniques developed for BSTs extend to STTs. They remark that no analogue of Knuth's algorithm is known for STTs, and it is not even clear whether the optimum search tree problem is polynomial-time solvable in this broader setting. Intuitively, the main difficulty is that, whereas BSTs consist of subtrees built over polynomially many candidate sets (corresponding to contiguous intervals of the search space), STTs consist of subtrees built \emph{over subtrees} of the search space, whose number is in general exponential. 

\medskip

Our first result (\S\,\ref{sec3}) is a polynomial-time approximation-scheme (PTAS) for the optimal STT problem. In the special case of $2$-approximating the optimum we obtain a simple $O(n^3)$ algorithm. To our knowledge, no constant-factor approximation was previously known. 

\begin{restatable}{theorem}{restatethma}\label{thm1}
Let $X$ be a search sequence over the nodes of an $n$-node tree $S$ and let $\OPT(X)$ be the minimum cost of serving $X$ in a fixed search tree on $S$. For every integer $t \geq 1$ we can find in time $\fO(n^{2t + 1})$ a search tree on $S$ that serves $X$ with cost at most $(1+\frac{1}{t}) \cdot \OPT(X)$.
\end{restatable}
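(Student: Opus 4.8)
The plan is to adapt Knuth's dynamic-programming idea to STTs by exploiting the structure of the candidate subtrees that can appear. In a BST, every subtree is built on an interval, so there are $\fO(n^2)$ candidate sets; in an STT, a subtree is built on a \emph{connected subtree} of $S$, and such subtrees can be exponentially many. The key observation is that every connected subtree arising in an STT is obtained from $S$ by deleting at most $t$ ``boundary'' vertices when we only look $t$ levels deep. More precisely, I would fix a parameter $t$ and consider, instead of arbitrary STTs, the class of STTs that are assembled from \emph{blocks of depth $t$}: partition the STT into layers, where each layer is a connected piece of the STT of depth at most $t$, hanging below a ``port'' set of size at most some function of $t$. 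A connected subtree of $S$ that is the search space of such a block is determined by its boundary, i.e.\ by at most $t+1$ vertices of $S$ (the vertices adjacent to the deleted ones, plus the attachment point), so there are only $\fO(n^{t+1})$ such subtrees. I would then run a DP over these $\fO(n^{t+1})$ subtrees, where the transition at a subtree $R$ enumerates all $\fO(n^t)$ ways to choose the top $\le t$ levels of the STT restricted to $R$ (an optimal depth-$t$ block can itself be found by brute force or by a nested DP), and recurses on the resulting at-most-$\fO(n)$ child subtrees, each again of the same bounded-boundary type. This gives running time $\fO(n^{2t+1})$: $\fO(n^{t+1})$ states times $\fO(n^t)$ transition work (plus lower-order terms for enumerating the children).

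For the approximation guarantee, I would argue as follows. Take the true optimal STT $T^*$ for $X$. Chop $T^*$ into layers of depth exactly $t$ by cutting at depths $t, 2t, 3t, \dots$ (i.e.\ vertices at depth $1,\dots,t$ form layer $1$, depths $t+1,\dots,2t$ form layer $2$, and so on). This decomposition of $T^*$ is exactly of the restricted form the DP searches over, so the DP finds a tree $T$ at least as good as $T^*$ \emph{in this layered model}; the only thing to bound is the overhead of the layered model relative to $T^*$ itself. To do that, I would build from $T^*$ a ``rounded'' tree $T'$ in the layered class whose cost is at most $(1+\tfrac1t)\cdot\cost(T^*)$: for each search $x$ ending at depth $d$ in $T^*$, its depth in $T'$ is at most $d$ plus the number of layer-boundaries above it that needed a fresh ``port'' vertex — and by a standard charging/averaging argument (choosing the best of the $t$ possible phase offsets for where the cuts fall, or equivalently averaging over offsets), one phase offset gives total extra cost at most $\frac1t\cost(T^*)$. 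Since the DP optimum is no worse than this particular $T'$, the output has cost at most $(1+\tfrac1t)\OPT(X)$.

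The main obstacle I expect is pinning down exactly how a depth-$t$ block attaches to the subtree below it, and in particular controlling the size of the ``interface'' between consecutive layers so that (i) the number of DP states really is $\fO(n^{t+1})$ and not larger, and (ii) the cost blow-up in passing from $T^*$ to the layered tree $T'$ is genuinely only a $(1+\tfrac1t)$ factor rather than something multiplicative per layer. The subtlety is that when we cut $T^*$ at depth $t$, the children of the bottom-most cut vertex each spawn a new subtree of $S$, and we must verify that each such subtree still has a small boundary (so that the recursion stays within the $\fO(n^{t+1})$-size state space); this should follow from the basic fact about STTs that a node's children in $T$ correspond to the connected components of $S$ minus the node, intersected with the current search space, so the boundary only shrinks as we descend. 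I would also need a clean statement that the best depth-$t$ block over a given subtree with a given port can be computed in $\fO(n^t)$ time, which is itself a small brute-force/DP argument over how to split the $\le t$ top levels.

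Once these structural facts are in place, the argument is essentially ``Knuth's DP with fatter states,'' and the running-time bookkeeping ($\fO(n^{t+1})$ states $\times$ $\fO(n^t)$ work per state) yields the claimed $\fO(n^{2t+1})$ bound, while the phase-averaging argument yields the $(1+\tfrac1t)$ approximation factor. Setting $t=1$ recovers the $\fO(n^3)$ $2$-approximation mentioned in the introduction.
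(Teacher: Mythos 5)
Your high-level plan — parameterize by a ``boundary'' size, argue there are polynomially many such subtrees, run a Knuth-style DP, and show that an optimal tree can be perturbed into one of the restricted form with only a $(1+\tfrac1t)$ overhead — is the same skeleton as the paper's. But the proposal rests on a structural claim that is false, and the falseness is precisely the difficulty the paper's argument is built to overcome. You write that ``the boundary only shrinks as we descend,'' but in an STT the opposite happens: if $x$ is a child of $p$ then $\delta(T_x) \subseteq \delta(T_p)\cup\{p\}$, so the boundary can \emph{grow by one} at every level, and in an arbitrary optimal tree it can reach $\Omega(n)$. Consequently, slicing the optimal $T^*$ at depth multiples of $t$ does not yield blocks whose underlying subtrees of $S$ have boundary size $O(t)$; your DP's state space is not $\fO(n^{t+1})$ under that decomposition, and the recursion is not closed. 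The paper's fix is to define \emph{$k$-cut trees} (boundary size at most $k$ for every subtree) and to prove a nontrivial lemma: any STT can be rewritten into a $k$-cut STT with every depth inflated by at most $1+1/\lfloor k/2\rfloor$, by inserting, at the right moments, a \emph{leaf centroid} of the convex hull of the current boundary, which \emph{halves} the boundary size and therefore buys $\lceil k/2\rceil$ more levels before another insertion is forced. That halving step is the missing ingredient; a ``fresh port vertex'' chosen by depth-offset alone will not keep the boundary bounded, and there is no offset to average over, since the insertion points are dictated by boundary growth, not by a fixed depth grid. Setting $k=2t$ then gives both the $1+\tfrac1t$ factor and, with $\fO(n^k)$ admissible subsets and $\fO(n)$ work per subset (the DP picks the root one node at a time, not $t$ levels at a time), the $\fO(n^{2t+1})$ running time.

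Two secondary issues. First, your transition count ``$\fO(n^t)$ ways to choose the top $t$ levels'' is off: the top $t$ levels of an STT are a tree, not a path, so the number of depth-$t$ prefixes can be $n^{\Omega(2^t)}$; the paper avoids this entirely by recursing one root at a time, and the depth-$t$ ``block'' structure is never enumerated explicitly. Second, even taking the phase-offset idea on its own terms, it yields an additive $+1$ per layer boundary for \emph{every} node beneath it regardless of offset, so averaging over offsets does not reduce the per-node overhead below $\lfloor d/t\rfloor$; the factor-$(1+\tfrac1t)$ bound must instead come from the fact (as in the paper) that inserted nodes are forced at least $\lceil k/2\rceil$ apart along every root-to-leaf path.
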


The result combines a number of observations. The first, due to Bose et al., is that a restricted class of STTs they call \emph{Steiner-closed trees} (defined in \S\,\ref{sec2}) contains a tree of cost at most twice the optimum. In \S\,\ref{sec3} we generalize Steiner-closed trees to \emph{$k$-cut trees}, a family that approximates the optimum with arbitrary accuracy. We then show that, if we restrict attention to $k$-cut search trees for fixed $k$, then the number of admissible subproblems becomes polynomial, and an optimization similar to Knuth's algorithm can be carried out. 

\medskip

Optimal trees are still far from the full story of efficient search. Search trees can be re-structured between searches via \emph{rotations}, and in this way they can take advantage of various kinds of regularities in a search sequence. %
The most prominent \emph{adaptive} BST is the Splay tree, introduced by Sleator and Tarjan~\cite{ST85}. Splay trees perform local re-arrangements on the search path, with no apparent concern for global structure; data structures of this kind are also called \emph{self-adjusting}. Splay trees have powerful adaptive properties~\cite{ST85, tarjan_sequential, finger1, finger2, Sundar, deque_Pet08, LevyT19a, LevyT19b}, for instance, they asymptotically match the cost of the optimal tree, without a priori knowledge of the search distribution (a property known as \emph{static optimality}, shown by Sleator and Tarjan~\cite{ST85}). The stronger \emph{dynamic optimality conjecture} (one of the long-standing open questions of computer science) speculates that Splay trees are competitive with any self-adjusting strategy on any search sequence~\cite{ST85}. 

The dynamic optimality conjecture has inspired four decades of research, leading to powerful adaptive algorithms, instance-specific upper and lower bounds and structural insights about the BST model~(see \cite{in_pursuit, Kozma16, LevyT19a} for recent surveys). After Bose et al.\ initiated the study of adaptive STTs, it is very natural to ask to what extent this body of work can be transferred to the broader setting of STTs. 

\medskip

The rotation primitive readily extends from BSTs to STTs (\Cref{fig2}), and this opens the way for adaptive STT strategies. %
Bose et al.\ show that (surprisingly) a lower bound from the BST model due to Wilber~\cite{Wilber} can be extended to STTs. Building on this result, they obtain an STT analogue of Tango trees~\cite{tango}. Like Tango trees for BSTs, the structure of Bose et al.\ is $\fO(\log\log{n})$-competitive with the optimal \emph{adaptive} STT strategy, with $n$ denoting the number of nodes in the tree.

Bose et al.\ note several difficulties in achieving an arguably more natural goal: adapting \emph{Splay trees} to the STT setting. Conjectured to be $\fO(1)$-competitive, Splay trees are in many ways preferable to Tango trees. They have several proven distribution-sensitive properties (including static optimality) and are simple and efficient, both in theory and in practice. 

For BSTs, another well-studied adaptive strategy is Greedy, introduced independently by Lucas~\cite{Luc88} and Munro~\cite{Mun00}. Greedy can be viewed as a powerful \emph{offline} algorithm, that (essentially) re-arranges the search path in order of \emph{future} search times. Strikingly, Demaine et al.~\cite{DHIKP09} have shown that Greedy can be turned into an \emph{online} algorithm with only a constant-factor slowdown. %
More recently, with a better understanding of its behaviour, Greedy emerged as another promising candidate for dynamic optimality (see e.g.~\cite{FOCS15, LI16, GoyalG19}).

\medskip

There appears to be a major difficulty in transferring techniques from BSTs to STTs, in particular, in generalizing Splay and Greedy. An essential feature of the BST model is that any tree of size $n$ can be transformed into any other tree of size $n$ with $\fO(n)$ rotations~\cite{STT88, Pournin}. This fact affords a great flexibility in designing and analyzing algorithms, as the cost of restructuring a subtree can be charged to the cost of \emph{its traversal}, and the actual details of the rotations can be abstracted away. By contrast, as shown recently by Cardinal, Langerman and P\'{e}rez-Lantero~\cite{Cardinal18}, the rotation-diameter of STTs is $\Theta(n \log{n})$. This fact makes it unclear how direct analogues of Splay and Greedy may work in the STT model. %

We overcome this barrier by showing (\S\,\ref{sec4}) that the rotation-diameter is, in fact, linear, as long as 
we restrict ourselves to the already mentioned class of Steiner-closed trees. In fact, we bound the rotation-diameter for our more general class of $k$-cut trees.

\begin{restatable}{theorem}{restatethmb}
Given two $k$-cut search trees $T$ and $T'$ on the same $n$-node tree $S$, we can transform $T$ into $T'$ through a sequence of at most $(2k-1)n - (k+1)k + 1$ rotations. Moreover, starting with a pointer at the root, we can transform $T$ into $T'$ through a sequence of $\fO(k^2n)$ pointer moves and rotations at the pointer. All intermediate trees are $k$-cut trees.\label{thm3}
\end{restatable}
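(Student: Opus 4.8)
The plan is to reduce everything to two primitive moves: ``peeling'' a $k$-cut tree down to a $1$-cut tree, and re-rooting among $1$-cut trees. Write $S_v$ for the region of a node $v$ of an STT (the vertex set of the subtree of $T$ rooted at $v$); it is connected in $S$, $T$ is $k$-cut iff every $|\partial S_v| \le k$, where $\partial S_v$ is the set of edges of $S$ with exactly one endpoint in $S_v$, and a rotation of $v$ over its parent $u$ behaves just as in the BST case: $v$ inherits $u$'s subtree and exactly one child of $v$ — the one whose region faces $u$ — is re-hung below $u$. Two facts drive the argument. First, a tree is $1$-cut iff it equals $T_\rho$ for some vertex $\rho$, namely $S$ rooted at $\rho$ and read off as an STT (its regions are exactly the rooted subtrees, each with one leaving edge); moreover inside the $1$-cut family the only available rotations are the re-rootings $T_\rho \to T_{\rho'}$ along an edge $\rho\rho'$ of $S$ (any other rotation exposes a second leaving edge), so $T_\rho$ can be turned into $T_{\rho'}$ by $\mathrm{dist}_S(\rho,\rho') \le n-1$ rotations with all intermediate trees $1$-cut. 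Second, in any STT a node $v$ at depth $d$ has $|\partial S_v| \le d$: since $S_v$ is one of the components of $S$ minus the $d$ proper ancestors of $v$, every edge leaving $S_v$ goes to one of those ancestors, and each ancestor — a single vertex of the tree $S$ — receives at most one such edge; hence always $|\partial S_v| \le \min(k, \mathrm{depth}_T(v))$.

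The heart of the proof is the claim that any $k$-cut tree $T$ can be transformed into some $1$-cut tree by at most $(k-1)(n-1) - \binom{k}{2}$ rotations, all intermediate trees $k$-cut. The procedure: while $T$ is not $1$-cut, let $j\ge 2$ be the largest boundary size that occurs, pick a region $S_v$ with $|\partial S_v| = j$ that is inclusion-maximal among all regions of boundary size $j$, and rotate $v$ up over its parent $p$ (which exists, as the root region has empty boundary). By maximality $S_p$ strictly contains a maximal $j$-region, so $|\partial S_p|\le j$ cannot equal $j$ and is at most $j-1$; a short computation with the rotation formula then shows that after the rotation both the new region of $v$ (which is the old $S_p$) and the new, shrunken region of $p$ have boundary size at most $j-1$, while every other region is unchanged. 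Thus the move never raises the maximum boundary size and strictly decreases the number of regions attaining it, so after finitely many moves $T$ becomes $1$-cut. For the count, use the potential $\Phi(T) = \sum_v |\partial S_v|$: the same computation shows every peeling move lowers $\Phi$ by at least one, $\Phi$ equals $n-1$ on every $1$-cut tree, and by the depth bound $\Phi(T) \le \sum_v \min(k,\mathrm{depth}_T(v)) \le k(n-1) - \binom{k}{2}$ — there is a vertex at each depth $0,1,\dots,k-1$, contributing the saving $0+1+\dots+(k-1)=\binom{k}{2}$, and the degenerate case $\mathrm{depth}(T)<k-1$ leaves so much slack it is handled separately. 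Hence at most $\Phi(T)-(n-1) \le (k-1)(n-1) - \binom{k}{2}$ peeling moves are needed.

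The theorem follows: peel $T$ to a $1$-cut tree $T_{\rho_1}$, peel $T'$ to a $1$-cut tree $T_{\rho_2}$, re-root $T_{\rho_1}$ to $T_{\rho_2}$ along the $S$-path between $\rho_1$ and $\rho_2$, and then run the second peeling backwards to reach $T'$; all intermediate trees are $k$-cut (the middle block is $1$-cut), and the total is at most $2\bigl[(k-1)(n-1) - \binom{k}{2}\bigr] + (n-1) = (2k-1)n - k(k+1) + 1$. For the pointer-model bound one re-implements the transformation as a single traversal of the target tree: the $k$-cut invariant guarantees that once a correct top portion of $T'$ is already built, the next vertex to be installed lies within $\fO(k)$ rotations of its final position, so installing it costs $\fO(k)$ rotations, each requiring the pointer to be moved $\fO(k)$ steps, i.e.\ $\fO(k^2)$ pointer operations per vertex and $\fO(k^2 n)$ in total.

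The step I expect to be the main obstacle is the case analysis certifying that a peeling rotation preserves $k$-cutness: identifying exactly which child of $v$ migrates, and separately treating the degenerate case in which $v$ is itself adjacent in $S$ to the remainder of $S_p$, so that one can confirm that both affected regions end up with boundary size $\le j-1$ (and that $\Phi$ strictly drops). Everything else — the re-rooting of $1$-cut trees, the potential telescoping, and the pointer rerouting — should be routine once this local lemma and the ``within $\fO(k)$ rotations of its slot'' claim are nailed down.
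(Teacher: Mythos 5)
Your high-level plan is the same as the paper's: route both trees through the canonical $1$-cut shape (a rooted copy of $S$), using the facts that $1$-cut trees are exactly the rooted copies of $S$ and that re-rooting along an $S$-edge costs one rotation. Your bookkeeping is different and arguably slicker: where the paper transforms $k$-cut $\to (k-1)$-cut $\to \cdots \to 1$-cut in $k-1$ rounds of $\le n-j$ rotations each, you run a single peeling loop and charge it to the potential $\Phi(T)=\sum_v |\partial S_v|$, with the bounds $\Phi(k\text{-cut}) \le kn-\binom{k+1}{2}$ and $\Phi(1\text{-cut})=n-1$. The arithmetic checks out and produces the same constant $(2k-1)n-(k+1)k+1$.

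The genuine gap, which you flag but mischaracterize as ``case analysis'', is the local lemma that a peeling rotation makes both affected boundaries drop to $\le j-1$. What you can get from the rotation formula alone is only $|\partial S'_p| \le |\partial S_p|+1 = j$, because the parent's new region is one component of $S[S_p]\setminus v$ and so inherits all of $\partial S_p$ plus $v$ in the worst case. The missing ingredient is the argument (the paper's Lemma~\ref{p:rot-step} via Lemma~\ref{lem:admr}) that at least one node of $\partial S_p$ is separated from $p$ by $v$: since $T_v$ is itself a $j$-cut STT rooted at $v$ with $|\partial S_v|=j$, $v$ must be a $j$-admissible root, i.e.\ $v \in \ch(\partial S_v)$, which forces $\partial S_v$ (hence $\partial S_p$) to have an element on the far side of $v$. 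Without that observation one cannot rule out $|\partial S'_p|=j$, and then neither the $k$-cut invariant nor the strict drop in $\Phi$ is secured. This is the actual content of the theorem; enumerating which child of $v$ migrates does not yield it. Separately, your pointer-model sketch describes a different algorithm (``install $T'$ top-down'') rather than an implementation of the peeling you just proved, and the claim that each vertex lies ``within $\fO(k)$ rotations of its final position'' is unsupported; the paper instead implements each $j\to(j-1)$ reduction by a DFS with a boundary-size data structure costing $\fO(k)$ oracle calls per rotation, giving $\fO(kn)$ per round and $\fO(k^2n)$ overall.
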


For BSTs, the fact that the rotation-diameter is linear can be shown by rotating both trees to a canonical \emph{path} shape~\cite{CulikWood}. Our proof of Theorem~\ref{thm3} can be seen as mimicking this classical argument, although the details are more subtle. The crucial observation is that, given two arbitrary $k$-cut trees, we can transform both trees to a \emph{canonical tree} that is a rooted version of the underlying search space $S$, using $\fO(kn)$ rotations.

The family of $k$-cut STTs (and Steiner-closed STTs, which correspond to the case $k=2$) thus form a connected, small-diameter core of tree associahedra, preserving useful properties of BSTs. By restricting ourselves to Steiner-closed trees, we regain part of the toolkit from BSTs. In particular, linear rotation distance allows us to implement natural transformations of the search path. 

\medskip

As our main result (\S\,\ref{sec5}), we define the SplayTT algorithm, a generalization of Splay to the setting of STTs. If the underlying search space $S$ is a path, SplayTT becomes the classical Splay tree. SplayTT keeps the search tree at all times in a Steiner-closed shape. Maintaining this property while restructuring the tree poses a number of technical difficulties. The main challenge is that the search ``path'', when viewed in the underlying search space, may (counter-intuitively) contain branchings of degree higher than two; this situation cannot arise in classical BSTs. %
We deal with this issue, by first splaying the higher degree branching nodes of the search path, followed by splaying the searched node itself.

We expect SplayTT to have distribution-sensitive properties that extend those of Splay trees to the STT setting. As a first result in this direction, we prove (\S\,\ref{sec6}) that SplayTT satisfies the analogue of \emph{static optimality} for Splay trees.

\begin{restatable}{theorem}{restatethmc}
Let $S$ be a tree of size $n$ and let $X$ be a sequence of $m$ searches over the nodes of $S$. Let $\OPT(X)$ denote the minimum cost of serving $X$ in a static search tree on $S$. Then the cost of SplayTT for serving $X$ is $\fO(\OPT(X) + n^2)$.  \label{thm4}
\end{restatable}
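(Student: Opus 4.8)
\medskip\noindent\emph{Proof strategy.}\quad This is the STT counterpart of the Sleator--Tarjan static-optimality theorem, so the plan is an amortized (potential-function) argument comparing SplayTT, search by search, against a fixed optimal static tree. First I would fix an optimal static tree $T^*$ on $S$, so that $\cost(T^*)=\OPT(X)$, and --- via the Steiner-closure reduction underlying \Cref{thm1}, at the price of a constant factor --- assume that $T^*$ is itself Steiner-closed; then $T^*$ and every tree that SplayTT ever holds lie in the common small-diameter core of the tree associahedron, and \Cref{thm3} (with $k=2$) is available. Writing $q_x$ for the number of requests to $x$ in $X$, we have $\OPT(X)=\Theta\bigl(\sum_x q_x(\depth_{T^*}(x)+1)\bigr)$ and $\OPT(X)\ge m$. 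I would then fix a nonnegative potential $\Phi$, a function of the current SplayTT tree (and of $T^*$), bound the amortized cost of serving each search, and pay the $\Phi_{\mathrm{init}}-\Phi_{\mathrm{final}}$ slack once, at the end. Since a Steiner-closed tree on $n$ nodes can have depth $\Theta(n)$, any reasonable such $\Phi$ has spread $\fO(n^2)$; this is precisely the source of the additive $n^2$ term, and it means $\Phi$ may be designed generously.

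The heart of the matter is an \emph{access lemma}: serving a search for $x$ has amortized cost $\fO(\depth_{T^*}(x)+1)$. Serving a search performs three kinds of work, all of which need charging: (i) splaying, one after another, the nodes at which the search path branches in $S$ with degree larger than two; (ii) splaying $x$ itself; and (iii) the $\fO(n)$ further rotations that, by \Cref{thm3}, re-establish the Steiner-closed shape. For (ii) I would replay the classical telescoping over zig / zig-zig / zig-zag steps, having first checked that the per-step potential inequality survives at nodes of arbitrary degree (the underlying concavity estimate is unaffected). For (i), the key point is that each branching node we splay is, at the moment we splay it, an ancestor of $x$, hence no farther from the root than $x$ in the sense $\Phi$ measures, so splaying it first only helps; one then argues --- and this is one of the delicate points --- that the costs of these auxiliary splays telescope into the single bound rather than multiplying it. Step (iii) costs only a constant factor. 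Summing the access lemma over the $m$ searches gives total amortized cost $\fO\bigl(\sum_x q_x(\depth_{T^*}(x)+1)\bigr)=\fO(\OPT(X))$ (using $m\le\OPT(X)$), and adding back the $\fO(n^2)$ slack finishes the proof; the factor-$2$ loss from Steiner-closing $T^*$ is absorbed into $\fO(\cdot)$.

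The step I expect to be the real obstacle is the design of $\Phi$ so that the access lemma genuinely yields $\fO(\depth_{T^*}(x)+1)$, rather than only $\fO(\log(W/w_x))$ for node weights $w$ with $W=\sum_x w_x$. Because a node of $S$ may have many children in $T^*$, a node sitting at $T^*$-depth one can still have a tiny $T^*$-subtree, so the classical choice $w_x=c^{-\depth_{T^*}(x)}$ violates the Kraft-type inequality $\sum_x w_x=\fO(1)$ that makes the BST proof work; indeed, a purely entropic analysis cannot reach the stated bound (on a star, the optimal STT costs $\Theta(m)$ while the request distribution may have entropy $\Theta(\log n)$, so any entropy-type upper bound on SplayTT would exceed $\fO(\OPT(X)+n^2)$ once $m$ is large). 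The remedy is to let $\Phi$ see the recursive structure of $T^*$ rather than only a per-node weight --- e.g.\ by rescaling the weights independently within each subtree of $T^*$, or by replacing the additive subtree-weight with the size of the smallest rooted $T^*$-subtree covering a given SplayTT subtree --- and then to re-verify both the per-step inequalities of the access lemma and that $\Phi$ retains spread $\fO(n^2)$. A further nuisance is bookkeeping the interaction between the branching-node splays of (i) and the Steiner-repair rotations of (iii), so that these do not spawn new high-degree branchings that would inflate later searches within the same service.
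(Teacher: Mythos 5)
Your high-level plan — fix a Steiner-closed $R$ with $\cost_R(X)=\fO(\OPT(X))$ via \Cref{lemdepth}, choose a nonnegative potential of spread $\fO(n^2)$ that depends on $R$ and the current SplayTT tree, and prove an amortized access bound of $\fO(\hgt_R(x))$ per search — is exactly the paper's strategy, and you correctly diagnose why the Sleator--Tarjan weight potential cannot work (the star example, where the entropy is $\Theta(\log n)$ but $\hgt_R=\fO(1)$, is precisely the obstruction the paper attributes to Bose et al.). However, there are two substantive gaps.

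First, your proposed repair of the potential is not the right one. Your concrete candidate --- ``the size of the smallest rooted $T^*$-subtree covering a given SplayTT subtree'' (presumably with a log) --- inherits the same flaw you diagnosed: on a star, every SplayTT subtree is covered only by the full $R$, so this potential has spread $\Theta(\log n)$ per node and yields amortized cost $\Theta(\log n)$, not $\fO(\hgt_R(x))=\fO(1)$. The paper instead sets $\phi(x)=-\min_{y\in N_x}\hgt_R(y)$, i.e.\ the \emph{depth in $R$} of the shallowest node in the SplayTT subtree $N_x$, with $\phi(T)=d\sum_x\phi(x)$. This directly has spread $\fO(\hgt_R(x))$ per node, and the role of the concavity/Kraft step in the classical proof is taken over by a purely combinatorial uniqueness lemma (\Cref{am:prop}(iv)): since $R$ is a search tree on the connected subtree $S[A]$, the $R$-LCA of any two nodes of $A$ lies in $A$, so the minimizer of $\hgt_R$ over $A$ is unique. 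This single fact is what makes the ZIG-ZIG and ZIG-ZAG steps yield a $(-1)$ gain, which is then charged against the traversal cost. Your suggestion of ``rescaling weights within subtrees'' does gesture at the recursive structure of $R$, but as stated it is not specific enough to verify the per-step inequalities, and it is not what the paper does.

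Second, your step (iii) — an extra pass of ``Steiner-repair'' rotations after splaying, invoked via \Cref{thm3} — is a misreading of the algorithm. SplayTT (Algorithm~\ref{alg:splay2}) is engineered so that \emph{no} repair is ever needed: by \Cref{p:branching-destr-Steiner} the only rotation that can destroy Steiner-closedness is one whose parent node is branching, and the two-phase procedure (first splay the branching nodes $x_k,\dots,x_1$ up to their nearest branching ancestors, then splay $x$ to the root) never performs such a rotation, while \Cref{p:branch-rot-unaff} guarantees phase~1 creates no new branching nodes on the search path. Dropping your step (iii) is not merely cosmetic: \Cref{thm3} bounds the repair by $\fO(n)$ rotations, which cannot be charged to a single $\hgt_R(x)$, so the amortization would break. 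The genuine delicate point — which you flag but do not resolve — is that phase~1 may consist largely of ZIG steps (which give no potential gain), and the paper handles this by observing that the branching nodes $x_1,\dots,x_k$ remain on $x$'s search path in phase~2, so the two phases together contain $\Omega(\hgt_T(x))$ ZIG-ZIG/ZIG-ZAG steps; without this accounting, the cost of phase~1 is not covered.
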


Despite the similarity between Splay and SplayTT, extending static optimality from Splay to SplayTT is not trivial. One of the obstacles already noted in~\cite{Bose20} is that \emph{Shannon entropy}, a natural measure of BST-efficiency cannot accurately capture the cost in the STT setting. The classical analysis of Splay trees via the \emph{access lemma}~\cite{ST85} appears closely tied to this quantity. To avoid this pitfall, we sidestep the access lemma and prove the static optimality of SplayTT directly, through a combinatorial argument.

We remark that the additive $\fO(n^2)$ term of Theorem~\ref{thm4} is independent of the length of the search sequence and depends only on the tree size. In fact, under the mild assumption that every node is searched at least once, the additive term can be removed. Theorem~\ref{thm4} strengthens Theorem~\ref{thm1}, in the sense that SplayTT does not know $X$ in advance. Just like in the BST setting, the static optimality of SplayTT also implies \emph{logarithmic amortized cost} (SplayTT is competitive with every tree, in particular, with the centroid decomposition tree). For STTs, however, static optimality is a significantly stronger claim; for any search distribution, the amortized cost is upper bounded by \emph{the treedepth} of the underlying search space, which may even be constant if e.g.\ $S$ is a star. %

\medskip
The strongest form of optimality for a self-adjusting search tree (and indeed, for any algorithm in any model) is \emph{instance-optimality}. In the case of search trees this is usually understood as matching the cost of the optimal adaptive strategy on every search sequence, up to some constant factor. As mentioned, this (conjectured) property of Splay trees is called \emph{dynamic optimality}. Since the generalization of Splay to SplayTT appears quite natural, we propose the following open question that subsumes classical dynamic optimality.

\begin{openQuestion}
Is SplayTT dynamically optimal in the STT model?
\end{openQuestion}

A further natural question is whether Greedy BST can be similarly  extended to the STT setting. The linear-time transformation between Steiner-closed trees (Theorem~\ref{thm3}) suggests a possible generalization, but its analysis appears to require the development of further tools, which we leave to future work. 

\paragraph{Further related work.} As mentioned, concepts related to STTs and more broadly to search trees on graphs have been studied in various contexts by different communities. Apart from the work of Bose et al.~\cite{Bose20}, that is closest in spirit to ours, earlier work has largely focused on \emph{minimum height}, i.e.\ the problem of computing the treedepth, or considered different models where queries are for \emph{edges}, see e.g.~\cite{BenAsher, LaberNogueira, MozesOnak, OnakParys}, or relatedly, searching in posets~\cite{LinialSaks, LinialSaks2, Carmo, Heeringa}.

In the edge-query model, search distributions (in the sense of Theorem~\ref{thm1}) have also been considered, and constant-approximations are known, with the existence of a PTAS remaining open~\cite{LaberMolinaro, Cicalese3, Cicalese1}. We remark that results do not directly transfer between the vertex-query and edge-query models. 

Other related, but not directly comparable work includes searching with weighted queries~\cite{Dereniowski}, searching with an oracle that identifies a shortest-path edge towards the target~\cite{EZK}, or searching with errors (stochastic or adversarial)~\cite{BorgstromKosaraju, FeigeRaghavan, KarpKleinberg, Finocchi, Boczkowski}. Search trees on graphs and trees have also been motivated with practical applications including file system synchronisation~\cite{BenAsher, MozesOnak}, software testing~\cite{BenAsher, MozesOnak}, asymmetric communication protocols~\cite{LaberMolinaro}, VLSI layout~\cite{Leiserson80}, and assembly planning~\cite{Iyer1}. Even and Smorodinsky~\cite{EvenSmorodinsky} relate STTs with the competitive ratio of certain online hitting set problems.

\section{Preliminaries}\label{sec2}

We use standard terminology on trees and graphs. A \emph{subtree} of an undirected tree is a connected subgraph. The set of nodes of a tree $S$ is denoted $V(S)$. The subgraph of $S$ induced by node set $A$ is denoted $S[A]$. By $S \setminus x$ we denote the forest obtained by deleting node $x$ in tree $S$. We say that $z$ \emph{separates $x$ and $y$}, if $x$ and $y$ fall into different connected components of $S \setminus z$, or equivalently if $z$ is on the path between $x$ and $y$ in $S$.
The \emph{convex hull} of a set of nodes $A \subseteq V(S)$, denoted $\ch(A)$ is the subtree of $S$ induced by the union of all paths between nodes in $A$. 

For a rooted tree $T$ and a node $x \in V(T)$ we denote by $T_x$ the subtree of $T$ rooted at $x$. 
The \emph{search path} of $x \in V(T)$ in $T$ is the unique path from the root to $x$. The number of nodes on the search path of $x$ in $T$ is $\hgt_T(x)$. Denoting the root of $T$ as $\rt(T)$, we have $\hgt_T(\rt(T)) = 1$. %

\paragraph{Search trees on trees.} We mostly follow the terminology of Bose et al.~\cite{Bose20}. %

\begin{definition}[Search tree on tree (STT)]
Given an unrooted tree $S$, a search tree on $S$ is a rooted tree $T$ with $V(T) = V(S)$, where the subtrees $T_x$ of $T$ are search trees on the connected components of $S \setminus \rt{(T)}$, for all children $x$ of $\rt(T)$. 
\end{definition}

Note that in STTs the ordering of children is irrelevant. See Figure~\ref{fig1} for illustration. The following observation is a direct consequence of the definition. 

\begin{observation}\label{obs1}
If $T$ is a search tree on $S$, then $S[V(T_x)]$ is a subtree of $S$ for all $x \in V(S)$. Furthermore, $T_x$ is a search tree on $S[V(T_x)]$.
\end{observation}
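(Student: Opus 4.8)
The plan is to prove both assertions together by induction, unwinding the recursive definition of an STT. Concretely, I would induct on $\hgt_T(x)$ (equivalently on $|V(S)|$, since the recursion only descends into the connected components of $S\setminus\rt(T)$, which are strictly smaller). The induction hypothesis is the full statement of \Cref{obs1}, quantified over all trees $S'$, all search trees $T'$ on $S'$, and all nodes $y$ with $\hgt_{T'}(y)$ smaller than the current depth.

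For the base case $\hgt_T(x)=1$, i.e.\ $x=\rt(T)$, we have $T_x=T$ and $V(T_x)=V(S)$, so $S[V(T_x)]=S$ is trivially a subtree of $S$, and $T_x=T$ is a search tree on $S[V(T_x)]=S$ by hypothesis. For the inductive step, set $r=\rt(T)$ and take $x\neq r$. Then $x$ lies in exactly one connected component $C$ of $S\setminus r$, and hence in the subtree $T_c$ of exactly one child $c$ of $r$ in $T$, with $T_x=(T_c)_x$. By the definition of STT, $T_c$ is a search tree on $C$; and since $r\notin V(C)$ we have $C=S[V(C)]$, so $C$ is itself a subtree of $S$. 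Because $\hgt_{T_c}(x)=\hgt_T(x)-1$, the induction hypothesis applied to the search tree $T_c$ on $C$ at the node $x$ yields that $C[V((T_c)_x)]$ is a subtree of $C$ and that $(T_c)_x$ is a search tree on it. Rewriting $(T_c)_x=T_x$ and $C[V(T_x)]=S[V(T_x)]$ then gives both conclusions for $T$ and $x$.

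I expect essentially no obstacle: the statement is a structural unwinding of the definition. The single point that warrants care is the bookkeeping for the induced-subgraph operator along the recursion — namely that for $A\subseteq V(C)$ one has $C[A]=(S[V(C)])[A]=S[A]$, and that a connected (sub)graph of $C$ is connected in $S$, so that "subtree of $C$" upgrades to "subtree of $S$". These are standard facts about induced subgraphs of trees, and I would state them in one sentence and invoke them silently thereafter.
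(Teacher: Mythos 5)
Your proof is correct. The paper gives no explicit proof — it simply states that the observation ``is a direct consequence of the definition'' — and your induction on $\hgt_T(x)$, unwinding the recursive definition of an STT and transferring from the component $C$ of $S\setminus\rt(T)$ back to $S$ via the identity $C[A]=S[A]$ for $A\subseteq V(C)$, is precisely the natural formalization of that remark.
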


\noindent The rotation operation in STTs (see e.g.~\cite{Cardinal18}) generalizes BST rotations, see Figure~\ref{fig2}.

\begin{definition}[Rotation in STTs]
Consider a node $x$ with parent $p$ in a search tree $T$ on $S$. A \emph{rotation of the edge $\{x,p\}$}, alternatively called \emph{a rotation at node $x$} results in a tree  $T'$ obtained from tree $T$ as follows:
\begin{compactenum}[(i)]
\item $x$ and $p$ swap places, 
\item if $x$ has a child $y$ whose subtree $T_y$ contains a node adjacent to $p$ in $S$, then $y$ becomes the child of $p$,
\item all other children of $x$ and $p$ preserve their parent.
\end{compactenum}
\end{definition}

\begin{figure*}
  \centering
  \includegraphics[width=13cm]{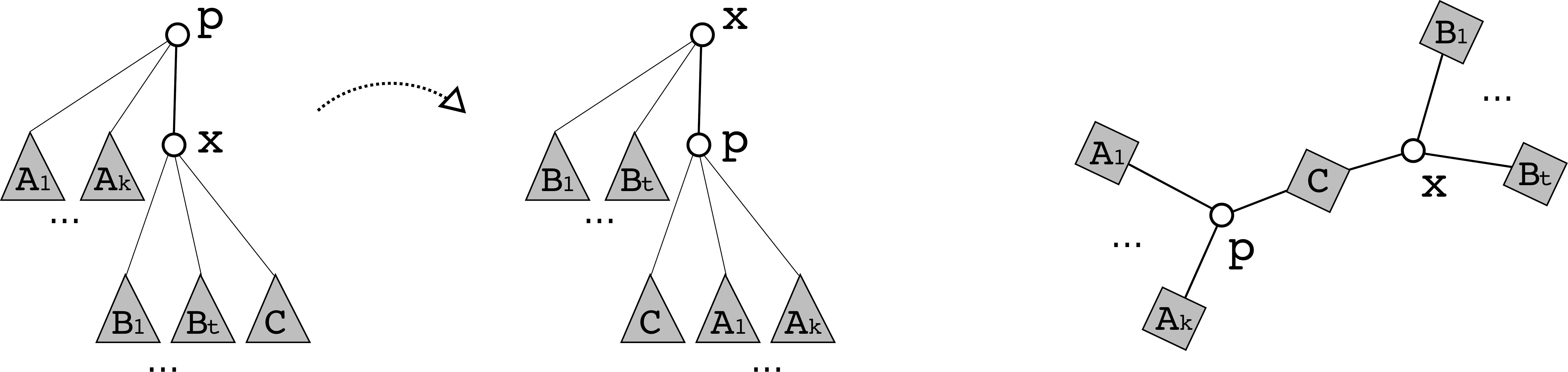}
  \caption{ Rotation of the edge $\{p,x\}$ in an STT $T$ (\emph{left}), and the underlying tree $S$ (\emph{right}). Triangles and diamonds represent subtrees of $T$ and $S$, respectively. Dots indicate an arbitrary number of subtrees that are affected in the same way by the rotation. One more rotation of the edge $\{p,x\}$ reverses the operation. \label{fig2}}
\end{figure*}

The following observation is immediate.

\begin{observation}
If $T$ is a search tree on $S$, then the tree $T'$ obtained from $T$ by an arbitary rotation is a search tree on $S$.
\end{observation}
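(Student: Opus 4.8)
The plan is to exploit the fact that a rotation is a purely local operation. Write $p$ for the parent of $x$ in $T$, so $T'$ is the result of rotating $\{x,p\}$. Everything in $T$ outside the subtree $T_p$ is left untouched, and the rotated block has the same node set as $T_p$ (a rotation only rearranges ancestor--descendant relations among $\{p,x\}$ and their subtrees). Hence, letting $g$ be the parent of $p$ in $T$ when it exists, the children of $g$ in $T'$ and the node sets of their subtrees are exactly as in $T$; so, descending from the root, it suffices to show that the rotated block $T'_x$, with root $x$, is a search tree on $U := S[V(T_p)]$. By \Cref{obs1}, $T_p$ is already a search tree on $U$ with root $p$, and this is the structural fact I would start from. (If $p$ is the root of $T$ then $U=S$ and this is the whole claim.)

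Next I would write down the relevant decomposition of $U$. Let $C_1,\dots,C_k$ be the node sets of the components of $U \setminus p$, indexed so that $x \in C_1$; the STT property gives $p$ children $c_1=x,c_2,\dots,c_k$ in $T_p$ with $T_{c_i}$ a search tree on $S[C_i]$. Since $U$ is a tree, $p$ has a unique neighbour $w$ in $C_1$. Going one level deeper, let $D_1,\dots,D_\ell$ be the node sets of the components of $S[C_1]\setminus x$, with $x$'s children $d_1,\dots,d_\ell$ and $T_{d_j}$ a search tree on $S[D_j]$; note that $x$'s neighbours in $U$ are exactly one node in each $D_j$, together with $p$ precisely in the case $w = x$.

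I would then split along the dichotomy built into the rotation rule. If $w=x$, no subtree of $x$ contains a neighbour of $p$, so after the rotation $x$ has children $p,d_1,\dots,d_\ell$ and $p$ retains children $c_2,\dots,c_k$; one checks that $U\setminus x$ has components $D_1,\dots,D_\ell$ and $E:=\{p\}\cup C_2\cup\dots\cup C_k$, that $T_{d_j}$ serves $S[D_j]$, and that $p$ with the subtrees $T_{c_2},\dots,T_{c_k}$ is a search tree on $S[E]$ (its components below $p$ being $S[C_2],\dots,S[C_k]$). If $w\neq x$, then $w$ lies in a unique $D_j$, say $D_1$, so $d_1$ is the unique child of $x$ whose subtree meets a neighbour of $p$ and thus becomes a child of $p$; now $x$ has children $p,d_2,\dots,d_\ell$ while $p$ has children $c_2,\dots,c_k,d_1$. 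Here $U\setminus x$ has components $D_2,\dots,D_\ell$ and $F:=D_1\cup\{p\}\cup C_2\cup\dots\cup C_k$ (the edge $\{w,p\}$ reconnects $D_1$ to the rest), and one verifies that $T_{d_2},\dots,T_{d_\ell}$ serve $S[D_2],\dots,S[D_\ell]$ and that $p$ with subtrees $T_{d_1},T_{c_2},\dots,T_{c_k}$ is a search tree on $S[F]$, whose components below $p$ are exactly $S[D_1],S[C_2],\dots,S[C_k]$ since $p$'s neighbours in $F$ are $w\in D_1$ and one node of each $C_i$.

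Each of these checks is a direct appeal to \Cref{obs1} and the STT definition once the decomposition is in place. The one place that calls for care is the second case: one must confirm that there is exactly one child of $x$ whose subtree contains a neighbour of $p$ (so the rotation is well-defined and matches the figure), and that after that subtree migrates to $p$ the child-lists of $x$ and of $p$ really do match the component structures of $U\setminus x$ and of $S[F]\setminus p$ — in particular that $D_1$ gets correctly re-absorbed into the $p$-side component. I expect this bit of bookkeeping to be the main, if modest, obstacle; everything else is routine.
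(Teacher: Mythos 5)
Your proof is correct. The paper offers no argument for this observation---it is stated as ``immediate''---and your case analysis (reducing to the rotated block $T'_x$ on $U=S[V(T_p)]$, decomposing along $p$ and then along $x$, and handling $w=x$ versus $w\neq x$ via the uniqueness of $p$'s neighbour in $C_1$) is exactly the straightforward verification the authors leave implicit.
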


\paragraph{Steiner-closed STT.} Bose et al.\ introduced an important property of STTs that also plays an essential role in our results. We review this concept next. 

\begin{definition}[Steiner-closed set~\cite{Bose20}]
A set of nodes $A \subseteq V(S)$ is Steiner-closed, if every node in $\ch(A) \setminus A$ is connected to exactly two nodes of $\ch(A)$.
\end{definition}

Observe that for all $A \subseteq V(S)$, the nodes in $\ch(A) \setminus A$ are connected to \emph{at least two} nodes of $\ch(A)$. Therefore, if $A$ is not Steiner-closed, then there is a node $q \in \ch(A) \setminus A$ that is connected to \emph{at least three} nodes of $\ch(A)$. See Figure~\ref{fig3} for illustration. The following observation is immediate. 

\begin{observation}\label{obspath}
If $\ch(A)$ is a path in $S$, then $A$ is a Steiner-closed set. 
\end{observation}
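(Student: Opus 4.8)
The plan is to derive the claim directly from the remark stated immediately before it, namely that for every $A \subseteq V(S)$, each node of $\ch(A) \setminus A$ is connected to \emph{at least two} nodes of $\ch(A)$. Together with the hypothesis that $\ch(A)$ is a path, this pins the number of such neighbors down to exactly two.

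Concretely, first I would recall that $\ch(A)$ is the subgraph of $S$ induced on the set $B$ of all nodes lying on some $S$-path between two nodes of $A$, and that by hypothesis this induced subgraph is a path. In a path, every vertex is adjacent, within the path, to at most two other vertices; hence for each $q \in \ch(A) \setminus A = B \setminus A$, the number of $S$-neighbors of $q$ that lie in $B$ is at most two. On the other hand, by the preceding remark this number is at least two. Therefore it equals exactly two for every $q \in \ch(A) \setminus A$, which is precisely the definition of $A$ being Steiner-closed.

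There is essentially no obstacle here; the only point needing minimal care is to read ``connected to exactly two nodes of $\ch(A)$'' as ``having exactly two $S$-neighbors inside the vertex set of $\ch(A)$'' (consistent with the wording of the paragraph just above the statement), and to use that $\ch(A)$ is an \emph{induced} path so that adjacencies are counted correctly. Optionally, for completeness one could also observe that the two endpoints of the path $\ch(A)$ must belong to $A$ — an endpoint of $\ch(A)$ cannot be an interior vertex of any $S$-path between two elements of $A$ — so that $\ch(A)\setminus A$ consists only of interior vertices of the path; but this refinement is not needed for the argument above.
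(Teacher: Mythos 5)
Your proof is correct and is precisely the argument the paper treats as immediate: since $\ch(A)$ is an induced subtree of $S$, degree in $\ch(A)$ coincides with the count of $S$-neighbors inside $V(\ch(A))$; the general lower bound of two (noted just before the observation) together with the path's degree bound of two pins the count to exactly two for every $q \in \ch(A) \setminus A$. The optional remark that the endpoints of the path must lie in $A$ is a correct and harmless aside.
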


We next define Steiner-closed STTs. 

\begin{definition}[Steiner-closed STT~\cite{Bose20}]
An STT $T$ on $S$ is Steiner-closed, if for all $x \in V(S)$, the set of nodes in the search path of $x$ is a Steiner-closed set.
\end{definition}

\begin{figure*}
  \centering
  \includegraphics[width=3.5cm]{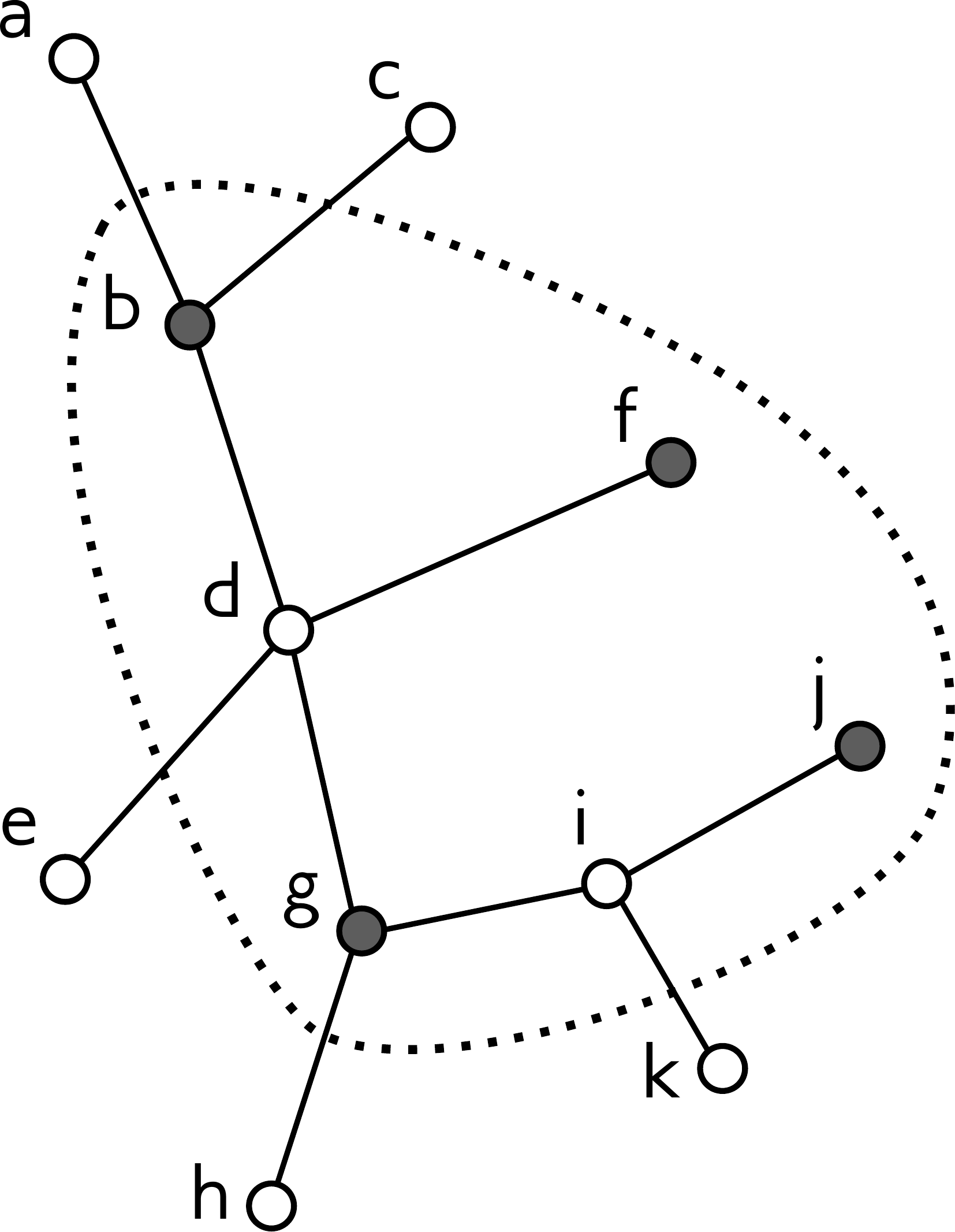}
  \caption{~~A tree $S$ with subset $A$ of nodes shaded and convex hull $\ch(A)$ shown with dotted line. Observe that $A$ is not Steiner-closed, but $A \cup \{d\}$ is Steiner-closed. \label{fig3}}
\end{figure*}

One can obtain a canonical Steiner-closed STT from the underlying search space itself.

\begin{observation} \label{obs6}
Let $S$ be an unrooted tree, and let $S^r$ be the rooted tree obtained by picking an arbitrary root $r \in V(S)$ in $S$. Then $S^r$ is a Steiner-closed STT on $S$.
\end{observation}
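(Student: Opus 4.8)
The plan is to verify the two components of the claim separately: first, that $S^r$ is a valid STT on $S$, and second, that it is Steiner-closed. For the first part, I would argue by induction on $|V(S)|$. The root of $S^r$ is $r$; deleting $r$ from the unrooted tree $S$ splits $S$ into connected components $C_1, \dots, C_d$, one for each neighbor of $r$ in $S$. In the rooted tree $S^r$, the children of $r$ are exactly those neighbors, and the subtree of $S^r$ rooted at the $i$-th neighbor $v_i$ has vertex set $V(C_i)$ and is precisely $C_i$ rooted at $v_i$. By the induction hypothesis applied to the smaller unrooted tree $C_i$ (with root $v_i$), each such subtree is a search tree on $C_i$, so by the definition of STT, $S^r$ is a search tree on $S$.

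For the Steiner-closed part, the key observation is that for any node $x \in V(S)$, the search path of $x$ in $S^r$ is exactly the path from $r$ to $x$ in $S$ itself. This is immediate from the inductive structure above: at each step we descend from the current node into the component containing $x$, i.e.\ we follow the unique $S$-path toward $x$. Hence the node set $A$ of the search path of $x$ equals $V(P)$ where $P$ is the $r$-to-$x$ path in $S$. But then $\ch(A) = P$ is a path in $S$, and by \Cref{obspath}, any set whose convex hull is a path is Steiner-closed. Since this holds for every $x \in V(S)$, the STT $S^r$ is Steiner-closed by definition.

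I do not anticipate a genuine obstacle here; the statement is essentially a sanity check that the underlying search space, suitably rooted, already lies in the restricted class we care about. The only mild subtlety is making the identification ``search path in $S^r$ = tree path in $S$'' precise, which the induction handles cleanly; once that is in hand, \Cref{obspath} does the rest. It is worth flagging this identification explicitly, since it is exactly the intuition — rooting $S$ gives a ``canonical'' Steiner-closed tree — that motivates using $S^r$ as the target of the transformations in \Cref{thm3}.
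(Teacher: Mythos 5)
Your proof is correct, and it is the natural direct argument: the search path of $x$ in $S^r$ is exactly the $r$-to-$x$ path in $S$ (since $S^r$ and $S$ share the same edge set), so its convex hull is that path, and \Cref{obspath} gives Steiner-closedness. The paper itself states \Cref{obs6} without an inline proof and later (in \S\ref{sec3}) derives it as a byproduct of the $k$-cut machinery: $S^r$ is a $1$-cut tree, every $1$-cut tree is trivially a $2$-cut tree, and $2$-cut trees are shown in \S\ref{sec:Steiner-2cut-equiv} to coincide with Steiner-closed trees. Your route is more elementary and self-contained — it needs only \Cref{obspath}, not the $k$-cut apparatus — whereas the paper's route costs nothing extra given that \Cref{p:Steiner-2cut-equiv} is developed anyway and slots \Cref{obs6} neatly into the $k$-cut hierarchy ($1$-cut $\subsetneq$ $2$-cut $=$ Steiner-closed). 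Both are sound; yours is the one a reader would supply when first encountering the observation in \S\ref{sec2}.
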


Steiner-closed STTs are useful in particular due to the following observation.

\begin{lemma}[{\cite[Lemma\ 4.2]{Bose20}}]
Given an STT $T$ on $S$, we can find, in polynomial time, a Steiner-closed STT $T'$ of $S$ so that $\depth_{T'}(x) \leq 2 \cdot \depth_{T}(x)$, for all $x \in V(S)$. \label{lemdepth}
\end{lemma}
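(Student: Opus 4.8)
The natural approach is a "completion" argument: given an STT $T$ on $S$, we want to add the missing Steiner nodes to every search path without pushing any node down by more than a factor of two. The key structural fact is that if $A$ is the set of nodes on the search path of some $x$, and $q \in \ch(A) \setminus A$ is a "bad" branching node (connected to at least three nodes of $\ch(A)$), then $q$ should appear on the search path of $x$ if $T$ is to be Steiner-closed. So the goal is to identify all such branching nodes and insert them into $T$ at appropriate positions.

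The plan is to process the tree top-down and, at each node, pull up the relevant Steiner points. Concretely, I would argue as follows. Consider the root $r = \rt(T)$ and its subtrees $T_{c_1}, \dots, T_{c_d}$, each an STT on a connected component $S_i$ of $S \setminus r$. The issue is that the "second" node on some search paths, namely the child $c_i$, together with $r$, may fail to be Steiner-closed: the path in $S$ from $r$ into $S_i$ may branch at some node $b_i \in V(S_i)$. The fix is: for each such branch node $b_i$ (the unique neighbor-side node of $S_i$ where paths from $r$ diverge — really the node of $S_i$ closest to $r$ that is a branching point of $\ch(\{r\} \cup V(S_i))$, if any), rotate $b_i$ up so that it becomes a child of $r$, i.e., reconstruct the STT so that $b_i$ sits directly below $r$ and its subtrees are STTs on the components of $S_i \setminus b_i$. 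Then recurse inside each of those components. This is essentially a recursive rebuild. A cleaner way to phrase it without rotations: define $T'$ directly by recursion — $\rt(T') = r$; for each component $S_i$ of $S \setminus r$, if $\ch(\{r\}\cup V(S_i))$ is a path up to the first node of $S_i$, recurse on $T$ restricted appropriately; otherwise let $b_i$ be the first branch node, make $b_i$ a child of $r$ in $T'$, and recurse on each component of $S_i \setminus b_i$ using the induced sub-STTs of $T$. One must argue this recursion is well-defined (the induced structure on each component is still an STT, by Observation~\ref{obs1}) and terminates.

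For the depth bound, the cleanest accounting is a charging argument: when we insert a new Steiner node $q$ onto the search path of $x$, we charge it to the node of the original search path immediately below $q$'s insertion point. I would show that each original search-path node of $x$ is charged at most once across the whole construction, so that $\depth_{T'}(x) \le 2\,\depth_T(x)$. The reason each original node is charged at most once: a newly inserted branch node $q$ on the path to $x$ is strictly "between" consecutive nodes $u, v$ of the original path in $S$ (it separates them), and the node we charge is $v$, the deeper of the two; since the inserted nodes along the path to $x$ separate distinct consecutive original pairs, distinct insertions charge distinct original nodes. The remaining check is that after completion, every search path really is Steiner-closed — i.e., that inserting the first-level branch nodes and recursing suffices, with no new bad branchings created. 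This follows because once $b_i$ is a child of $r$, the pair $\{r, b_i\}$ has a path as convex hull (Observation~\ref{obspath}), and any further branching is handled by the recursion inside the smaller components.

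The main obstacle I anticipate is making the recursion and the charging argument simultaneously precise: one must verify that the "branch node $b_i$" is well-defined and unique at each step (it is the node of $S_i$ closest to $r$ among those having degree $\ge 3$ in the relevant convex hull — or equivalently, the node where the subtree $S_i$ first branches away from the line toward $r$), that pulling it up does not disturb the Steiner-closedness already achieved higher up, and that the factor of two is tight in the charging, i.e., that we never charge an original node twice even when several Steiner nodes are inserted along one search path. Handling the bookkeeping for the polynomial running time is routine once the recursive structure is fixed. (Bose et al.\ presumably prove this via the same completion idea; I would follow that outline.)
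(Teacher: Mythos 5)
There is a genuine gap in your construction, and it is fatal rather than just a bookkeeping issue. The criterion you use for deciding which nodes to pull up --- ``the node of $S_i$ closest to $r$ that is a branching point of $\ch(\{r\}\cup V(S_i))$'' --- is a property of the underlying tree $S$ alone, independent of the structure of $T$. This is far too aggressive: you pull up \emph{every} branch node of $S$ lying between the root and each node, not just the ones that actually cause a Steiner-closedness violation for some search path of $T$. Concretely, take $S$ to be a caterpillar $u - b_1 - b_2 - \cdots - b_m - v$ with a pendant leaf $\ell_i$ attached to each $b_i$, and let $T$ be the STT with root $u$, child $v$, then $b_m, b_{m-1}, \dots, b_1$ descending, and $\ell_i$ a child of $b_i$. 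One checks directly that $T$ is already Steiner-closed ($|\delta(T_y)|\le 2$ for every $y$, using \Cref{p:Steiner-2cut-equiv}), so the correct output is $T$ itself. Your recursion, however, pulls up $b_1$, then $b_2$, and so on, because each $b_i$ has degree $3$ in $S$, producing $T' = S^u$ in which $v$ has depth $m+2$. Since $\depth_T(v)=2$, the ratio $\depth_{T'}(v)/\depth_T(v) = (m+2)/2$ is unbounded. Your charging claim fails for the same reason: all $m$ inserted nodes $b_1,\dots,b_m$ sit between the consecutive original search-path nodes $u$ and $v$, so they would all have to be charged to $v$, violating the ``charge each node once'' requirement.

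What the correct version of this ``completion'' idea must do --- and what the paper does via its $k$-cut reformulation (\Cref{p:depth_approx-improved} with $k=2$, combined with \Cref{p:Steiner-2cut-equiv}), and what Bose et al.\ presumably do as well --- is decide whether to pull up a Steiner node based on the \emph{boundary size $|\delta(T_x)|$} of the current subtree rather than on the raw branching structure of $S$. A node is rotated up only when $|\delta(T_x)| = 2$ and $x$ itself fails to separate the two boundary nodes (i.e.\ $x\notin\ch(\delta(T_x))$); the node pulled up is then the unique degree-$3$ node of $\ch(\{x\}\cup\delta(T_x))$. Because the boundary can only grow by one per level and is reset below $2$ after each insertion, a new node can be inserted at most once every two steps along any root-to-$u$ path in $T^*$, which is exactly the factor-$2$ bound. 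In your example this conservative rule never triggers, since $T$ is already a $2$-cut tree. Note also that the paper does not prove \Cref{lemdepth} directly: it cites Bose et al.\ and remarks that the statement follows from the more general $k$-cut machinery in \S\,\ref{sec3} and Appendix~\ref{appb}.
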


Note that Lemma~\ref{lemdepth} is stated by Bose et al.\ for maximum depth, but it is explicitly observed in their proof that during the transformation from $T$ to $T'$ the depth of \emph{every node} at most doubles. An algorithm with running time $\fO(n^2)$ is implicit in the proof of Bose et al., with standard data structuring. We omit the proof of this statement, as we show a more general result in \S\,\ref{sec3}. More precisely, in \S\,\ref{sec3} we define the class of \emph{$k$-cut} STTs, and in \S\,\ref{sec:Steiner-2cut-equiv} we show that the $2$-cut and Steiner-closed properties are equivalent.

\paragraph{Static and dynamic STT model.} We now discuss the cost model of STTs, as a straightforward extension of the BST cost model (see e.g.\ \cite{Wilber, tango}). Let $T$ be an STT on $S$. The cost of searching $x \in V(S)$ in $T$ is $\depth_T(x)$. The cost of serving a sequence of searches $X = (x_1, \dots, x_m) \in V(S)^m$ in $T$ is $\cost_T(X) = \sum_{i=1}^m{\depth_T(x_i)}$. 

If re-arrangements of the tree $T$ are allowed, the model is as follows. An algorithm $\cA$ starts with an initial search tree $T_0$  on $S$, and $T_i$ denotes the state of the tree after the $i$-th search. At the start of the $i$-th search, a \emph{pointer} is at the root of $T_{i-1}$ and $\cA$ can perform an arbitrary number of steps of (1) rotating the edge between the node $x$ at the pointer and its parent, and (2) moving the pointer from the current node $x$ to its parent or to one of its children. When serving the search $x_i$, the pointer must visit, at least once, the node $x_i$. 

Both types of operations have the same unit cost. An additional unit cost is charged for performing each search. The cost of an algorithm $\cA$ for executing $X$, denoted $\cost_{\cA}(X)$, is thus the total number of pointer moves and rotations plus an additive term $m$. An algorithm is \emph{offline} if it knows the entire sequence $X$ in advance, and \emph{online} if it receives $x_i$ only after the $(i-1)$-th search has finished.

In both the static and the dynamic case we only account for operations \emph{in the model}. Algorithms that are to be considered efficient should, however, also spend polynomial time \emph{outside the model} (i.e.\ for deciding which rotations and pointer moves to perform).\footnote{Whether unbounded computation outside the model can improve the competitiveness of online algorithms is an intriguing theoretical question for both BSTs and STTs.} In case of our generalization of Splay, the time spent outside the model is \emph{linear} in the model cost.

\section{Almost optimal search trees on trees}\label{sec3}

Let $S$ be a tree on $n$ nodes and consider a search sequence $X=(x_1, \dots, x_m) \in V(S)^m$ with the function $p: V(S) \rightarrow \mathbb{N}$ denoting the frequencies of searches, so that each node $x \in V(S)$ appears $p(x)$ times in $X$. We want to find a search tree $T$ on $S$ in which $X$ is served with the smallest possible cost, i.e.\ to minimize $$\cost_T(X) ~=~ \sum_{i =1}^{m}{\hgt_T(x_i)} ~= \sum_{x \in V(S)}{p(x) \cdot\hgt_T(x)}.$$ %

In this section we show how to find, in polynomial time, a search tree $T'$ on $S$, whose cost is $\cost_{T'}(X) \leq (1+\varepsilon) \cdot \cost_T(X)$, for arbitrarily small $\varepsilon > 0$, i.e.\ we give a polynomial-time approximation scheme (PTAS) for the optimal STT problem. The result is based on $k$-cut trees, a generalization of Steiner-closed trees that we introduce (Steiner-closed trees are the special case $k=2$). Before presenting the algorithm, we need some definitions.

\paragraph{Cuts and boundaries.}

The \emph{cut} in $S$ of a nonempty set of nodes $A \subseteq V(S)$, denoted $\cut_S(A)$ or simply $\cut(A)$ is the set of (directed) pairs of nodes $(u,v)$, where $u \in V(S) \setminus A$, and $v \in A$, and $\{u,v\}$ is an edge of $S$. In words, the cut is the set of edges connecting the remainder of the tree $S$ to $A$, indicating the direction. Observe that $\cut(A) = \emptyset$ if and only if $A = V(S)$. Moreover, $\cut(A)$ uniquely determines the set $A$, and given $\cut(A)$, we can find $A$ through a linear time traversal of $S$.

The \emph{boundary} $\delta_S(A)$ or simply $\delta(A)$ %
is the set of nodes outside $A$ that define the cut. More precisely, $u \in \delta_S(A)$ if and only if $(u,v) \in \cut_S(A)$, for some $v \in A$.
Observe that if $S[A]$ is connected (i.e.\ a subtree), then $|\delta_S(A)| = |\cut_S(A)|$. We call this quantity the \emph{boundary size} of $A$.  To simplify notation, for subtrees $H$ of $S$ we let $\delta(H)$ denote $\delta(V(H))$.

\begin{definition}[$k$-cut tree]
	For $k \geq 1$, an STT $T$ on $S$ is a $k$-cut tree, if the boundary size $|\delta_S(T_x)|$ is at most $k$ for all nodes $x \in V(S)$. %
\end{definition}

It is easy to verify that $1$-cut STTs are exactly those obtained by rooting $S$ at some vertex (as in \Cref{obs6}). A more involved argument (\S\,\ref{sec:Steiner-2cut-equiv}) shows that $2$-cut trees are exactly the \emph{Steiner-closed trees}. Note that these equivalences directly imply \Cref{obs6}, since, by definition, every $2$-cut tree is also a $1$-cut tree.

As the number of possible cut edges in a tree is $\fO(n)$, the following observation is immediate, implying that the number of possible subtrees of a $k$-cut tree is polynomial, rather than exponential. 

\begin{observation}\label{p:num-k-cut-subsets}
	The number of subsets $A \subseteq V(S)$ with boundary size at most $k$ is $\fO(n^k)$.
\end{observation}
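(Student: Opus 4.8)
The plan is to count the sets $A$ by counting the objects that uniquely determine them, namely their cuts $\cut_S(A)$. Recall from the discussion just before the statement that $\cut_S(A)$ uniquely determines $A$ (and $A$ can be recovered by a linear-time traversal of $S$). So it suffices to bound the number of possible values of $\cut_S(A)$ over all $A$ with boundary size at most $k$. First I would observe that every element of $\cut_S(A)$ is a directed pair $(u,v)$ with $\{u,v\} \in E(S)$, so the set of \emph{all} possible cut pairs has size exactly $2|E(S)| = 2(n-1) = \fO(n)$, since $S$ is a tree on $n$ nodes.

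Next, since $A$ has boundary size at most $k$ — i.e.\ $|\delta_S(A)| \le k$, and the relevant subsets in a $k$-cut tree are precisely the node sets of subtrees $S[V(T_x)]$ (connected by \Cref{obs1}), for which $|\cut_S(A)| = |\delta_S(A)| \le k$ — the cut $\cut_S(A)$ is a subset of size at most $k$ of a ground set of size $\fO(n)$. The number of such subsets is $\sum_{j=0}^{k} \binom{\fO(n)}{j} = \fO(n^k)$ for fixed $k$. (If one wants the bound to hold for \emph{all} subsets $A$ with $|\delta_S(A)| \le k$, not only connected ones, the same argument works verbatim, using that $|\cut_S(A)| \le |\delta_S(A)| \cdot \Delta$ is not quite what we want — but in fact one should instead note that for disconnected $A$ the cut can be larger than the boundary; the cleanest route is to restrict, as the paper implicitly does, to connected $A$, where $|\cut_S(A)| = |\delta_S(A)|$, which is exactly the case needed for $k$-cut trees via \Cref{obs1}.) Composing the two steps: the number of admissible cuts, hence the number of admissible sets $A$, is $\fO(n^k)$.

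I do not expect any real obstacle here; this is a routine counting argument. The only point requiring a moment's care is the correspondence between "boundary size $\le k$" and "cut of size $\le k$", which holds for connected induced subgraphs (the relevant case, by \Cref{obs1}) because each boundary node contributes exactly one cut edge when $S[A]$ is a subtree — this is exactly the remark made in the paragraph defining the boundary size. Once that identification is in place, the bound $\fO(n^k)$ follows immediately from $|E(S)| = n-1$ and the fact that $\binom{O(n)}{\le k} = \fO(n^k)$ for constant $k$.
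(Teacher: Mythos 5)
Your argument is correct and is exactly the one the paper intends (the paper states the observation without proof, remarking only that the number of possible cut edges is $\fO(n)$): a connected $A$ with $|\delta_S(A)| \le k$ is uniquely determined by $\cut_S(A)$, which is a subset of size at most $k$ of the $2(n-1)$ directed edges, giving $\sum_{j=0}^{k}\binom{2(n-1)}{j} = \fO(n^k)$ possibilities for fixed $k$. Your parenthetical aside about disconnected $A$ is a bit muddled mid-sentence (you first claim "the same argument works verbatim" and then correctly contradict this), but you land on the right reading --- the bound genuinely fails for disconnected $A$ (e.g.\ any subset of leaves of a star has $|\delta_S(A)|=1$), and the paper's definition of "boundary size" is indeed restricted to the connected case, which is all that \Cref{obs1} requires.
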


The following technical lemma relates the boundary sets before and after the removal of a node, and will be useful in the remainder of the section.

\begin{lemma}\label{p:comp-bounds}
	Let $S$ be a tree, let $S[A]$ be a subtree of $S$, and let $r \in A \subseteq S$. Let $N(r)$ be the set of neighbors of $r$ in $S$, and let $C_1, C_2, \dots, C_t$ be the connected components of $S[A] \setminus r$. Then:
	\begin{align*}
	\bigcup_{i=1}^t \delta(C_i) =  \delta(A) \cup \{r\} \setminus N(r).
	\end{align*}
\end{lemma}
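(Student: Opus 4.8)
The plan is to prove the set equality by establishing containment in both directions, where each direction reduces to a purely local analysis of which edges of $S$ cross the boundary of each piece $C_i$ versus the boundary of $A$. First I would fix notation: since $S[A]$ is a subtree and $r \in A$, each component $C_i$ of $S[A] \setminus r$ is itself a subtree of $S$, and the vertex set $A$ decomposes as the disjoint union $A = \{r\} \sqcup \bigcup_i V(C_i)$. Recall that $\delta(C_i)$ consists of vertices $u \notin V(C_i)$ such that $u$ is $S$-adjacent to some vertex of $C_i$; similarly for $\delta(A)$. The key structural fact I would isolate up front is: because $C_i$ is a component of $S[A]\setminus r$, every edge of $S$ leaving $C_i$ either goes to $r$, or goes to a vertex outside $A$ (it cannot go to another $C_j$ or back into $A\setminus\{r\}$, since such a vertex would have been in $C_i$'s component). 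So $\delta(C_i) \subseteq \{r\} \cup (\delta(A) \cap N(r)^c) \cup \dots$ — more precisely $\delta(C_i) \subseteq \{r\} \cup (\delta(A) \setminus \{r\})$, and a vertex $u \in \delta(A)\setminus\{r\}$ adjacent to $C_i$ is necessarily not adjacent to $r$... that last clause is false in general, so I need to be careful.

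Let me restructure. For the inclusion $\bigcup_i \delta(C_i) \subseteq \delta(A) \cup \{r\} \setminus N(r)$: take $u \in \delta(C_i)$ for some $i$, so $u \notin V(C_i)$ and $u$ is adjacent to some $w \in V(C_i) \subseteq A$. Case (a): $u = r$. Then $r \in \{r\}$; but we must check $r \notin N(r)$, which holds trivially (no self-loops), so $r \in \{r\}\setminus N(r)$, fine. Case (b): $u \neq r$. Then $u \notin A$ (else $u$ would lie in the same component of $S[A]\setminus r$ as $w$, i.e.\ $u \in V(C_i)$, contradiction). So $u \in \delta(A)$. It remains to show $u \notin N(r)$ in this case: suppose $u \in N(r)$, i.e.\ $u$ is adjacent to $r$. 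Since $u$ is also adjacent to $w \in V(C_i)$, the path $r - u - w$ exists in $S$; but $r$ is also connected to $w$ through $C_i$ (which contains $w$ and is attached to $r$), giving a cycle in $S$ — contradiction with $S$ being a tree. Hence $u \notin N(r)$, so $u \in \delta(A) \setminus N(r) \subseteq (\delta(A) \cup \{r\}) \setminus N(r)$ (using again $r \notin N(r)$). This direction is clean.

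For the reverse inclusion $\delta(A) \cup \{r\} \setminus N(r) \subseteq \bigcup_i \delta(C_i)$: take $u$ in the left side, so $u \notin N(r)$ and either $u = r$ or $u \in \delta(A)$. If $u = r$: since $S[A]$ is a subtree with $r \in A$ and $|A| \geq 1$, if $|A| \geq 2$ then $r$ has a neighbor in $A$, which lies in some $C_i$, so $r \in \delta(C_i)$; the degenerate case $A = \{r\}$ (no components, empty union, and RHS is $\{r\}\setminus N(r) = \{r\}$) would need a side remark — I'd note the lemma implicitly assumes $t \geq 1$ or handle it by convention, and I expect the paper does too. If $u \in \delta(A)$, $u \neq r$ (automatic since $u \notin N(r)$ doesn't force this, but $r \notin \delta(A)$ as $r \in A$): $u$ is adjacent to some $v \in A$. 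Since $u \notin N(r)$, $v \neq r$, so $v \in V(C_i)$ for some $i$; and $u \notin V(C_i)$ since $u \notin A$. Hence $u \in \delta(C_i) \subseteq \bigcup_i \delta(C_i)$.

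The main obstacle — really the only subtle point — is the edge case $A = \{r\}$ where there are no components $C_i$; everything else is a short acyclicity argument. I would dispatch it with one sentence (either the statement tacitly assumes $r$ is not isolated in $S[A]$, or the empty union equals $\emptyset$ while the RHS equals $\{r\}$, in which case the hypothesis should be read as $t \geq 1$), and otherwise the proof is the two containment arguments above, each a couple of lines using only "$S$ is a tree $\Rightarrow$ no cycles" and the definition of connected components.
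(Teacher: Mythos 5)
Your proof is correct and follows essentially the same approach as the paper's: both directions are shown by direct inclusion, and the key nontrivial step --- that a boundary vertex of some $C_i$ other than $r$ cannot lie in $N(r)$ --- is established via the acyclicity of $S$ (the paper phrases it via unique $v$-to-$r$ paths, you via the absence of cycles, which are interchangeable in a tree). Your observation about the degenerate case $A = \{r\}$ is also apt: there the left side is $\emptyset$ while the right side is $\{r\}$, and the paper's proof, like yours, tacitly assumes $t \geq 1$.
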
 
\begin{proof}
	Let $v \in \delta(C_i)$ for some $i \in [t]$. Then there is an edge $\{v, u\}$ with $u \in V(C_i)$. Either $v=r$, or the unique path from $v$ to $r$ in $S$ contains $u$, and therefore $v \notin N(r)$. If $v \neq r$, then %
	$v \notin C_j$ for all $j \neq i$, as otherwise $C_i$ and $C_j$ were connected in $S[A] \setminus r$. It follows that $(v,u) \in \cut(A)$, so $v \in \delta(A)$.  %

	Conversely, let $v \in  \delta(A) \cup \{r\} \setminus N(r)  $. If $v = r$, then $v \in \delta(C_i)$ for all $i$. Otherwise, let $v \in \delta(A) \setminus N(r)$, and let $v' \in A$ such that $(v,v') \in \cut(A)$. By assumption, $v' \neq r$. Let $C_i$ be the connected component of $S[A] \setminus r$ that contains $v'$. Then $v \in \delta(C_i)$. \qedd
\end{proof}

Observe that \Cref{p:comp-bounds} directly implies that if $x$ is a child of $p$ in $T$, then $\delta(T_x) \subseteq \delta(T_p) \cup \{p\}$. Consequently, the boundary $\delta(T_x)$ of a node $x$ consists entirely of ancestors of $x$ in $T$.

The rest of this section is dedicated to the proof of \Cref{thm1} and is organized as follows. In \S\,\ref{sec:k-cut-approx} we show that an optimal $k$-cut STT approximates an optimal general STT by a factor of roughly $1 + \frac{2}{k}$. In \S\,\ref{sec:k-cut-opt-find} we generalize the dynamic programming algorithm for BSTs and show that an optimal $k$-cut STT can be found in polynomial time. 

\subsection{\texorpdfstring{$k$-cut}{k-cut} trees approximate depth}\label{sec:k-cut-approx}

In this subsection we show that an arbitrary STT $T$ can be transformed into a \emph{a $k$-cut STT} $T'$, so that the depth of every node increases by a factor of no more than (roughly) $1+\frac{2}{k}$. The proof is based on a similar idea as the proof of \Cref{lemdepth} by Bose et al.~\cite{Bose20}: problematic nodes are fixed one-by-one, carefully controlling the depth-increase of every node. The cases $k>2$ require however some further ideas, in particular, we make use of the \emph{leaf centroid} of a tree, defined next.

\begin{definition}[Leaf centroid]
	Let $S$ be a tree on $n \ge 3$ nodes, having $\ell$ leaves. A non-leaf node $v \in V(S)$ is a \emph{leaf centroid} of $S$ if every connected component of $S \setminus v$ has
	at most $\floor{\frac{\ell}{2}}+1$ leaves, %
	at most $\floor{\frac{\ell}{2}}$ of which are leaves of $S$.
\end{definition}

The existence of a leaf centroid follows by a similar argument as the existence of the classical centroid (see e.g.~\cite{Slater1978,BodlaenderEtAl1992,Wang}). (Informally) start at an arbitrary non-leaf node $x$ and as long as $S \setminus x$ has a component with too many leaves, move $x$ along the edge towards this component. %
By standard data structuring, a leaf centroid can be found in linear time; for completeness we give a proof in Appendix~\ref{appa}. %

The following observation connects leaf-sets with boundaries.

\begin{observation}\label{p:delta-ch-leaves}
	Let $S[A]$ be a subtree of $S$. Then the set of leaves of $\ch(\delta_S(A))$ is $\delta_S(A)$.
\end{observation}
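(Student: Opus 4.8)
The statement to prove is \Cref{p:delta-ch-leaves}: for a subtree $S[A]$ of $S$, the set of leaves of $\ch(\delta_S(A))$ equals $\delta_S(A)$ itself. I would prove this by a double inclusion, thinking of $\ch(\delta_S(A))$ as the minimal subtree of $S$ spanning the boundary nodes. Write $D = \delta_S(A)$ and $H = \ch(D)$.

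First, I would show that every leaf of $H$ lies in $D$, i.e.\ $H$ has no ``extra'' leaves. Suppose $v$ is a leaf of $H$. Since $H$ is the union of all paths between pairs of nodes of $D$, the node $v$ lies on some such path; but an interior point of a path has degree at least two in that path, hence degree at least two in $H$, so it cannot be a leaf. Thus $v$ must be an \emph{endpoint} of every path of $H$ through it, which forces $v \in D$. (One must be slightly careful with the degenerate case $|D| \le 1$, but if $|D| = 1$ then $H$ is a single node which is trivially its own unique ``leaf'' and lies in $D$, and $|D| = 0$ only when $A = V(S)$, in which case $H$ is empty and the claim is vacuous — I would note this at the start.)

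Second, I would show that every node of $D$ is a leaf of $H$, i.e.\ no boundary node is an interior node of $H$. This is where I expect the real content to be, and where the hypothesis that $S[A]$ is \emph{connected} is used. Take $v \in D$. By definition of the boundary there is an edge $\{v, w\}$ of $S$ with $w \in A$. I claim that in $H$, the node $v$ has exactly one neighbor. The key point: every other node $u \in D$ lies on the ``far side'' of $v$ — more precisely, the unique $v$--$u$ path in $S$ must pass through $w$. Indeed, $u \notin A$ and $w \in A$, and since $S[A]$ is connected and $S$ is a tree, the component of $S \setminus v$ containing $w$ is precisely the component containing all of $A \setminus \{v\}$... wait, $v \notin A$, so actually $A$ lies entirely in one component of $S\setminus v$, namely the one through $w$; meanwhile any $u \in D \setminus \{v\}$ is adjacent to some node of $A$, hence also lies in that same component of $S \setminus v$. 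Therefore every path in $H$ emanating from $v$ leaves $v$ through the edge towards that component. Since $S$ is a tree, all these paths share the first edge out of $v$, so $v$ has a unique neighbor in $H$, i.e.\ $v$ is a leaf of $H$ (when $|D|\ge 2$; when $|D|=1$ handled above).

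The main obstacle is making the second inclusion airtight: I must argue cleanly that connectedness of $S[A]$ forces all of $A$ (and hence all boundary nodes other than $v$) into a single component of $S \setminus v$. The cleanest phrasing is: $v \notin A$ (boundary nodes are outside $A$), $S[A]$ is connected, so $A$ is contained in one connected component $C$ of the forest $S \setminus v$; for any $u \in D$ there is an edge from $u$ into $A \subseteq C$, so $u \in C$ as well (a neighbor of a node of $C$ in $S \setminus v$ is in $C$, unless it is $v$); hence $D \setminus \{v\} \subseteq C$, and every $v$--$u$ path for $u \in D$ starts with the unique edge from $v$ into $C$. This pins down $\deg_H(v) = 1$ and completes the proof.
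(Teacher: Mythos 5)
Your proof is correct and uses essentially the same idea as the paper's: both directions hinge on the observation that boundary nodes lie outside $A$ while $S[A]$ connects them, so each boundary node is ``terminal'' in the convex hull. The paper packages the $\delta(A)\subseteq$-leaves direction as a contradiction (an interior $x$ would lie on a path through $A$, forcing $x\in A$), whereas you argue directly that $\deg_H(v)=1$ via the single component of $S\setminus v$ containing $A$; this is a cosmetic rather than substantive difference.
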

\begin{proof}
	Let $x \in \delta(A)$, and suppose that $x$ is not a leaf of $\ch(\delta(A))$. Then $x$ is on the path between two nodes $u, v \in \delta(A) \setminus \{x\}$. Let $(u,u'), (v,v') \in \cut(A)$. As $S[A]$ is connected, there is a path between $u'$ and $v'$ that lies completely within $S[A]$. The path between $u$ and $v$ consists of exactly this path, with $u$ prepended and $v$ appended. This means that $x \in A$, a contradiction. %
	
	Conversely, let $x \notin \delta(A)$. If $x \in \ch(\delta(A))$, then $x$ must lie on a path between two nodes $u, v \in \delta(A)$. This means that $x$ is not a leaf of $\ch(\delta(A))$. \qedd
\end{proof}

We proceed with the main lemma of this subsection.

\begin{lemma}\label{p:depth_approx}
	Given a search tree $T$ on $S$, for arbitrary $k \ge 3$ we can find in time $\fO(n^2)$ a $k$-cut search tree $T^*$ on $S$, so that $\depth_{T^*}(x) \le (1+\epsilon_k) \cdot \depth_T(x)$ for all $x \in V(S)$, where
	\begin{align*}
	\varepsilon_k = \frac{1}{\left\lceil \frac{k}{2} \right\rceil - 1}.
	\end{align*}
\end{lemma}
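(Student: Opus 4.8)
\emph{The plan.} I would build $T^*$ by a single recursive top-down pass that mostly imitates $T$, but periodically ``resets'' the boundary size using a leaf centroid. Define a procedure $\mathrm{Build}(A,T')$ that takes a connected subtree $S[A]$ of $S$ with $|\delta(A)| \le k$ together with an STT $T'$ on $S[A]$, and returns a $k$-cut STT on $S[A]$, as follows. If $|\delta(A)| \le k-1$ (a \emph{room step}), let the root of the output be $\rho := \rt(T')$; if $|\delta(A)| = k$ (a \emph{reset step}), let $\rho$ be a leaf centroid of $\ch_S(\delta(A))$. (This exists and lies in $A$: by \Cref{p:delta-ch-leaves}, $\ch_S(\delta(A))$ is a tree whose leaves are exactly the $k \ge 3$ nodes of $\delta(A)$, all outside $A$; a leaf centroid is one of its non-leaves, which all lie in $A$ since $S[A]$ is connected.) In both cases recurse on every component $C$ of $S[A] \setminus \rho$ with the \emph{restriction} $T'|_C$ --- the STT on $S[C]$ obtained from $T'$ by deleting, from the search path of every node, the nodes outside $C$ (well defined since $C$ is connected). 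Output $T^* := \mathrm{Build}(V(S), T)$, noting $\delta(V(S)) = \emptyset$. The recursion terminates because the pieces strictly shrink, and the recursion tree is precisely $T^*$; in particular there are $n$ recursive calls.

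\emph{Correctness.} I maintain the invariant $|\delta(A)| \le k$. In a room step, \Cref{p:comp-bounds} gives $\delta(C) \subseteq \delta(A) \cup \{\rho\}$, so $|\delta(C)| \le k$. In a reset step, the leaf-centroid property together with \Cref{p:delta-ch-leaves} (leaves of $\ch_S(\delta(A))$ are exactly the boundary nodes) implies that each component $C$ of $S[A] \setminus \rho$ inherits at most $\floor{k/2}$ of the old boundary nodes --- indeed, the $\ch_S(\delta(A))$-neighbour of each inherited boundary node lies in $C$, and all these neighbours fall into one component of $\ch_S(\delta(A)) \setminus \rho$, which by the leaf-centroid property contains at most $\floor{k/2}$ leaves of $\ch_S(\delta(A))$; adding the single new boundary node $\rho$ yields $|\delta(C)| \le \floor{k/2}+1 \le k-1$ (here $k \ge 3$ is used). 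Hence every node $x$ of $T^*$ is the root of a piece $S[A]$ with $|\delta_S(T^*_x)| = |\delta(A)| \le k$, so $T^*$ is a $k$-cut STT.

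\emph{Depth bound.} Fix $x$ and follow the chain of calls $\mathrm{Build}(A_0,T_0) \to \dots \to \mathrm{Build}(A_p,T_p)$ whose pieces contain $x$ ($A_0 = V(S)$, $T_0 = T$, $x = \rt(A_p)$), so $\depth_{T^*}(x) = p+1$. Since restriction composes, $T_j = T|_{A_j}$, hence for $x \in A_j$ the search path of $x$ in $T_j$ is its $T$-search path intersected with $A_j$; write $w := \depth_T(x)$ and let $W$ be the $w$-element vertex set of the $T$-search path of $x$, so $\depth_{T_j}(x) = |W \cap A_j|$. For every transition $A_j \to A_{j+1}$ with $j < p$ we have $\rt(A_j) \ne x$ (the roots of $A_0, \dots, A_p$ are the distinct $T^*$-ancestors of $x$): a \emph{room} transition then makes $\rt(T_j) \in W \cap A_j$ the root and keeps $x$ together with $(W \cap A_j)\setminus\{\rt(T_j)\}$ in one child subtree, so $|W \cap A_{j+1}| = |W \cap A_j| - 1$, whereas a \emph{reset} transition only gives $|W \cap A_{j+1}| \le |W \cap A_j|$. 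Letting $Q, R$ be the numbers of room and reset transitions ($Q+R = p$), running from $|W \cap A_0| = w$ down to $|W \cap A_p| \ge 1$ gives $Q \le w-1$. A transition is a reset exactly when $|\delta(A_j)| = k$; a room transition raises $|\delta|$ by at most $1$ (\Cref{p:comp-bounds}) and a reset transition drops $|\delta|$ to at most $\floor{k/2}+1$, i.e.\ at least $c := \ceil{k/2}-1$ below $k$; therefore at least $c$ room transitions occur since the previous reset (and at least $k \ge c$ before the first), so $Q \ge cR$. Combining,
\[
\depth_{T^*}(x) = Q+R+1 \le Q\bigl(1+\tfrac1c\bigr)+1 \le (w-1)\bigl(1+\tfrac1c\bigr)+1 < \bigl(1+\tfrac1c\bigr)w = (1+\varepsilon_k)\,\depth_T(x).
\]

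\emph{Running time and main obstacle.} Each of the $n$ recursive calls computes $\delta(A)$, $\ch_S(\delta(A))$, a leaf centroid (in linear time, \Cref{appa}), the component decomposition and the restrictions, all in $\fO(n)$ time, giving $\fO(n^2)$ in total. The delicate point is the reset step: one must show that rooting at a leaf centroid of $\ch_S(\delta(A))$ genuinely (almost) halves the boundary of every child piece --- turning the leaf-counting guarantee about $\ch_S(\delta(A))$ into the bound $|\delta(C)| \le \floor{k/2}+1$ --- and then make the ``room''/``reset'' accounting watertight, so that room transitions are charged for free and reset transitions are provably separated by $c$ room transitions. The remainder is routine bookkeeping.
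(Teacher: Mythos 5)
Your proof is correct and follows essentially the same approach as the paper: both recursively build the output tree, keeping the current root when the boundary size is below $k$ and otherwise rooting at a leaf centroid of $\ch(\delta(A))$, and both bound the depth increase by observing that the boundary size drops to about $k/2$ after a ``reset'' and can grow by at most one per step in between. The only cosmetic differences are that you construct the tree directly via restriction of $T$ to each piece rather than via explicit rotations (the two produce the same result), and you phrase the charging argument in terms of room/reset transitions with the invariant $Q \geq cR$, $Q \leq w-1$, whereas the paper tracks indices of nodes on the new search path that are not on the old one; these are equivalent bookkeeping devices.
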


\Cref{alg:trans-st} transforms $T$ into $T^*$ (with the call $\Call{Fix}{T, r}$, where $r = \rt(T)$). The basic idea is the following: for a given node $x$ of $T$ (initially the root), we check whether $T_x$ has a boundary size smaller than $k$. %
If yes, we simply recurse on the subtrees. Otherwise, we transform $T_x$ by replacing the root $x$ with a node $v$ (by rotating $v$ to the top), such as to minimize the maximum boundary size of the subtrees rooted at the children. 
After the transformation, we recurse on the children of the new tree. (Note that when the boundary size is exactly $k$, node $v$ may happen to be $x$ in which case no rotation is necessary.)
 
 The rest of this subsection is dedicated to the proof of \Cref{p:depth_approx}. We remark that we only need Algorithm~\ref{alg:trans-st} as an \emph{existence proof} for good $k$-cut trees, and its running time does not affect the running time of our approximation algorithm. %

\begin{algorithm}[tbp]
	\caption{Transforming an arbitrary STT into a $k$-cut STT.}\label{alg:trans-st}
	\begin{algorithmic}[1]
		\Statex \textbf{Input:} search tree $T$ on $S$, constant $k \ge 3$, node $x \in V(T)$.
		\Procedure{Fix}{$T, x$}
			\If{$x$ is a leaf}
				\State \Return $T_x$
			\ElsIf{$|\delta(T_x)| < k$} %
				\State $c_1, c_2, \dots, c_t \gets$ children of $x$ in $T$\label{alg-line:just-recurse-1}
				\State \Return tree rooted at $x$ with subtrees $\Call{Fix}{T, c_1}, \dots, \Call{Fix}{T, c_t}$\label{alg-line:just-recurse-2}
			\ElsIf{$|\delta(T_x)| \ge k$} 
				\State $v \gets $ leaf centroid of $\ch(\delta(T_x))$\label{alg-line:leaf-centroid}
				\If{$v \neq x$}
					\State Obtain $T'$ from $T$ by rotating $v$ up to become the parent of $x$. \label{alg-line10}
				\Else
					\State $T' \gets T$
				\EndIf
				\label{alg-line:constr-T'}
				\State $c_1', c_2', \dots c_t' \gets$ children of $v$ in $T'$
				\State \Return tree rooted at $v$ with subtrees $\Call{Fix}{T', c_1'}, \dots, \Call{Fix}{T', c_t'}$
			\EndIf
		\EndProcedure
	\end{algorithmic}
\end{algorithm}

\paragraph{Correctness.}
We consider the boundary size of the subtrees in each recursive call.

\begin{lemma}\label{p:cutsize-inc}
	Let $x \in V(S)$ and let $c$ be a child of $x$ in $T$. Then $|\delta(T_c)| \le |\delta(T_x)| + 1$.
\end{lemma}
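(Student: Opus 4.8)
The plan is to apply \Cref{p:comp-bounds} with $A = V(T_x)$ and $r = x$. Since $S[V(T_x)]$ is a subtree of $S$ by \Cref{obs1}, the hypotheses of \Cref{p:comp-bounds} are met, and the connected components $C_1, \dots, C_t$ of $S[V(T_x)] \setminus x$ are precisely the vertex sets $V(T_{c_1}), \dots, V(T_{c_t})$ of the subtrees rooted at the children $c_1, \dots, c_t$ of $x$ in $T$ (this is exactly the recursive structure of an STT, again via \Cref{obs1}). In particular, the child $c$ is one of the $c_i$, so $V(T_c)$ is one of the components $C_i$.

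Now \Cref{p:comp-bounds} gives
\begin{align*}
\bigcup_{i=1}^t \delta(C_i) = \delta(V(T_x)) \cup \{x\} \setminus N(x).
\end{align*}
Since $\delta(T_c) = \delta(C_i)$ for the appropriate $i$, we have $\delta(T_c) \subseteq \delta(T_x) \cup \{x\} \setminus N(x) \subseteq \delta(T_x) \cup \{x\}$. Hence $|\delta(T_c)| \le |\delta(T_x) \cup \{x\}| \le |\delta(T_x)| + 1$, which is the claim. (One may note, as the excerpt already observes right after \Cref{p:comp-bounds}, that this inclusion $\delta(T_x) \subseteq \delta(T_p) \cup \{p\}$ for a child-parent pair is the content here — so the lemma is essentially a restatement of that remark with the trivial cardinality bound $|B \cup \{z\}| \le |B| + 1$.)

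There is no real obstacle: the only thing to be careful about is correctly identifying the components of $S[V(T_x)] \setminus x$ with the children's subtrees, which is immediate from the definition of STT and \Cref{obs1}. Everything else is set inclusion and a one-element cardinality bound.
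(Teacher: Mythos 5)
Your proof is correct and follows exactly the same route as the paper: apply \Cref{p:comp-bounds} with $A = V(T_x)$ and $r = x$ to get $\delta(T_c) \subseteq \delta(T_x) \cup \{x\}$, then take cardinalities. You simply spell out the (routine) identification of the components of $S[V(T_x)] \setminus x$ with the children's subtrees, which the paper leaves implicit.
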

\begin{proof}
	By \Cref{p:comp-bounds}, we have $\delta(T_c) \subseteq \delta(T_x) \cup \{x\}$, so $|\delta(T_c)| \le |\delta(T_x)| + 1$.\qedd
\end{proof}

\begin{lemma}\label{p:cutsize-halved}
	Let $x \in V(S)$ with $|\delta(T_x)| \ge k$, let $T'$ be the tree produced in \Cref{alg-line10} or \Cref{alg-line:constr-T'} of \Cref{alg:trans-st}, let $v$ be the leaf centroid of $\ch(\delta(T_x))$ and let $c$ be a child of $v$ in $T'$. Then
	\begin{align*}
		|\delta(T'_{c})| \le \left\lfloor \frac{|\delta(T_x)|}{2} \right\rfloor + 1.
	\end{align*}
\end{lemma}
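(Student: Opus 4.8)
The plan is to reduce the statement to a purely combinatorial fact about the underlying tree $S$ and then invoke the defining property of the leaf centroid. Throughout, write $A = V(T_x)$, $\ell = |\delta(T_x)| = |\delta(A)|$ (so $\ell \ge k \ge 3$), and $H = \ch(\delta(T_x))$. The first step is to pin down the subtree rooted at $v$ in $T'$. Whether or not a rotation actually occurs, one has $V(T'_v) = V(T_x) = A$: a single rotation of an edge $\{p,q\}$ with $q$ a child of $p$ leaves the node set of the subtree rooted at the new top vertex unchanged, so rotating $v$ step by step up the path from $v$ to $x$ preserves the node set of that subtree --- namely $A$ --- while bringing $v$ to its top (and if $v = x$ this is trivial). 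By \Cref{obs1}, $T'_v$ is then a search tree on $S[A]$ with root $v$, hence for every child $c$ of $v$ in $T'$ there is a connected component $C$ of $S[A]\setminus v$ with $V(T'_c) = V(C)$, so $\delta(T'_c) = \delta_S(C)$. It thus suffices to prove $|\delta_S(C)| \le \lfloor \ell/2 \rfloor + 1$ for every component $C$ of $S[A]\setminus v$.

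Next I would decompose $\delta_S(C)$. Since $S[A]$ is connected (\Cref{obs1}), $v$ is adjacent to $C$, so $v \in \delta_S(C)$, while $v \notin \delta(A)$ because $v \in A$. Every $w \in \delta(A)$ has a \emph{unique} neighbour $w'$ in $A$ --- a second neighbour would close a cycle with a path inside the connected subtree $S[A]$ --- and this $w'$ lies in a single component of $S[A]\setminus v$ unless $w' = v$. Using this together with \Cref{p:comp-bounds} one gets $\delta_S(C) = \{v\} \cup L_C$, a disjoint union, where $L_C := \{\,w \in \delta(A) : w' \in V(C)\,\}$. Hence $|\delta_S(C)| = 1 + |L_C|$, and it remains to show $|L_C| \le \lfloor \ell/2 \rfloor$.

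The core of the argument is to transport $L_C$ into $H = \ch(\delta(T_x))$ and apply the leaf centroid bound. By \Cref{p:delta-ch-leaves}, the leaf set of $H$ is exactly $\delta(A)$, so $H$ has $\ell$ leaves; since $\ell \ge 3$, $H$ has at least three nodes, hence no edge of $H$ joins two leaves, so every leaf $w$ of $H$ has its unique $H$-neighbour among the \emph{non-leaves} of $H$. Non-leaves of $H$ lie in $A$ (any non-leaf is interior to a path of $H$ between two leaves of $H$, and that path --- being the $S$-path, by convexity of $H$ --- runs through $A$), and since $w$ has a unique neighbour in $A$, that $H$-neighbour is precisely $w'$. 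The same two facts (convexity of $H$, and that interior nodes of an $H$-path between $A$-nodes are non-leaves of $H$ and hence lie in $A$) show that two nodes of $V(H)\cap A$ lie in the same component of $H\setminus v$ iff they lie in the same component of $S[A]\setminus v$. Combining these, the assignment sending a component $C$ of $S[A]\setminus v$ to the component of $H\setminus v$ containing $w'$ (for any $w \in L_C$; all such $w'$ are $A$-nodes of $H$ in the same $S[A]\setminus v$-component) is well defined, and $L_C$ is contained in the set of leaves-of-$H$ of that component, since each $w \in L_C$ is joined to $w' \in V(C)$ by an edge of $H$ not incident to $v$. Finally, $v$ is a leaf centroid of the $\ell$-leaf tree $H$, so every component of $H\setminus v$ contains at most $\lfloor \ell/2 \rfloor$ leaves of $H$; therefore $|L_C| \le \lfloor \ell/2 \rfloor$, giving $|\delta_S(C)| \le \lfloor \ell/2 \rfloor + 1 = \lfloor |\delta(T_x)|/2 \rfloor + 1$.

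I expect the main obstacle to be the boundary bookkeeping: carefully justifying the component-wise decomposition $\delta_S(C) = \{v\}\cup L_C$ and, above all, the equivalence between the components of $S[A]\setminus v$ and those of $H\setminus v$ on the overlap $V(H)\cap A$. Both rest on two structural facts that must be handled with some care --- that $H$ is a convex subtree of $S$, and that the internal nodes of $H$ lie in $A$ while its leaves are exactly $\delta(A)$ (\Cref{p:delta-ch-leaves}). Once these are in place, the leaf centroid property of $v$ slots in directly to finish the estimate.
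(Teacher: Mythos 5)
Your proof is correct and follows the same route as the paper's: write $\delta(T'_c)=\{v\}\cup L_C$ with $L_C\subseteq\delta(A)$ (the paper's $U$), observe that $L_C$ consists of leaves of $H=\ch(\delta(T_x))$ all lying in a single component of $H\setminus v$, and invoke the leaf-centroid bound. The only real difference is one of exposition: the paper asserts in a single sentence that the $u\in U$ all land in one component of $H\setminus v$, whereas you supply the supporting argument — that $H$ is a convex subtree of $S$ whose non-leaf nodes lie in $A$, so the components of $H\setminus v$ restricted to $V(H)\cap A$ correspond to the components of $S[A]\setminus v$ — which is a genuine gap in the paper's brevity that your write-up closes cleanly.
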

\begin{proof}
	Observe that $V(T_x) = V(T'_v)$.
	The set $V(T'_{c})$ is a connected component of the forest $S[V(T_x)] \setminus v$.
	Let $U$ be the set of nodes $u$ such that $(u,w) \in \cut(V(T_x))$ for some $w \in V(T'_{c})$. Each $u \in U$ is contained in $\delta(T_x)$, so it is a leaf of $S[\ch(\delta(T_x))]$ by \Cref{p:delta-ch-leaves}. Moreover, all $u \in U$ are in the same component of $S[\ch(\delta(T_x))\setminus v]$ (the one that contains $c$). As $v$ is a leaf centroid of $S[\ch(\delta(T_x))]$, we have $|U| \le \floor{|\delta(T_x)|/2}$.
	
	Finally, observe that $\delta(T'_c) = U \cup \{v\}$ by \Cref{p:comp-bounds} and the definition of $U$, so $|\delta(T'_{c})| = |U|+1$. \qedd
\end{proof}

From \Cref{p:cutsize-inc,p:cutsize-halved} and the fact that $|\delta(T)| = 0$, it follows inductively that in each recursive call $\Call{Fix}{T, c}$, the set $V(T_c)$ has boundary size at most $k$ (as $\floor{\frac{k}{2}}+1 \le k$).

\paragraph{Depth increase.}
We now bound the increase in depth due to the transformation for each node in $T$. %
Intuitively, when following the search path in the resulting tree $T^*$, we have a newly added node (compared to $T$) whenever the boundary size of the current tree is $k$, which by \Cref{p:cutsize-halved} can only happen every $\ceil{\frac{k}{2}}-1$ steps. We proceed with the formal proof.

Let $T$ be a search tree on $S$ with root $r$ and let $T^* = \Call{Fix}{T, r}$. Let $P$ be the search path in $T$ of an arbitrary node $u$ and let $P' = (u_1, u_2, \dots, u_t = u)$ be the search path of $u$ in $T^*$. Let $i_1 < i_2 < \dots < i_s$ be the indices of nodes in $P'$ that are not in $P$. %
We want to show that the number of such nodes is
\begin{align*}
s ~\le~ \varepsilon_k |P| ~=~ \frac{|P|}{\ceil{k/2}-1}.
\end{align*}
As $\depth_{T^*}(u) - \depth_{T}(u) = s$ and $|P| = \depth_{T}(u)$, this shows the bound stated in Lemma~\ref{p:depth_approx}.

\begin{lemma}
	$i_j + \ceil{\frac{k}{2}} \le i_{j+1}$ for all $j \in [s-1]$.
\end{lemma}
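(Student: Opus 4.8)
The plan is to walk down the search path $P'=(u_1,\dots,u_t)$ of $u$ in $T^*$ and track, for each $j$, the node set $A_j:=V(T^*_{u_j})$ together with its boundary size $b_j:=|\delta_S(A_j)|$. Then $A_1=V(S)\supsetneq A_2\supsetneq\dots\supsetneq A_t=\{u\}$, each $A_{j+1}$ is a connected component of $S[A_j]\setminus u_j$, and $b_1=0$; by the correctness part of the argument, $b_j\le k$ for all $j$. Call an index $j$ \emph{critical} if $b_j=k$. I will deduce the lemma from two claims: (I) every ``new'' index $i_1,\dots,i_s$ is critical; and (II) any two critical indices differ by at least $\ceil{k/2}$. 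Together these give $i_{j+1}-i_j\ge\ceil{k/2}$ for all $j\in[s-1]$, as required.

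\textbf{Claim (II) — the routine part.} If $j$ is critical, then \Cref{p:cutsize-halved} (applied to the \Call{Fix}{} call producing $u_j$, whose subtree has boundary size exactly $k$, and to the child whose subtree is $A_{j+1}$) yields $b_{j+1}\le\floor{k/2}+1<k$, so $j+1$ is not critical. For any non-critical index $l$, since $u_{l+1}$ is a child of $u_l$ in $T^*$, \Cref{p:cutsize-inc} gives $b_{l+1}\le b_l+1$. Hence, starting from $b_{j+1}\le\floor{k/2}+1$, the value $b$ needs at least $k-(\floor{k/2}+1)=\ceil{k/2}-1$ further steps before it can reach $k$ again, so the next critical index is at least $(j+1)+(\ceil{k/2}-1)=j+\ceil{k/2}$. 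The bound for an arbitrary pair of critical indices follows by chaining through the intermediate critical indices.

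\textbf{Claim (I) — the crux.} First note that $T^*$ is obtained from $T$ by a sequence of operations $\Phi_1,\dots,\Phi_N$, where each $\Phi_l$ rotates a leaf centroid $v_l$ up to the root of the current subtree (with some node set $A^{(l)}$); this is the only place \Cref{alg:trans-st} modifies the tree, and it occurs only in calls whose subtree has boundary size exactly $k$. The key structural fact is that \emph{rotating a node $v$ up to the root of an STT creates only $v$ as a new ancestor}: for every $w\ne v$, if $w$ is an ancestor of some node $z$ after the rotation, then $w$ was already an ancestor of $z$ before. This is verified by pushing a single edge-rotation $\{v,p\}$ (with $v$ moving up) through the rotation definition — the only ancestor relations it creates are of the form ``$v$ is an ancestor of $\cdot$'', all others are merely destroyed — and then composing single rotations, noting that the rotated node stays equal to $v_l$ throughout the sequence bringing $v_l$ to its target root. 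Granting this: if $w$ is an ancestor of $u$ in $T^*$ but not in $T$, then along $T=R_0, R_1=\Phi_1(R_0),\dots,R_N=T^*$ the relation ``$w$ is an ancestor of $u$'' must switch from false to true at some step $\Phi_l$, forcing $w=v_l$ and $u\in A^{(l)}$. Since every later operation acts inside a proper subtree of $A^{(l)}$ that excludes $v_l$ (DFS order, and siblings in the recursion have disjoint node sets), we get $V(T^*_{v_l})=A^{(l)}$; hence $w=v_l=u_j$ for the index $j$ with $A_j=A^{(l)}$, and $b_j=|\delta_S(A^{(l)})|=k$, so $j$ is critical. Therefore the new indices form a subset of the critical indices, which proves Claim (I) and, with Claim (II), the lemma.

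\textbf{Expected obstacle.} The only delicate point is the structural fact inside Claim (I) (rotating to the root adds no new ancestors other than the rotated node); once this is in hand, the rest is bookkeeping with \Cref{p:comp-bounds}, \Cref{p:cutsize-inc}, and \Cref{p:cutsize-halved}.
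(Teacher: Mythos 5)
Your proof is correct and follows the same strategy as the paper: show that every ``new'' node on the search path of $u$ in $T^*$ is a leaf centroid rotated up in a recursive call with boundary size exactly $k$ (the paper's $c_{i_j}=c_{i_{j+1}}=k$), then combine \Cref{p:cutsize-halved} (drop to $\le\lfloor k/2\rfloor+1$ after one step) with \Cref{p:cutsize-inc} (increase by at most $1$ per step) to force a gap of $\ge\lceil k/2\rceil$. The one place you go beyond the paper is Claim~(I): the paper asserts in a single sentence that a node not in $P$ must have been rotated up by \Call{Fix}{}, whereas you justify this by the structural observation that rotating a node $v$ up to the root of a subtree adds only $v$ as a new ancestor — a fact the paper treats as self-evident but which does deserve the check you carried out. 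Your Claim~(II) is stated for all critical indices rather than only consecutive new ones, which is a harmless slight strengthening; the chaining through intermediate critical indices then recovers the paper's bound.
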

\begin{proof}
	Let $c_i = |\delta(T'_{u_i})|$ for $i \in [t]$. As $u_{i_j}$ is not in $P$, at some point, Algorithm~\ref{alg:trans-st} must have rotated $u_{i_j}$ up. This means that in some recursive call, $u_{i_j}$ is the leaf centroid $v$ that is rotated up in \Cref{alg-line10} and, in particular, $c_{i_j} = k$. Similarly, $c_{i_{j+1}} = k$. Let $\ell = i_{j+1} - i_j$.
	
	As $u_{i_j + 1}$ is a child of $u_{i_j}$ in $T^*$, we have $c_{i_j+1} \le \floor{k/2}+1$ by \Cref{p:cutsize-halved}. By \Cref{p:cutsize-inc}, we generally have $c_{i+1} \le c_i + 1$ for each $i \in [t-1]$. This means that $k = c_{i_{j+1}} = c_{i_j+\ell} \le \floor{k/2} + \ell$, which implies that $\ell \ge \ceil{k/2}$. \qedd
\end{proof}

As $c_1 = 0$, we also have $i_1 \ge k+1 \ge \ceil{k/2}$. As such, we can uniquely assign $\ceil{k/2}-1$ ``direct predecessor'' nodes in $P$ to each node in $P' \setminus P$. This proves the upper bound for $s$.

\paragraph{Running time.} In each recursive call we compute the boundary size of $V(T_x)$ in linear time, e.g.~by finding the ancestors of $x$ in $T$ and then traversing $V(T_x)$ from $x$. Furthermore, we may rotate one node to the root (of $T_x$), which requires linear time. As each node $v$ corresponds to exactly one recursive call (which returns a subtree rooted at $v$), the total running time is $\fO(n^2)$. This concludes the proof of \Cref{p:depth_approx}.

\paragraph{Remark.} When $k$ is even, \Cref{p:depth_approx} can be slightly strengthened to obtain $\varepsilon_k = \frac{1}{\lfloor {k}/{2} \rfloor}$, at the cost of a slightly more involved procedure. In particular, this extends the statement to the $k=2$ case. Without the improvement, the running time stated in Theorem~\ref{thm1} would be $\fO(n^{2t+2})$ instead of $\fO(n^{2t+1})$.

Intuitively, the improvement comes from the observation that the root-replacement of \Cref{alg-line10} in Algorithm~\ref{alg:trans-st} is too ``proactive''. When $T_x$ matches the boundary size condition with equality, it may be too early to rotate the replacement-root $v$ to the top, as the boundary size may recover as we go further down, if $x$ happens to split the tree in a reasonably balanced way. 
We defer the details of this small improvement to Lemma~\ref{p:depth_approx-improved}, Appendix~\ref{appb}.
With the improved bound and the observation that $2$-cut trees are exactly the Steiner-closed trees (\S\,\ref{sec:Steiner-2cut-equiv}), the result of this subsection directly generalizes \Cref{lemdepth}.

\subsection{Finding an optimal \texorpdfstring{$k$-cut}{k-cut} tree}\label{sec:k-cut-opt-find}

Let $T'$ be a $k$-cut STT on $S$ that serves the search sequence $X$ of length $m$ with minimal cost among all $k$-cut STTs. Let $r$ be the root of $T'$, let $T'_1, \dots, T'_t$ be the subtrees rooted at children of $r$, and let $X_i$ be the subsequence of $X$ consisting of searches to nodes of $T'_i$. Then, $\cost_{T'}(X)=m + c_1 + \cdots + c_t$, where $c_i = \cost_{{T}'_i}(X_i)$, for all $i$.

By definition, the trees $T'_i$ are $k$-cut trees. Our strategy is to find $r$ and to recursively compute $k$-cut STTs $T_1, \dots, T_t$ on the components of $S \setminus r$ that achieve cost at most $c_i$ for their respective sequences $X_i$, i.e.\ for the relevant frequencies $p:V(T'_i) \rightarrow \mathbb{N}$, for all $i$. We then return the tree $T$ obtained by letting the roots of $T_1, \dots, T_t$ be children of the root $r$.

The cost equation translates into the straightforward dynamic program~\Cref{alg:dp}. We call those sets of nodes $A \subseteq V(S)$ \emph{$k$-admissible} for which $S[A]$ is connected and has boundary size at most $k$. We call a node $x \in A$ a \emph{$k$-admissible root} of $A$ if the node sets of all connected components of $S[A] \setminus x$ are $k$-admissible.

Procedure OPT-STT in \Cref{alg:dp} computes an optimal $k$-cut tree for a $k$-admissible set $A \subseteq V(S)$ and the relevant frequencies $p: A \rightarrow \mathbb{N}$. The initial call is OPT-STT$(V(S))$.
Only the root and the total cost are returned, the full tree  can be reconstructed by collecting the roots from the recursive calls with standard bookkeeping. 

Line~\ref{optln2}--\ref{optln3} is the base case (a tree of a single node). In Line~\ref{alg-line:for-adm-root} the $k$-admissible root of the current subset is selected, and in Line~\ref{optln5}--\ref{optln7} the optimal subtrees are found. In Line~\ref{optln8} the total cost is computed with the chosen root and the roots of the optimal subtrees as its children. The first term counts the number of times the root is accessed, and the second term adds the costs of accesses in the subtrees. The correctness of the algorithm follows from the preceding discussion.

\paragraph{Preprocessing.}

The dynamic program is over all nonempty $k$-admissible subsets of $V(S)$. In a preprocessing step we enumerate all such sets, indexed by their cuts. As only cuts of size at most $k$ are relevant, we can iterate through them by traversing $S$ with $k$ pointers. For each cut, we do another traversal of the tree, enumerating the set of nodes in the corresponding $k$-admissible subset. Observe that some cuts lead trivially to an empty set of nodes (when cut-edges point away from each other), and some cuts contain redundant edges. We can easily detect and remove these cases.

\begin{algorithm}[t]
	\caption{Finding the optimal $k$-cut STT on $A$}\label{alg:dp}
	\begin{algorithmic}[1]
		\Statex \textbf{Input:} $k$-admissible set of nodes $A \subseteq V(S)$, search frequencies $p: A \rightarrow \mathbb{N}$.
		\Statex \textbf{Output:} $(r,c)$, for optimal $k$-cut STT on $A$ with root $r$ and cost $c$. 
		\vspace{0.05in}
		
		\Procedure{OPT-STT}{$A$}
			\If{$A = \{r\}$} \hfill$\triangleright$ single node \label{optln2}
				\State \textbf{return} $(r, p(r))$ \label{optln3}
			\EndIf
			\For {$k$-admissible root  $r \in A$}\label{alg-line:for-adm-root}  %
				\State $A_1, \dots, A_t \leftarrow$ node sets of the connected components of $S[A] \setminus r$.\label{optln5}
				\For {$i = 1,\dots, t$}\label{optln6}
					\State $(r_i,c_i) \leftarrow \Call{OPT-STT}{A_i}$ \hfill$\triangleright$ optimal subtrees $T_1, \dots, T_t$\label{optln7}
				\EndFor
				\State Let $C_r = \displaystyle\sum_{x \in A} {p(x)} + \displaystyle\sum_{i \in [t]} {c_i} $. \hfill$\triangleright$ cost of tree with root $r$\label{optln8}
			\EndFor
			\State \textbf{return} $(r,C_r)$ for $r$ that minimizes $C_r$. \label{optln9}
		\EndProcedure
	\end{algorithmic}
\end{algorithm}

\paragraph{Admissible roots.}

We now discuss the finding of $k$-admissible roots (\Cref{alg-line:for-adm-root}).

\begin{lemma}\label{lem:admr}
	Let $A \subseteq V(S)$ be a $k$-admissible set and assume $k \ge 2$.
	\begin{compactenum}[(i)]
		\item If $|\delta(A)| < k$, then every node $r \in A$ is a $k$-admissible root of $A$.\label{item:admr-each}
		\item If $|\delta(A)| = k$, then the set of $k$-admissible roots of $A$ is $\ch(\delta(A)) \cap A$.\label{item:admr-ch}
	\end{compactenum}
\end{lemma}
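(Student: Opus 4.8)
I want to prove Lemma~\ref{lem:admr}, which characterizes the $k$-admissible roots of a $k$-admissible set $A$. Recall $r \in A$ is a $k$-admissible root if every connected component of $S[A] \setminus r$ has boundary size at most $k$. The whole argument will hinge on \Cref{p:comp-bounds}, which already computes the boundaries of these components exactly: if $C_1,\dots,C_t$ are the components of $S[A]\setminus r$, then $\bigcup_i \delta(C_i) = \delta(A) \cup \{r\} \setminus N(r)$, and moreover (from the proof of that lemma) each $v \in \delta(A)\setminus N(r)$ lands in exactly one $\delta(C_i)$, while $r$ lands in all of them. So the boundary of a single component $C_i$ is $\{r\}$ together with the subset of $\delta(A)\setminus N(r)$ that "points into" $C_i$, plus possibly some neighbors of $r$ that lie outside $A$ — but those neighbors are in $N(r)$, hence not in $\delta(A)$ unless... wait, a neighbor $u$ of $r$ with $u\notin A$: is $u$ in $\delta(C_i)$? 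Only if $u$ is adjacent to some node of $C_i$; since $u$ is adjacent to $r\notin C_i$, and $S$ is a tree, $u$ is adjacent to at most one node of $S[A]$, namely possibly $r$ — so such $u$ contributes nothing to any $\delta(C_i)$. Thus $\delta(C_i) = \{r\} \cup (\delta(A) \cap (\text{nodes whose path to } r \text{ enters } C_i))$, and these sets partition $\delta(A)\setminus N(r)$ across the $t$ components.

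**Part (i): $|\delta(A)| < k$.** Fix any $r\in A$ and any component $C_i$. By the above, $|\delta(C_i)| = 1 + |\delta(A) \cap \sigma_i|$ where the $\sigma_i$ are disjoint and $\bigcup_i \sigma_i \supseteq \delta(A)\setminus N(r)$, so $|\delta(A)\cap\sigma_i| \le |\delta(A)| \le k-1$. Hence $|\delta(C_i)| \le k$. Actually I should be a touch careful: I want $|\delta(C_i)| \le k$, i.e. $|\delta(A)\cap\sigma_i| \le k-1$, which is immediate since $|\delta(A)| \le k-1$. Also need $S[C_i]$ connected, which it is by definition of connected component. So every $r$ works. (One can alternatively just invoke \Cref{p:cutsize-inc}-style reasoning, but the explicit count via \Cref{p:comp-bounds} is cleaner and reusable for part (ii).)

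**Part (ii): $|\delta(A)| = k$.** Now $|\delta(C_i)| = 1 + |\delta(A)\cap\sigma_i| \le k$ iff $|\delta(A)\cap\sigma_i| \le k-1$ iff the component $C_i$ does not "capture all of $\delta(A)$." So $r$ is a $k$-admissible root iff no single component of $S[A]\setminus r$ has all $k$ elements of $\delta(A)$ on paths entering it — equivalently, iff $r$ separates some pair of nodes of $\delta(A)$ inside $S$ (recall $\delta(A)\subseteq V(S)$, and by \Cref{p:delta-ch-leaves} the nodes of $\delta(A)$ are exactly the leaves of $\ch(\delta(A))$, so $|\delta(A)| = k \ge 2$ means there really are two of them). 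Here is where the equivalence with $\ch(\delta(A))\cap A$ comes in: a node $r$ lies on a path between two elements of $\delta(A)$ precisely when $r \in \ch(\delta(A))$. Combined with $r\in A$, this gives: $r$ is a $k$-admissible root $\iff$ $r\in\ch(\delta(A))\cap A$. I need to double-check the boundary case $k=2$: then $\delta(A)=\{u,v\}$, $\ch(\delta(A))$ is the $u$–$v$ path, and $r$ is admissible iff $r$ is on that path and in $A$ — consistent. And the claim that every $v\in\delta(A)\setminus N(r)$ goes to exactly one component (needed so the $\sigma_i$ genuinely partition, not just cover) is exactly the "Conversely" paragraph of the proof of \Cref{p:comp-bounds}, which I'll cite.

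**Expected main obstacle.** The routine parts are fine; the one place demanding care is the precise translation "$r$ is a $k$-admissible root $\iff$ $\delta(A)$ is not entirely swallowed by one component of $S[A]\setminus r$ $\iff$ $r$ separates two boundary nodes $\iff$ $r\in\ch(\delta(A))$." Each biconditional is easy in isolation, but I must make sure the direction of the cut-edges (the $\cut(A)$ is a set of \emph{directed} pairs $(u,v)$ with $u\notin A$) doesn't cause an off-by-one: specifically that a node $u\in\delta(A)\setminus A$ outside $A$ still counts the component it "reaches through" correctly, and that $r$ itself, being in $A$ rather than outside it, contributes the "$+1$" uniformly to every $\delta(C_i)$ rather than to $\delta(A)$. \Cref{p:comp-bounds} packages all of this, so really the proof is: invoke \Cref{p:comp-bounds} and its proof for the partition statement, do the arithmetic $1 + |\delta(A)\cap\sigma_i| \le k$, and convert "some $\sigma_i$ contains all of $\delta(A)$" into "$r\notin\ch(\delta(A))$" via \Cref{p:delta-ch-leaves} and the definition of convex hull.
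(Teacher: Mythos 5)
Your proof is correct and takes essentially the same approach as the paper: both deduce the characterization from the boundary decomposition of \Cref{p:comp-bounds} and the fact that $r \in \ch(\delta(A))$ is equivalent to $r$ separating some pair of boundary nodes. You are in fact slightly more careful than the paper's write-up in the reverse direction of part~(ii) — the paper loosely writes ``let $u, v \in \delta(A)$ be two distinct nodes, then $r$ is on the path between $u$ and $v$,'' which is not literally true for every such pair, but only for a suitably chosen pair (the existential claim is what the argument actually needs); your explicit accounting of $|\delta(C_i)| = 1 + |\sigma_i|$ with the $\sigma_i$ partitioning $\delta(A)\setminus N(r)$ avoids that ambiguity and is a cleaner way to package the same reasoning.
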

\begin{proof}
	\emph{(i)} Let $|\delta(A)| < k$, let $r \in A$, and let $C$ be a connected component of $S[A] \setminus r$. Then, by \Cref{p:comp-bounds}, $\delta(C) \subseteq \delta(A) \cup \{r\}$, and thus $|\delta(C)|  \le |\delta(A)| + 1 \le k$. Thus, $r$ is a $k$-admissible root of $A$.
	
	\emph{(ii)} Let $|\delta(A)| = k$. %

	Let $r \in A$ be a $k$-admissible root. If two or more boundary nodes of $A$ are neighbors of $r$, then $r$ is in $\ch(\delta(A))$. Otherwise, at least one boundary node $u \in \delta(A)$ is not a neighbor of $r$, so by \Cref{p:comp-bounds}, there must be some connected component $C$ of $S[A] \setminus r$, such that $u \in \delta(C)$. Then, by assumption, there is some $v \in \delta(A) \setminus \delta(C)$ (otherwise, by \Cref{p:comp-bounds}, $V(C)$ has boundary size $k+1$ and $r$ is not $k$-admissible). Node $v$ is either a neighbor of $r$ or a boundary node of a component $C' \neq C$ of $S[A] \setminus r$. The path between $u$ and $v$ must pass through $r$, implying $r \in \ch(\delta(A))$.
	
	Conversely, assume $r \in \ch(\delta(A)) \cap A$. Let $u, v \in \delta(A)$ be two distinct nodes. Then $r$ is on the path between $u$ and $v$. Thus, $u$ and $v$ are in different connected components of $S[A] \setminus r$. Now \Cref{p:comp-bounds} implies that the boundary of each connected component $C$ of $S[A] \setminus r$ is a proper subset of $\delta(A) \cup \{r\}$, and thus has boundary size at most $k$.
\qedd
\end{proof}

Given \Cref{lem:admr}, the enumeration of $k$-admissible roots in Line $4$ is straightforward, via a traversal of the subtree $S[A]$. In case (\ref{item:admr-each}) we traverse the entire tree $S[A]$, in case (\ref{item:admr-ch}) we traverse the tree of paths from some boundary node $v \in \delta(A)$ to the other boundary nodes $\delta(A) \setminus \{v\}$ (found e.g.\ with a breadth-first search). The cuts of the components can be found in linear time by straightforward data structuring.

\paragraph{Running time.}
In the preprocessing stage we enumerate $\fO(n^k)$ cuts, and for each cut we do a linear-time traversal to find the corresponding $k$-admissible set, all within time $\fO(n^{k+1})$.

The recursive calls of OPT-STT are for smaller $k$-admissible sets. %
Therefore, during the preprocessing phase we sort the $k$-admissible sets by size, and in the dynamic programming table we fill in the entries by increasing order of size. It remains to show that filling in one entry takes time $\fO(n)$, from which the overall running time of $\fO(n^{k+1})$ follows.

Lines \ref{optln2}--\ref{optln3} take $\fO(1)$ time. In Line~\ref{alg-line:for-adm-root} we iterate over all $k$-admissible roots, which, by the preceding discussion, takes time $\fO(n)$. In Line~\ref{optln5} we read out the connected components indexed by their cuts, computed during preprocessing. Line~\ref{optln8} takes $\fO(1)$ time, as the first term can be precomputed for all $k$-admissible sets, and the second term is collected from the recursive calls.

Line~\ref{optln7} is nested in two loops (iterating through possible root nodes, and through each connected component after the removal of the root). Nonetheless, it is easy to see that it is executed at most twice for each \emph{edge} in $S[A]$ (once for each orientation). The total number of recursive calls is therefore at most $2n-2 \in \fO(n)$, as is the cost of taking the minimum in Line~\ref{optln9}. 

 Using Lemma~\ref{p:depth_approx-improved} and setting $k=2t$, we obtain:

\restatethma*

\paragraph{Remark.} It is tempting to try extending the approximation algorithm (with some ratio $r>2$) to the the easiest $k=1$ case, i.e.\ when the STT is a rooted version of $S$. Unfortunately, $1$-cut trees cannot give an $o(n/\log{n})$-approximation of the STT optimum. To see this, take $S$ to be a path, and observe that every rooted version of $S$ has average depth $\Omega(n)$, whereas a BST on $S$ (which is, in particular, a $2$-cut tree) has maximum depth $O(\log{n})$.

\section{Rotations in \texorpdfstring{$k$}{k}-cut trees}\label{sec4}

As discussed in \S\,\ref{sec:intro}, an essential feature of the classical BST model is that the rotation-distance between two trees of size $n$ is $\fO(n)$.\footnote{The number of rotations needed to transform one $n$-node BST into another is at most $2n-6$ and there are pairs of trees requiring this many rotations for all $n>10$~\cite{STT88,Pournin}. As BSTs are trivially Steiner-closed (i.e.\ $2$-cut), a similar lower bound also holds for Steiner-closed STTs.} In particular, if we only do rotations on the search path, as in most natural algorithms, then the cost of rotations can be charged to the cost of searching, i.e.\ of simply traversing the search path. In STTs the situation is different, as there are pairs of trees of size $n$ that are $\Omega(n \log{n})$ rotations apart~\cite{Cardinal18}. Nonetheless, for $k$-cut STTs we show the following. %

\restatethmb*

In our study of adaptive STTs (i.e.\ our generalized Splay) in \S\,\ref{sec5} and \S\,\ref{sec6} we will only make use of the case $k=2$, Theorem~\ref{thm3} is thus more general than strictly necessary for our immediate purposes. STTs are in bijection with vertices of \emph{tree associahedra}, whose edges correspond to rotations. In terms of this structure, Theorem~\ref{thm3} can be interpreted as follows. While the skeleton of a tree associahedron (for a tree of size $n$) can have diameter $\Omega{(n\log{n})}$, its vertices corresponding to $k$-cut STTs induce a connected subgraph of diameter $O(kn)$. %

Before proceeding with the proof, we make an observation that will be useful later.
\begin{observation}\label{p:rot-destroys-kcut}
	Let $T$ be an STT, and let $T'$ be the STT that we obtain by rotating at node $q$ with parent $p$. Then, for each $x \in V(T) \setminus \{p,q\}$, we have $V(T'_x) = V(T_x)$ and thus $|\delta(T'_x)| = |\delta(T_x)|$.
\end{observation}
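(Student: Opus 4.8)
The plan is to simply unwind the definition of rotation and observe that it only relocates a bounded amount of structure. Write $g$ for the parent of $p$ in $T$ (if $p$ is not the root). Reading off clauses (i)--(iii): the rotation at the edge $\{q,p\}$ changes the parent pointer of \emph{at most three} nodes --- $p$ (which becomes a child of $q$), $q$ (which takes $p$'s former slot, i.e.\ becomes a child of $g$, or the root), and the unique child $y$ of $q$, \emph{if it exists}, whose subtree $T_y$ contains a neighbour of $p$ in $S$ (which becomes a child of $p$) --- and every other node keeps its parent. The first thing I would pin down is exactly this statement, including the one nontrivial point it rests on: that at most one child of $q$ has a subtree meeting $N_S(p)$. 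This holds because any two nodes of $V(T_q)$ adjacent to $p$ in $S$ would, together with $p$, close a cycle in the tree $S$ (the path between them inside the connected subtree $S[V(T_q)]$ plus the two edges to $p$); so in fact there is exactly one node of $V(T_q)$ adjacent to $p$, and it lies in a single component of $S[V(T_q)]\setminus q$, which is the subtree of that single child $y$ (or is $q$ itself, in which case there is no special $y$).

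Next I would record two auxiliary facts. First, the reattachments are ``wholesale'': the rooted subtrees $T_a$ hanging at the children $a \neq q$ of $p$, and the rooted subtrees $T_{y_i}$ hanging at the children $y_i$ of $q$ (including $T_y$ itself), are carried over into $T'$ unchanged as rooted trees --- this is precisely what clauses (ii)--(iii) say once the previous paragraph is in place. Second, $V(T'_q) = V(T_p)$: by definition $V(T'_q)$ is $\{q\}$ together with the subtrees of $q$'s children in $T'$, which are exactly $\{p\}$, the subtrees $T_a$, and the subtrees $T_{y_i}$ with $y_i \neq y$, and this union is precisely $V(T_p)$. (We neither need nor have the symmetric identity $V(T'_p) = V(T_q)$.) With these in hand the observation reduces to a short case split on the position of $x \in V(T)\setminus\{p,q\}$ relative to the edge $\{p,q\}$:
\begin{compactenum}[(a)]
\item if $x$ lies in a subtree $T_a$ of some child $a\neq q$ of $p$, or in a subtree $T_{y_i}$ of some child $y_i$ of $q$, then $T'_x = T_x$ verbatim, since that whole subtree was transplanted unchanged;
\item if $x$ is a strict ancestor of $p$ in $T$, then an easy upward induction --- using $V(T'_q)=V(T_p)$ for the base step at $g$ and the fact that nothing below $x$ outside the block $V(T_p)$ was touched --- gives $V(T'_x)=V(T_x)$.
\end{compactenum}
These two cases exhaust $V(T)\setminus\{p,q\}$, so $V(T'_x)=V(T_x)$ always, and then $|\delta_S(T'_x)| = |\delta_S(T_x)|$ is immediate since, by our notational convention, $\delta_S(T_x) = \delta_S(V(T_x))$ depends only on the vertex set $V(T_x)$ and the fixed host tree $S$.

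I do not expect a real obstacle here; the only place that needs a little care is making the ``at most one special child $y$'' claim and the ``carried over unchanged'' phrasing fully rigorous directly from the three-clause definition of rotation, rather than appealing to the picture in Fig.~\ref{fig2}. Everything after that is bookkeeping.
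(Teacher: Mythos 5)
The paper states this as an \emph{observation} with no written proof, treating it as immediate from the three-clause definition of rotation; your write-up is essentially that intended unwinding, and the structure (exactly one reattached child, wholesale carry-over of subtrees, $V(T'_q)=V(T_p)$, then a case split on $x$) is the right one. Two small slips to fix before it is airtight. First, your enumeration of the pieces of $V(T'_q)$ omits $V(T_y)$: the children of $q$ in $T'$ are $p$ and the $y_i\neq y$, and the subtree at $p$ is $V(T'_p)=\{p\}\cup V(T_y)\cup\bigcup_{a\neq q}V(T_a)$, so writing ``$\{p\}$'' in place of $V(T'_p)$ drops $V(T_y)$ from the union (the conclusion $V(T'_q)=V(T_p)$ is still correct once $V(T_y)$ is restored). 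Second, cases (a) and (b) do \emph{not} exhaust $V(T)\setminus\{p,q\}$: you miss nodes that are neither ancestors nor descendants of $p$ (for instance siblings of $p$ and all of $g$'s other descendants). For such $x$, $V(T_x)$ is disjoint from $V(T_p)$ and the rotation only reparents nodes inside $V(T_p)$, so $T'_x=T_x$ trivially, but that case should be stated explicitly. With those two repairs the argument is complete.
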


In particular, if $T$ is a $k$-cut STT, then we only have to consider the boundary $\delta(T'_p)$ to see if $T'$ is still a $k$-cut STT, since $|\delta(T'_q)| = |\delta(T_p)| \le k$.

\medskip

We prove \Cref{thm3} by induction. For $k=1$, we directly rotate $T$ into $T'$. For $k > 1$, we reduce $k$ by
rotating both $T$ and $T'$ into $(k-1)$-cut trees. By reversing the rotation sequence that we used to transform $T'$, we obtain a sequence of rotations from $T$ to $T'$.

We first focus on the number of rotations and ignore the number of pointer moves needed; in \S\,\ref{sec:rot-impl} we present a refined algorithm. The following lemma concerns the case $k=1$. Recall that $1$-cut trees are precisely those that can be obtained by rooting $S$ at some node $r$. Let $S^r$ denote this rooted tree.

\begin{lemma}\label{p:1cut-dist}
	There is a sequence of at most $n-1$ rotations that transforms $S^{r_1}$ into $S^{r_2}$, for arbitrary nodes $r_1, r_2 \in V(S)$. All intermediate trees are $1$-cut trees.\label{p:diam-base}
\end{lemma}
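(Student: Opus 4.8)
The key claim is a single ``root-shifting'' step: if $w$ is a neighbor of $v$ in $S$, then rotating the edge $\{v,w\}$ (a rotation at $w$) in $S^v$ produces exactly $S^w$. Given this, I would take the path $r_1 = v_0, v_1, \dots, v_\ell = r_2$ between $r_1$ and $r_2$ in $S$, note that $\ell \le n-1$ since this path has at most $n$ vertices, and perform the $\ell$ rotations transforming $S^{v_0} \to S^{v_1} \to \cdots \to S^{v_\ell}$. Every intermediate tree is a rooted version of $S$, hence $1$-cut, which settles the lemma (and the case $r_1 = r_2$ is trivial with zero rotations).

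\textbf{Proving the root-shifting claim.} In $S^v$ the root $v$ has as children exactly $N_S(v) = \{w\} \cup \{c_1,\dots,c_j\}$, and $w$ has as children exactly $N_S(w)\setminus\{v\} = \{d_1,\dots,d_m\}$; moreover $(S^v)_{c_t}$ is the rooted-at-$c_t$ version of the component of $S\setminus v$ containing $c_t$, and similarly $(S^v)_{d_i}$ is the rooted-at-$d_i$ version of the component of $S\setminus w$ containing $d_i$. Now apply the rotation definition at $w$. By clause (i), $w$ becomes the root and $v$ becomes a child of $w$. For clause (ii) I must check that no child $d_i$ of $w$ has a subtree $T_{d_i}$ containing a node adjacent to $v$ in $S$: the nodes adjacent to $v$ are $w$ and the $c_t$'s, and $w \notin T_{d_i}$ while each $c_t$ lies outside the component of $S\setminus w$ containing $d_i$, because the unique $S$-path from $c_t$ to $d_i$ is $c_t, v, w, d_i$ and passes through $w$. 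Hence no child migrates, and by clause (iii) the $d_i$ stay children of $w$ and the $c_t$ stay children of $v$. So the resulting tree $T'$ has root $w$ with children $\{v, d_1,\dots,d_m\}$, $v$ with children $\{c_1,\dots,c_j\}$, and all deeper subtrees unchanged (consistent with \Cref{p:rot-destroys-kcut}). Comparing this with $S^w$ — whose root $w$ has children $N_S(w) = \{v,d_1,\dots,d_m\}$, whose node $v$ has children $N_S(v)\setminus\{w\} = \{c_1,\dots,c_j\}$, and whose subtrees below the $c_t$ and $d_i$ are the corresponding rooted components — we see $T' = S^w$ exactly.

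\textbf{Main obstacle.} The only delicate point is the verification of clause (ii) of the rotation — that no child of $w$ gets reattached to $v$ — and then matching the resulting subtree structure against $S^w$ node-by-node; once that bookkeeping is done correctly, everything else (the induction along the path, the $n-1$ bound, the $1$-cut property of intermediate trees) is immediate.
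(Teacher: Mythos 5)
Your proof is correct and follows essentially the same approach as the paper: both arguments rotate along the unique $S$-path from $r_1$ to $r_2$ and observe that each rotation along an $S$-edge simply shifts the root without any child reattachment (because the two endpoints are adjacent in $S$), so each intermediate tree is a rooted version of $S$. Your "root-shifting claim" is the paper's inductive claim that after rotating at $x_i$ the tree is $S^{x_i}$; your verification of clause (ii) of the rotation definition spells out, in more detail, exactly the step the paper dispatches with the sentence that there are no nodes in the subtree of $S$ delimited by two $S$-adjacent vertices.
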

\begin{proof}
	Let $P = (x_1, \dots, x_t)$ be the search path of $r_2$ in $S^{r_1}$, with $x_1 = r_1$ and $x_t = r_2$. We rotate at the nodes $x_2, \dots, x_t$ (in this order).
	
	As $t \leq n$, we clearly make at most $n-1$ rotations. We show inductively that after rotating at $x_i$, for all $i=2,\dots,t$, the obtained tree is $S^{x_i}$. The claim follows, as the last rotation is at $x_t = r_2$.
	
	Consider the tree after rotating at $x_{i-1}$. By the inductive claim, $x_{i-1}$ is the root, and since $\{x_{i-1},x_i\}$ is an edge of $S$, node $x_i$ is the child of the root.
	The next rotation brings $x_i$ to the root, making it the parent of $x_{i-1}$. All other nodes whose parent changes must be in the subtree of $S$ delimited by $x_i$ and $x_{i-1}$, but since $x_{i-1}$ and $x_i$ are connected by an edge in $S$, there are no such nodes. Thus, the edge-set of the tree remains the same and equals the edge set of $S$. Since all intermediate trees are of the form $S^{x_i}$, they are $1$-cut. %
	\qedd
\end{proof}

We continue with the transformation of $k$-cut STTs into $(k-1)$-cut STTs. The following lemma shows that certain rotations strictly ``improve'' the tree, by reducing the boundary size of some subtree and not affecting the boundary sizes of other subtrees.

\begin{figure*}[h]
  \centering
  \includegraphics[width=14cm]{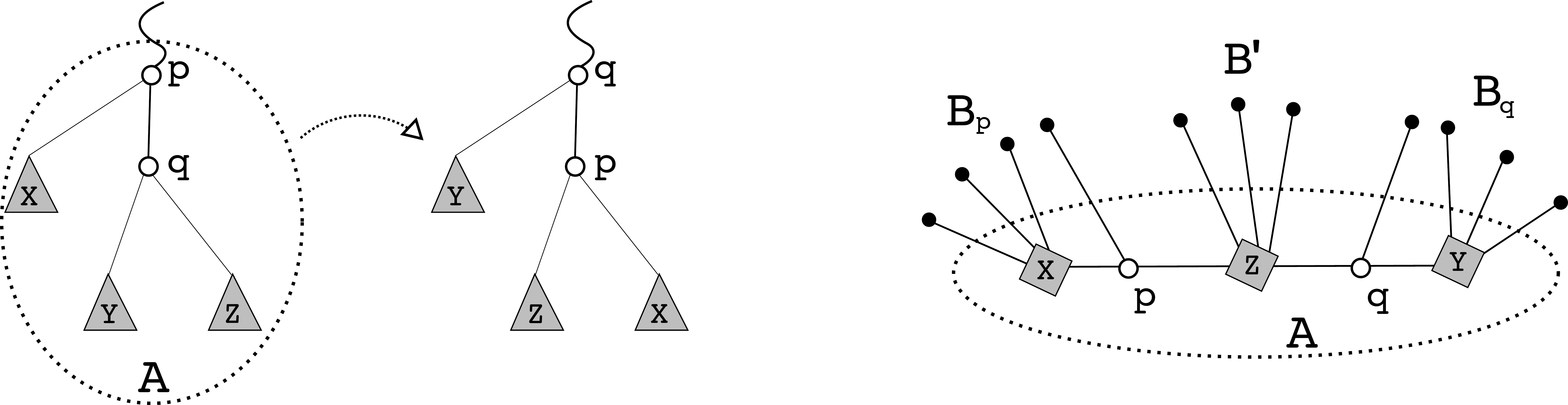}
  \caption{Illustration of the proof of Lemma~\ref{p:rot-step}. \label{fig8}}
\end{figure*}

\begin{lemma}\label{p:rot-step}
	Let $T$ be a $k$-cut STT on $S$, where $k \ge 2$. Let $q$ be a node of $T$ with parent $p$ such that $|\delta(T_q)| = k$ and $|\delta(T_p)| = k-1$. Then, rotating at $q$ produces a $k$-cut STT $T'$, where $|\delta(T'_q)| \le k-1$ and for each node $x \in V(T) \setminus \{q\}$, we have $|\delta(T'_x)| \le |\delta(T_x)|$.
\end{lemma}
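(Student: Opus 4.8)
The plan is to analyze carefully what the rotation at $q$ does to the single subtree whose node set changes, namely $T_p$ (by \Cref{p:rot-destroys-kcut}, all subtrees $T_x$ with $x \notin \{p,q\}$ keep their node set, hence their boundary size, so the only thing to check is $|\delta(T'_p)| \le k$ and $|\delta(T'_q)| \le k-1$; note $|\delta(T'_q)| = |\delta(T_p)| = k-1$ is automatic). First I would set up notation following the rotation definition: let $p$ be the parent of $q$, and after the rotation $q$ becomes the parent of $p$. Let $y$ be the (unique, if it exists) child of $q$ in $T$ whose subtree contains a node adjacent to $p$ in $S$; this $y$ moves to become a child of $p$ in $T'$, while all other children of $q$ stay with $q$ and all other children of $p$ stay with $p$. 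Then $V(T'_p) = V(T_p) \setminus \bigl(V(T_q) \setminus V(T_y)\bigr)$, i.e.\ we remove from $T_p$ everything in $T_q$ except $T_y$ (if $y$ does not exist, we remove all of $V(T_q)$ and $V(T'_p) = V(T_p) \setminus V(T_q)$).

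The key computation is to bound $|\delta(T'_p)| = |\delta(V(T'_p))|$. I would use \Cref{p:comp-bounds} applied to $S[V(T_p)]$ with removed node $q$: the components of $S[V(T_p)] \setminus q$ are exactly $V(T'_p)$ together with the components of $S[V(T_q)] \setminus q$ (the latter being the node sets of the children-subtrees of $q$). \Cref{p:comp-bounds} gives $\delta(T'_p) \cup \bigcup_{\text{children } c \text{ of } q \text{ in } T} \delta(T_c) = \delta(T_p) \cup \{q\} \setminus N(q)$, where $N(q)$ is the neighbor set of $q$ in $S$. Here I must be a little careful: $q \in V(T_p)$ is removed, so $q$ itself is not in $V(T'_p)$, but $q$'s neighbors in $S$ that lie in $V(T'_p)$ contribute $q$ to $\delta(T'_p)$; since $p$ is a neighbor of $q$ (the edge $\{p,q\}$ of $T$ — wait, it need not be an $S$-edge), one has to track precisely which boundary nodes of the old $T_p$ end up ``facing'' $V(T'_p)$. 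The cleanest route: $\delta(T'_p) = \{q\} \cup \{u \in \delta(T_p) : u \text{ has an } S\text{-neighbor in } V(T'_p)\}$. Now $\delta(T_q) \subseteq \delta(T_p) \cup \{p\}$ by \Cref{p:comp-bounds}, and each node of $\delta(T_q)$ is either $p$ or a boundary node of $T_p$ whose corresponding cut-edge enters $V(T_q)$; removing $V(T_q)\setminus V(T_y)$ from $V(T_p)$ destroys at least one such boundary node — namely, since $|\delta(T_q)| = k > |\delta(T_p)| \ge k-1 \ge |\delta(T_p) \setminus \{p\}|$... I expect the count to shake out so that $|\delta(T'_p)| \le |\delta(T_p)| + 1 - 1 + (\text{something}) \le k$, using that at least one former boundary edge of $T_p$ is now ``internalized'' or re-pointed while $q$ is the only genuinely new boundary node, and using the hypothesis $|\delta(T_p)| = k-1$.

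The step I expect to be the main obstacle is precisely this bookkeeping of $\delta(T'_p)$: one must show that although $q$ is added as a new boundary node of $T'_p$, the number of boundary nodes is still at most $k = (k-1)+1 = |\delta(T_p)|+1$, and that no boundary node of $T_q$ other than those accounted for can sneak back in through $V(T_y)$. I would handle the $V(T_y)$ contribution by observing that any boundary node of $T_y$ that is not already a boundary node of $T_p$ must be $q$ or $p$ (again by \Cref{p:comp-bounds} chained along $p \to q \to y$), and $p \in V(T'_p)$ while $q$ becomes the new boundary node — so $V(T_y)$ adds nothing new beyond $q$. Combining: $\delta(T'_p) \subseteq (\delta(T_p) \setminus \{\text{edges into } V(T_q)\setminus V(T_y)\}) \cup \{q\}$, and since $|\delta(T_q)| = k \ge 2$ forces at least one cut-edge of $T_q$ to not pass through $V(T_y)$ (as $T_y$ is a single component of $S[V(T_q)]\setminus q$, whose boundary-into-$V(T_p)$ part has size $\le k-1 < k$), we lose at least one old boundary node and gain exactly one, giving $|\delta(T'_p)| \le |\delta(T_p)| = k-1 \le k$. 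Wait — that would even give $|\delta(T'_p)| \le k-1$; I would double-check the edge cases (e.g.\ $y$ nonexistent, $p$ already a boundary node of $T_q$, $q$ having $S$-degree issues) but expect the final bound to be $|\delta(T'_x)| \le |\delta(T_x)|$ for all $x \ne q$ as claimed, with \Cref{fig8} guiding the picture. The $k$-cut property of $T'$ is then immediate since every boundary size is $\le k$.
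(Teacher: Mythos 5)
Your proposal follows a genuinely different route from the paper's proof. The paper classifies the boundary nodes $B = \delta(T_p)$ into $B_p$ (those $p$ separates from $q$), $B_q$ (those $q$ separates from $p$), and $B'$ (the rest), observes $\delta(T_q)=B'\cup B_q\cup\{p\}$, and proves $B_q \neq \emptyset$ by invoking \Cref{lem:admr}(ii): since $T$ is $k$-cut, $q$ is a $k$-admissible root of $V(T_q)$, hence $q \in \ch(\delta(T_q))$, which fails if $B_q = \emptyset$ because then $q$ would not separate any two members of $B'\cup\{p\}$. Once $B_q\neq\emptyset$, one reads off $\delta(T'_p)=B_p\cup B'\cup\{q\}$ of size $\le |B|-|B_q|+1\le k-1$. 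You instead deduce the same nonemptiness from the $k$-cut bound on the migrating child subtree $T_y$: chasing \Cref{p:comp-bounds} along $p\to q\to y$ gives $\delta(T_y)=B'\cup\{p,q\}$ (when $y$ exists), so $|\delta(T_y)|\le k$ yields $|B'|\le k-2<k-1=|B'\cup B_q|$, hence $B_q\neq\emptyset$. This is sound and arguably more elementary, since it avoids the admissible-root characterization of \Cref{lem:admr} and uses only the $k$-cut hypothesis on one more subtree.

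Two clean-ups are needed to turn the plan into a proof. First, the parenthetical ``whose boundary-into-$V(T_p)$ part has size $\le k-1<k$'' does not parse as a correct bound (e.g.\ $|\delta(T_y)\cap V(T_p)| = |\{p,q\}| = 2$, which is not $\le k-1$ when $k=2$); what you actually want is a bound on the number of cut-edges of $T_q$ that point into $V(T_y)$, which is $|\delta(T_y)\setminus\{q\}|=|B'|+1\le k-1$, strictly less than $|\delta(T_q)|=k$. Second, the case where $y$ does not exist (equivalently, $q$ is adjacent to $p$ in $S$) has no $T_y$ to argue about and must be treated separately; there one notes directly that every node of $\delta(T_q)\setminus\{p\}$ is separated from $p$ by $q$, since $q$ is the unique node of $V(T_q)$ adjacent to $V(T_p)\setminus V(T_q)$, so $|B_q|=k-1\ge1$. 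With these patches your argument is complete and matches the paper's conclusion $|\delta(T'_p)|\le k-1$.
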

\begin{proof}
	Let $A = V(T_p)$ and let $B = \delta(T_p)$. The nodes $p$ and $q$ split $B$ into three parts:
	\begin{itemize}
		\itemsep0mm
		\item $B_p$ contains all nodes in $B$ that $p$ separates from $q$,
		\item $B_q$ contains all nodes in $B$ that that $q$ separates from $p$,
		\item $B'$ contains all other nodes in $B$.
	\end{itemize}
	
	Clearly, $B$ is the disjoint union of $B_p$, $B_q$, $B'$. Observe that $\delta(T_q) = B' \cup B_q \cup \{p\}$.

	We first claim that $B_q \neq \emptyset$. Suppose otherwise. Since $T$ is a $k$-cut STT, $q$ is a $k$-admissible root of $V(T_q)$. Thus, by \Cref{lem:admr}, we have $q \in V(\ch(\delta(T_q))) = V(\ch(B' \cup B_q \cup \{p\})) = V(\ch(B' \cup \{p\}))$. But $q$ does not separate any two vertices in $B' \cup \{p\}$, nor is it contained in $B' \cup \{p\}$, a contradiction.
	
	We conclude the proof by showing $|\delta(T'_p)|, |\delta(T'_q)| \le k-1$. For all other nodes $x$, we have $|\delta(T'_x)| \le |\delta(T_x)|$ by \Cref{p:rot-destroys-kcut}. (See Figure~\ref{fig8}.)
	\begin{itemize}
		\itemsep0mm
		\item For $p$, we have $\delta(T'_p) = B_p \cup B' \cup \{q\}$. Since $B_q \neq \emptyset$, we have $|B_p \cup B'| \le |B| - 1 = k-2$. Thus, $|\delta(T'_p)| \le k-1$.
		\item For $q$, we have $|\delta(T'_q)| = |B| = |\delta(T_p)| = k-1$.~\qedd
	\end{itemize} 
	\let\qed\relax
\end{proof}

It is easy to see that in a $k$-cut STT that is not a $(k-1)$-cut STT, we always have a node $q$ that satisfies the requirements of \Cref{p:rot-step}. Thus, we can repeatedly apply \Cref{p:rot-step} and finish after at most $n$ steps. In fact, we can do slightly better:

\begin{lemma}\label{p:kcut-decr-k}
	Let $T$ be a $k$-cut STT on $S$, with $k \ge 2$. Then $T$ can be transformed into a $(k-1)$-cut STT with at most $n-k$ rotations, so that each intermediate tree is a $k$-cut tree.
\end{lemma}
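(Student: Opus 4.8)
The plan is to show that at most $n-k$ applications of \Cref{p:rot-step} suffice, rather than the naive bound of $n$. The key observation is that nodes in the boundary $\delta(T)=\emptyset$-side of the tree -- more precisely, the top of the tree -- already have small boundary size and need never be touched. Concretely, I would argue that the root $r$ of $T$ has $\delta(T_r)=\delta(T)=\emptyset$, and more generally, any node $x$ with $|\delta(T_x)|\le k-1$ is already ``good'' in the sense that it never needs to be rotated up by a step of \Cref{p:rot-step} (which only ever rotates at a node $q$ with $|\delta(T_q)|=k$). The first step is therefore to identify a set of at least $k$ nodes that are guaranteed to have boundary size at most $k-1$ throughout the procedure, so that the number of ``bad'' nodes (those with boundary size exactly $k$) that we ever need to fix is at most $n-k$.

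First I would set up the iteration: while $T$ is not a $(k-1)$-cut tree, pick a node $q$ with $|\delta(T_q)|=k$ that is \emph{topmost} among such nodes (i.e., its parent $p$ satisfies $|\delta(T_p)|\le k-1$; such a $q$ exists since $\delta(T_{\rt(T)})=\emptyset$ and the boundary size increases by at most $1$ per level by \Cref{p:cutsize-inc}, and because $\delta(T_q)$ consists of ancestors of $q$, the parent's boundary is $\delta(T_p)\subseteq\delta(T_q)$ minus possibly one element, so in fact $|\delta(T_p)|\in\{k-1,k\}$; choosing $q$ topmost forces $|\delta(T_p)|=k-1$). Apply \Cref{p:rot-step} at $q$: this yields a $k$-cut tree $T'$ with $|\delta(T'_q)|\le k-1$ and $|\delta(T'_x)|\le|\delta(T_x)|$ for all other $x$. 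Thus the rotation strictly decreases the number of nodes with boundary size exactly $k$, and never increases it; after finitely many steps no such node remains, i.e., $T$ becomes $(k-1)$-cut. All intermediate trees are $k$-cut, as promised. It remains to bound the number of steps by $n-k$.

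For the count: I would show that at least $k$ distinct nodes never have boundary size $k$ at any point in the procedure, hence are never the pivot $q$. Consider the underlying tree $S$ and fix the final $(k-1)$-cut tree $T^{\ast}$; actually it is cleaner to argue directly on $S$. Observe that in \emph{any} STT on $S$, a node $x$ with $|\delta(T_x)|=k$ has $\delta(T_x)$ equal to a set of $k$ ancestors whose convex hull in $S$ contains $x$ (by \Cref{lem:admr}(ii), since $x$ is a $k$-admissible root of $V(T_x)$). Now take the $k$ nodes of $S$ that are ``extremal'' in a suitable sense -- for instance, iterating: any leaf $\ell$ of $S$ can be the boundary of a subtree but, being a leaf, $\ch$ of any set containing $\ell$ and other nodes cannot have $\ell$ in its interior unless\ldots hmm, this needs care. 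The cleaner route: pick $k$ nodes $v_1,\dots,v_k$ such that no one of them lies on the path in $S$ between $k$ of the others together with ancestors -- e.g.\ take $k$ leaves of $S$ if $S$ has at least $k$ leaves, or argue when it has fewer. A leaf $v$ of $S$ always satisfies: if $|\delta(T_v)|=k\ge 2$, then $v\in\ch(\delta(T_v))$ with $\delta(T_v)$ having $k\ge 2$ distinct nodes, forcing $v$ to separate two of them, impossible for a leaf. Hence every leaf of $S$ always has boundary size $\le k-1$ (in fact $\le 1$). So if $S$ has at least $k$ leaves, those $k$ leaves are never pivots, giving the $n-k$ bound. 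When $S$ has fewer than $k$ leaves, $S$ is ``close to a path'' and a direct degenerate-case argument (or the observation that then $T$ is automatically close to $(k-1)$-cut) handles it; alternatively one pads the leaf-set with near-leaves chosen so their boundary can never reach $k$.

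\textbf{Main obstacle.} The delicate part is the counting argument: proving that a fixed set of $k$ nodes is \emph{permanently} excluded from being a pivot across the whole rotation sequence, and handling the case where $S$ has fewer than $k$ leaves (so the ``$k$ leaves'' argument degenerates). I expect the leaf argument above to work cleanly when $\ell\ge k$; the low-leaf case likely requires either a separate small computation showing $T$ is already nearly $(k-1)$-cut, or a more refined invariant (tracking, say, the multiset of boundary sizes and showing its lexicographic decrease leaves $\ge k$ entries pinned below $k$). Getting the exact constant $n-k$ rather than $n-k+O(1)$ is where the bookkeeping must be tight; this is what ultimately feeds the clean bound $(2k-1)n-(k+1)k+1$ in \Cref{thm3} after summing $\sum_{j=1}^{k-1}(n-j)$ together with the $n-1$ rotations of the base case \Cref{p:1cut-dist}, doubled for the two trees.
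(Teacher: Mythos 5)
Your framework is exactly right: pick a topmost node $q$ with $|\delta(T_q)|=k$ (so its parent has boundary size $k-1$), apply \Cref{p:rot-step}, observe that the set of nodes with boundary size $\le k-1$ grows monotonically, and bound the number of steps by exhibiting $k$ nodes that never become pivots. Your own opening sentence even names the right exempt set --- ``the top of the tree'' --- but you then abandon that idea in favor of the leaves of $S$, and this is where the proof breaks.

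The leaf argument is sound as far as it goes: a leaf $v$ of $S$ has $|\delta(T_v)|=1$ at all times, so leaves are never pivots. But $S$ may well have fewer than $k$ leaves (a path has only two leaves, yet $k$ can be arbitrary), and your fallback --- ``pad the leaf-set with near-leaves'' or ``a direct degenerate-case argument'' --- is not carried out, and I do not see how it could be made to work uniformly. So as written there is a genuine gap.

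The paper closes this gap with a much simpler choice of exempt set, the one your opening sentence gestures at: the nodes of depth at most $k$ in the \emph{original} tree $T$. Since $|\delta(T_{\rt(T)})|=0$ and boundary size increases by at most one per level (\Cref{p:cutsize-inc}), every node $x$ with $\depth_T(x)\le k$ satisfies $|\delta(T_x)|\le\depth_T(x)-1\le k-1$. There are at least $\min(k,n)$ such nodes. Because \Cref{p:rot-step} never increases the boundary size of a non-pivot node and sends the pivot's boundary size down to $\le k-1$, these nodes retain boundary size $\le k-1$ forever and are never chosen as the pivot. Hence at most $n-k$ rotations suffice, with no case distinction on the leaf count. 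The two proofs agree on everything except the identity of the exempt set; the paper's choice, based on depth in $T$ rather than the geometry of $S$, is both simpler and unconditionally large enough.
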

\begin{proof}
	Suppose $T$ is not a $(k-1)$-cut STT. Since $|\delta(T)| = 0$, there is a node $q$ with parent $p$ such that $|\delta(T_q)| = k$ and $|\delta(T_p)| < k$. Since $\delta(T_q) \subseteq \delta(T_p) \cup \{p\}$, we have $|\delta(T_p)| = k-1$. We can thus apply \Cref{p:rot-step}, increasing the number of nodes $x$ with $|\delta(T_x)| \le k-1$ by at least one. Repeat this step until we have a $k$-cut STT. In the original tree $T$, every node $x$ of depth at most $k$ already satisfies $|\delta(T_x)| \le \depth(x) - 1 \le k-1$. As such, we can ignore these nodes and need at most $n-k$ steps.\qedd
\end{proof}

The first part of \Cref{thm3} now follows by induction as outlined above:
\begin{lemma}\label{p:rot-dist-noimpl}
	Let $T, T'$ be $k$-cut STTs. Then we can transform $T$ into $T'$ with $(2k-1)n - (k+1)k + 1$ rotations.
\end{lemma}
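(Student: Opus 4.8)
The plan is to prove \Cref{p:rot-dist-noimpl} by induction on $k$, exactly along the lines already sketched: reduce both trees to a common ``level'' of the hierarchy and then induct. The base case is $k=1$, which is handled directly by \Cref{p:1cut-dist}: since a $1$-cut tree is precisely a rooted version $S^r$ of $S$, any two $1$-cut trees $S^{r_1}$ and $S^{r_2}$ are connected by at most $n-1$ rotations through $1$-cut trees, and $n-1 = (2\cdot 1 - 1)n - (1+1)\cdot 1 + 1$, matching the claimed bound.

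For the inductive step, assume the statement for $k-1$ and let $T, T'$ be $k$-cut STTs. First I would apply \Cref{p:kcut-decr-k} to $T$ to obtain a $(k-1)$-cut STT $\widehat{T}$ using at most $n-k$ rotations (all intermediate trees $k$-cut), and likewise apply it to $T'$ to obtain a $(k-1)$-cut STT $\widehat{T'}$ using at most $n-k$ rotations. By the induction hypothesis, $\widehat{T}$ can be transformed into $\widehat{T'}$ with at most $(2(k-1)-1)n - k(k-1) + 1 = (2k-3)n - k(k-1) + 1$ rotations. Finally, since every rotation is its own inverse up to re-rotating the same edge (cf.\ the reversibility noted for \Cref{fig2}), reversing the rotation sequence $T' \rightsquigarrow \widehat{T'}$ gives a sequence $\widehat{T'} \rightsquigarrow T'$ of at most $n-k$ rotations. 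Concatenating the three segments $T \rightsquigarrow \widehat{T} \rightsquigarrow \widehat{T'} \rightsquigarrow T'$ yields a transformation from $T$ to $T'$ using at most
\[
(n-k) + \bigl((2k-3)n - k(k-1) + 1\bigr) + (n-k) = (2k-1)n - (k+1)k + 1
\]
rotations, completing the induction. (Each intermediate tree is automatically a valid STT on $S$, and in fact a $k$-cut STT, since the first and last segments stay $k$-cut by \Cref{p:kcut-decr-k} and the middle segment stays $(k-1)$-cut, hence $k$-cut, by the induction hypothesis.)

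There is essentially no deep obstacle here; the real content lives in \Cref{p:1cut-dist,p:rot-step,p:kcut-decr-k}, which are already established. The only point requiring care is the bookkeeping of the induction: verifying that $(n-k)+(n-k)$ plus the $(k-1)$-level bound collapses exactly to $(2k-1)n - (k+1)k + 1$ (in particular that the ``$+1$'' terms and the quadratic corrections $-k(k-1)$ versus $-(k+1)k$ line up), and confirming the base case arithmetic. A secondary point worth stating explicitly is that reversing a rotation sequence is legitimate in the STT model and preserves the $k$-cut property of every tree visited, since the reversed sequence visits the same set of trees.
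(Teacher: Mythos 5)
Your proof is correct and follows essentially the same approach as the paper: induct on $k$, use \Cref{p:1cut-dist} for the base case, and apply \Cref{p:kcut-decr-k} to both trees plus the induction hypothesis and reversibility of rotations for the step. The arithmetic checks out, and the explicit remark on reversing the rotation sequence is a sensible clarification of what the paper leaves implicit.
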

\begin{proof}
	If $k=1$, we need $n-1$ rotations, by \Cref{p:1cut-dist}. Otherwise, we transform the two $k$-cut STTs $T, T'$ into $(k-1)$-cut STTs $T'', T'''$, using $2(n-k)$ rotations, by \Cref{p:kcut-decr-k}. By induction, the rotation distance between $T''$ and $T'''$ is at most $(2(k-1)-1)n - k(k-1) + 1$. The rotation distance between $T$ and $T'$ is therefore at most
	\begin{align*}
		& 2(n-k) + (2(k-1)-1)n - k(k-1) + 1\\
		& = (2k-1)n - 2k - k(k-1) + 1\\
		& = (2k-1)n - (k+1)k + 1. ~~\qedd%
	\end{align*}
	\let\qed\relax
\end{proof}

\subsection{Implementation}\label{sec:rot-impl}

We now finish the proof of Theorem~\ref{thm3}, by showing that the constructed rotation-sequence can be carried out with $\fO(k^2n)$ pointer-moves and oracle calls. (This is necessary if we are to apply it in the STT model.)

Implementing \Cref{p:1cut-dist} requires no more than the search for $r_2$ in $S^{r_1}$ and $n-1$ rotations, which altogether needs $\fO(n)$ steps.
It remains to show that %
\Cref{p:kcut-decr-k} can be implemented in $\fO(kn)$ steps, from which the $O(k^2n)$ bound of Theorem~\ref{thm3} will follow. %

\paragraph{The boundary-size data structure.}
To detect nodes at which we can rotate (as in \Cref{p:rot-step} and \Cref{p:kcut-decr-k}), we need to efficiently compute the boundary sizes of subtrees rooted at arbitrary nodes. Furthermore, these boundary sizes need to be updated during rotations. We show that a data structure that stores all boundary sizes can be obtained in an $O(k^2n)$-time preprocessing, with $O(k)$-time updates for every rotation. As we do $O(kn)$ rotations altogether, these costs can be absorbed in the $O(k^2n)$ bound of Theorem~\ref{thm3}.

Consider a node $q$ with parent $p$ in $T$, and suppose that $\delta(T_p)$ is known. We know that $p \in \delta(T_q)$, since $V(T_q)$ is a connected component of $S[V(T_p)] \setminus p$. Further, $\delta(T_q) \setminus \{p\} \subseteq \delta(T_p)$. Each node $x \in \delta(T_p)$ is a boundary node of $T_q$ if and only if $x$ and $q$ lie on the same side of $p$ in $S$. Thus, we can compute $\delta(T_q)$ from $\delta(T_p)$ using $|\delta(T_p)|$ oracle calls. It follows that, if $T$ is a $k$-cut tree, we can compute $\delta(T_x)$ for all $x\in V(T)$ in an $O(n)$-time traversal of $T$ starting from the root, with $O(nk)$ oracle-calls.

Consider now a rotation of the edge $\{q,p\}$ that transforms $T$ into $T'$. %
We know that $\delta(T'_q) = \delta(T_p)$, and we can compute $\delta(T'_p)$ as described above. For all other $x \in V(T) \setminus \{p,q\}$, we have $\delta(T'_x) = \delta(T_x)$. It follows that the boundary-size data structure can be updated with $O(k)$ oracle calls.

\paragraph{The algorithm.} We proceed via a modified depth-first search, starting with a pointer at the root. Mark nodes as \texttt{unvisited}, \texttt{visited}, or \texttt{finished}, with all nodes initially \texttt{unvisited}.

Let $T$ denote the current tree, and let $y$ be the node at the pointer. If $y$ is \texttt{unvisited}, we mark it \texttt{visited}. If all children of $y$ are \texttt{finished}, we mark $y$ as \texttt{finished} and move the pointer to the parent of $y$, unless $y$ is the root, in which case we are done.

Otherwise, if $y$ has an \texttt{unvisited} child, we pick one such child $x$. If $|\delta(T_x)| \le k-1$, then we move the pointer to $x$. Otherwise, we rotate the edge $\{y,x\}$, mark $x$ as \texttt{visited}, and keep the pointer at $y$.

\medskip

By \Cref{p:kcut-decr-k}, the algorithm uses at most $n-k$ rotations. Boundary-size queries are handled by the previously described data structure. %
It remains to show that the algorithm terminates with $O(n)$ pointer moves and produces a $(k-1)$-cut tree. Towards this claim, observe the following invariants, easily shown by induction:
\begin{compactenum}[(i)]
	\item if a node is \texttt{finished}, resp.\ \texttt{unvisited} then all its descendants in $T$ are \texttt{finished}, resp.\ \texttt{unvisited},
	\item if the node at the pointer is \texttt{visited} then all its children are either \texttt{finished} or \texttt{unvisited},
	\item every node $x$ marked \texttt{visited} or \texttt{finished} satisfies $|\delta(T_x)| \le k-1$.
\end{compactenum}

\medskip

Invariants (i)--(ii) imply that the algorithm does not get stuck, and since every pointer move changes a marking from \texttt{unvisited} to \texttt{visited} or from \texttt{visited} to \texttt{finished}, after $2n$ steps all nodes are \texttt{finished}, with the pointer at the root. 

Invariant (iii) implies that all rotations performed are of the kind defined in \Cref{p:rot-step}, and that the algorithm produces a $(k-1)$-cut tree.

\subsection{Steiner-closed trees}\label{sec:Steiner-2cut-equiv}

We now prove the equivalence between $2$-cut STTs and \emph{Steiner-closed STTs}, defined in \S\,\ref{sec2}. %

\begin{lemma}\label{p:Steiner-2cut-equiv}
	A search tree $T$ on $S$ is Steiner-closed if and only if it is a $2$-cut STT.
\end{lemma}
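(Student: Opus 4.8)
The plan is to prove both directions of the equivalence by relating the search path of an arbitrary node $x$ in $T$ to the boundary sets of the subtrees $T_y$ for ancestors $y$ of $x$. The key structural fact, already available from \Cref{p:comp-bounds} (and the remark following it), is that for a child $c$ of $p$ in $T$ we have $\delta(T_c) \subseteq \delta(T_p) \cup \{p\}$, so the boundary $\delta(T_y)$ always consists of proper ancestors of $y$ in $T$. I also plan to use the observation that $\ch(P)$, where $P$ is the set of nodes on the search path of $x$, is exactly $S[V(T_{r}) \cap \text{something}]$ — more precisely, I want to identify $\ch(P \cup \delta(T_x) \cup \{x\})$ or a closely related hull with the union of the search path and its ``escaping'' boundary, and use \Cref{p:delta-ch-leaves} which says the leaves of $\ch(\delta_S(B))$ are exactly $\delta_S(B)$.

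First I would fix $x$, let $P = (x_1, \dots, x_t = x)$ be its search path, and let $A = \{x_1, \dots, x_t\}$. I want to understand $\ch(A) \setminus A$: a node $q \in \ch(A) \setminus A$ lies on a path in $S$ between two path-nodes $x_i, x_j$, but $q \notin A$. I would show that such $q$ must lie in exactly one subtree $T_{x_\ell}$ hanging off the path — more carefully, letting $\ell$ be maximal with $q \in V(T_{x_\ell})$, node $q$ belongs to one of the components of $S[V(T_{x_\ell})] \setminus x_\ell$ that does \emph{not} contain $x_{\ell+1}$, and its neighbors in $\ch(A)$ are governed by the boundary structure. The crucial computation: the number of neighbors of $q$ in $\ch(A)$ equals (essentially) the number of ``directions'' toward $A$, which I expect to equal $|\delta(T_{x_{\ell+1}})|$ minus a correction, or can be read off as: $q$ is connected to exactly two nodes of $\ch(A)$ iff the relevant boundary set has size $2$. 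Then ``$T$ Steiner-closed'' $\iff$ ``every such $q$ has degree exactly $2$ in $\ch(A)$, for every $x$'' $\iff$ ``$|\delta(T_y)| \le 2$ for every $y$'' $\iff$ ``$T$ is a $2$-cut tree''.

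Concretely, for the forward direction ($2$-cut $\Rightarrow$ Steiner-closed) I would argue: if $|\delta(T_y)| \le 2$ for all $y$, take any $x$ with path $A$ and any $q \in \ch(A) \setminus A$. Pick the deepest ancestor $x_\ell$ on the path with $q \in V(T_{x_\ell})$; then $q$ lies in a component $C$ of $S[V(T_{x_\ell})] \setminus x_\ell$ not containing $x_{\ell+1}$, and $C \cap A = \emptyset$. Since $q \in \ch(A)$, $q$ separates two nodes of $A$; both must be reached by leaving $C$ through $\delta(C) \cup \{x_\ell\}$, and $\delta(C) \subseteq \delta(T_{x_\ell}) \cup \{x_\ell\}$ has size $\le 3$. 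A careful count using \Cref{p:delta-ch-leaves} applied to $\ch$ of the relevant boundary, plus the fact that $q$ being on an $A$–$A$ path forces $q$ to lie between the ``entry points'' into $C$, will pin the number of $\ch(A)$-neighbors of $q$ at exactly $2$. For the reverse direction (Steiner-closed $\Rightarrow$ $2$-cut), I would argue the contrapositive: if some $T_y$ has $|\delta(T_y)| \ge 3$, I construct a witness node $x$ (a descendant of $y$, or $y$ itself) whose search path $A$ has a convex-hull node of degree $\ge 3$ — taking $x$ to be a leaf of $T_y$ chosen so that the three boundary nodes of $T_y$ are ``spread'' into distinct hull branches, making $y$ (or the appropriate branching node) have three neighbors in $\ch(A)$.

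The main obstacle I anticipate is the bookkeeping in the forward direction: correctly identifying which component $C$ contains $q$ and proving that exactly two nodes of $\ch(A)$ are adjacent to $q$ (rather than one or three) requires carefully tracking how the search path re-enters the subtree $S[V(T_{x_\ell})]$ and using \Cref{p:delta-ch-leaves} to say that the boundary nodes are precisely the leaves of the relevant sub-hull. A clean way to avoid casework is probably to prove the sharper statement that $\ch(A) \setminus A = \bigcup_\ell (\ch(\delta(T_{x_{\ell+1}}) \cup \{x_{\ell+1}\}) \setminus \cdots)$ and that a node $q$ in this hull has degree in $\ch(A)$ equal to its degree in the sub-hull it sits in, which is at most $|\delta(T_{x_{\ell+1}})|$; combined with the ``at least two'' observation from \S\,\ref{sec2}, degree-exactly-two across all $x$ becomes equivalent to all boundary sizes being $\le 2$. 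I would organize the proof around that identity to keep the argument modular, deferring the routine verification that hull-degree of an internal Steiner node equals the size of the boundary set it ``serves.''
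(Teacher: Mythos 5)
Your overall framework — boundary sets, convex hulls, \Cref{p:comp-bounds}, and \Cref{p:delta-ch-leaves} — matches the paper's, and both directions are set up as the right contrapositives. But there is a concrete gap in each.

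In the forward direction ($2$-cut $\Rightarrow$ Steiner-closed), the bound you derive for the component $C$ containing $q$ is only $|\delta(C)| \le |\delta(T_{x_\ell})| + 1 \le 3$, and the ``careful count'' you invoke cannot pin the $\ch(A)$-degree of $q$ at $2$ from a budget of $3$ exit points: nothing in what you wrote rules out $q$ separating all three boundary nodes. The missing step is to observe that $C$ is itself $V(T_z)$ for some child $z$ of $x_\ell$ with $z \notin A$; then the $2$-cut hypothesis applies directly to $z$ and gives $|\delta(C)| = |\delta(T_z)| \le 2$. Once you have that, each component of $S \setminus q$ containing a node of $A$ must contain a distinct boundary node of $T_z$, so the degree of $q$ in $\ch(A)$ is at most $2$, and combined with the ``at least two'' observation from \S\,\ref{sec2} you are done.

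In the reverse direction your choice of witness is wrong. You propose to take $x$ a \emph{descendant} of $y$ (a leaf of $T_y$), but the Steiner point $v$ of $\ch(\delta(T_y))$ lives in $V(T_y)$, so it can lie on the search path of any such $x$ — in fact $v$ may equal $y$ itself. Then $v \in P_x$, which is precisely when the Steiner-closed condition places no constraint on $v$'s degree. As a concrete failure: let $S$ be a star with center $c$ and leaves $a_1,\dots,a_4$, and let $T$ be the path $a_1 \to a_2 \to a_3 \to c \to a_4$; here $|\delta(T_c)| = 3$, but the search path of the only leaf $a_4$ of $T_c$ has $\ch(P_{a_4}) = P_{a_4}$, so it is Steiner-closed, and your witness selection produces nothing. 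The paper instead takes $P$ to be the search path of the \emph{parent} of $y$: then $\delta(T_y) \subseteq P$ since the boundary nodes are proper ancestors of $y$, while $v \in V(T_y)$ is disjoint from $P$, so $v \in \ch(P) \setminus P$ has degree $\ge 3$ and $P$ is not Steiner-closed. You need to move the witness one level up for the argument to go through.
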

\begin{proof}
	Suppose that there is a node $x \in V(T)$ such that $|\delta(T_x)| \ge 3$. Since $S[V(T_x)]$ is connected, every pair of nodes from $\delta(T_x)$ is connected through a path that lies entirely in $V(T_x)$, except for its endpoints. Thus $V(\ch(\delta(T_x))) \setminus \delta(T_x) \subseteq V(T_x)$. Moreover, $\ch(\delta(T_x))$ is a tree with leaves $\delta(T_x)$ (by \Cref{p:delta-ch-leaves}). As there are more than two leaves, there must be an inner node $v \in V(T_x)$ with degree at least $3$ in $\ch(\delta(T_x))$. Let $P$ be the search path of the parent of $x$ in $T$. Then $v \notin P$, but $v$ has degree at least $3$ in $\ch(P)$, so $P$ is not Steiner-closed.
	
	Suppose $T$ is not Steiner-closed. Then there is a path with node set $P$ from the root to some node $x$ such that $P$ is not Steiner-closed. Let $y \in \ch(P) \setminus P$ be a node with degree at least $3$ in $\ch(P)$. There must be three distinct nodes $u, v, w \in P$ that are pairwise separated by $y$. W.l.o.g., no other node in $P$ separates either of $u,v,w$ from $y$ (i.e., choose $u,v,w$ such that their distance to $y$ is minimal). Clearly, $y \in V(T_x)$. Let $z$ be the child of $x$ such that $y \in V(T_z)$. Then $\{u,v,w\} \subseteq \delta(T_z)$, so $T$ is not a $2$-cut STT. \qedd
\end{proof}

We have the following corollary of \Cref{thm3}.

\begin{corollary}
	Given two Steiner-closed STTs $T$ and $T'$ on the same $n$-node tree $S$, we can transform $T$ into $T'$ through a sequence of at most $3n - 5$ rotations, such that all intermediate trees are Steiner-closed. Moreover, starting with a pointer at the root, we can transform $T$ into $T'$ through a sequence of $\fO(n)$ pointer moves and rotations at the pointer.
\end{corollary}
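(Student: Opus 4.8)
The plan is to derive this corollary directly from \Cref{thm3} by specializing to $k=2$ and invoking the equivalence between $2$-cut and Steiner-closed STTs established in \Cref{p:Steiner-2cut-equiv}. First I would observe that, by \Cref{p:Steiner-2cut-equiv}, the two given Steiner-closed STTs $T$ and $T'$ are exactly $2$-cut STTs on $S$. Then I would apply \Cref{thm3} with $k=2$: this gives a sequence of at most $(2k-1)n - (k+1)k + 1 = 3n - 6\cdot 1 - \dots$ rotations; plugging in $k=2$ yields $(2\cdot 2-1)n - (2+1)\cdot 2 + 1 = 3n - 6 + 1 = 3n-5$ rotations transforming $T$ into $T'$, with all intermediate trees being $2$-cut STTs. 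Applying \Cref{p:Steiner-2cut-equiv} again in the other direction, each intermediate tree is Steiner-closed, which is the first claim.

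For the second claim, I would again invoke the ``moreover'' part of \Cref{thm3} with $k=2$: starting with a pointer at the root, $T$ can be transformed into $T'$ using $\fO(k^2 n) = \fO(4n) = \fO(n)$ pointer moves and rotations at the pointer, with all intermediate trees $2$-cut, hence Steiner-closed. So the entire corollary is a matter of substituting $k=2$ into \Cref{thm3} and translating ``$2$-cut'' to ``Steiner-closed'' via \Cref{p:Steiner-2cut-equiv}.

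There is essentially no obstacle here — the corollary is a pure specialization. The only mild point of care is the arithmetic: confirming that $(2k-1)n - (k+1)k + 1$ evaluates to $3n-5$ at $k=2$, and that $\fO(k^2n)$ collapses to $\fO(n)$ for constant $k$. One might also note for completeness that every intermediate tree produced in the proof of \Cref{thm3} is a $k$-cut tree (which is stated explicitly in \Cref{thm3}), so the Steiner-closed property is preserved throughout the transformation, not merely at the endpoints. A short proof would read roughly: ``By \Cref{p:Steiner-2cut-equiv}, $T$ and $T'$ are $2$-cut STTs. Apply \Cref{thm3} with $k=2$; the rotation count is $3n-5$ and the pointer-move count is $\fO(n)$, and all intermediate trees are $2$-cut, hence Steiner-closed by \Cref{p:Steiner-2cut-equiv}.''
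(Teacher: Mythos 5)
Your proof is correct and is exactly the paper's intended derivation: the paper introduces this statement immediately after \Cref{p:Steiner-2cut-equiv} as ``the following corollary of \Cref{thm3},'' leaving the $k=2$ substitution and the translation between ``$2$-cut'' and ``Steiner-closed'' implicit. Your arithmetic check $(2\cdot 2-1)n - (2+1)\cdot 2 + 1 = 3n-5$ and your note that all intermediate trees remain $2$-cut (hence Steiner-closed) are precisely the details one needs to supply.
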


\section{Splay trees on trees}\label{sec5}

In this section we extend Splay trees, introduced by Sleator and Tarjan~\cite{ST85} to the setting of trees on trees. The Splay tree is an adaptive BST, re-arranged via rotations after every search. (Other operations such as insert, delete, split, etc.\ are also defined, but for the sake of simplicity we only focus on searches.) 

A search for an element $x$ proceeds as in a normal BST, following the search path from the root to node $x$. Afterwards, $x$ is rotated to the root in a series of local steps (called by Sleator and Tarjan the ZIG, ZIG-ZIG, and ZIG-ZAG steps), with the entire transformation of the tree called \emph{splaying} $x$. Intuitively, the effect of splaying, besides bringing $x$ to the root, is to approximately halve the depth of every node on the search path~\cite{Subramanian96, ESA15}. 

Splaying can be defined in a number of equivalent ways. A simple view is that $x$ is rotated to the root, with the rotations grouped in consecutive pairs. For each pair, if the parent $p$ and the grandparent $g$ of $x$ are \emph{on the same side of $x$} (i.e.\ both to the right or both to the left), then we \emph{skip ahead} and rotate at $p$ first, then at $x$ (this is the ZIG-ZIG step). Otherwise, we do both rotations at $x$, without skipping ahead (this is the ZIG-ZAG step). If the search path is of odd length, then the entire process is finished by a simple rotation (a ZIG step). The ``skipping ahead'' of ZIG-ZIG is crucial for the efficiency of Splay; a simple rotate-to-root strategy is well-known to be inefficient on some examples. 

The above description of Splay trees can be easily extended to STTs, with an appropriate generalization of the  parent and grandparent being ``on the same side of $x$''.
We now describe this generalization. %
To splay $x$, we repeatedly apply one of three types of operations (the extensions of ZIG, ZIG-ZIG, and ZIG-ZAG). For reasons that will become clear later, we define splaying more generally, bringing $x$ not necessarily to the root, but to a given point on the search path. We refer to Algorithm~\ref{alg:splay} and Figure~\ref{fig4} for details.

 \begin{algorithm}
  \caption{SplayTT (generalized splaying procedure for STTs)}\label{alg:splay}
  \begin{algorithmic}[1]
    \Statex \textbf{Input:} search tree $T$ on $S$, node $x$ to be splayed until node $y$ is parent of $x$ in $T$.
    \Procedure{\textsc{splay}$(x,y)$}{}
    \While{$x.parent \neq y$}
	\State $p \leftarrow x.parent$
	\If{$p.parent = y$}\hfill$\triangleright$ (ZIG)
		\State rotate at $x$ 
	\Else
		\State $g \leftarrow p.parent$
		\If{$p$ separates $x$ and $g$ in $S$}\hfill$\triangleright$ (ZIG-ZIG)\label{sln8}
			\State rotate at $p$, then rotate at $x$ 
		\ElsIf{$x$ separates $p$ and $g$ in $S$}\hfill$\triangleright$ (ZIG-ZAG)\label{sln10}
			\State rotate twice at $x$
		\EndIf
	\EndIf	        
	\EndWhile

   \EndProcedure
    \end{algorithmic}
\end{algorithm}

Observe that splay$(x,y)$ includes as special case the operation of splaying $x$ all the way to the root, when called as splay$(x,null)$. It is easy to verify that when the underlying tree $S$ is a path, i.e.\ in the BST case, the defined splaying operation is identical with the classical splaying of Sleator and Tarjan. %

The algorithm, as described, is not well-defined for arbitrary STTs. Besides the two cases (Line~\ref{sln8}) $p$ separates $x$ and $g$, and (Line~\ref{sln10}) $x$ separates $p$ and $g$, there is a third possible case, when none of $x$, $p$, and $g$ separate the other two. (A hypothetical fourth case, where $g$ separates $p$ and $x$ cannot arise, as in that case $g,p,x$ would not appear in this order on the search path.)

It is easy to see that the non-separating third case can only arise if the search tree is not Steiner-closed. Therefore, we define and analyse splaying under the restriction that the search tree $T$ is Steiner-closed at all times, including in its initial state. In this way Algorithm~\ref{alg:splay} is well-defined, and amenable for analysis. A remaining technical challenge is that even if $T$ is Steiner-closed before splaying, the act of splaying may destroy this property. (We show a simple example in Figure~\ref{fig7}.)

Let $P = P_x$ be the node set of the search path of $x$ in a Steiner-closed tree $T$, and let $C = \ch(P)$. The difficulty, intuitively, is that $C$ is not a path as in the BST case, but a tree. Call a node of $C$ of degree more than $2$ a \emph{branching node}. We prove an alternative characterization of branching nodes that will be useful later.

\begin{lemma}\label{p:branch-equiv}
	Let $T,P$ be defined as above. A node $p \in P$ is a branching node of $P$ if and only if $|\delta(T_p)| = 2$ and $p$ has a child $q$ on $P$ with $|\delta(T_q)| = 1$.
\end{lemma}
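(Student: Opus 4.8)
The plan is to characterize branching nodes via the boundary structure, using the fact that $T$ is Steiner-closed and hence (by Lemma~\ref{p:Steiner-2cut-equiv}) a $2$-cut tree. Recall that for $p \in P$, the boundary $\delta(T_p)$ consists of ancestors of $p$, and in particular of nodes on the search path $P$ above $p$ (the observation after Lemma~\ref{p:comp-bounds}, combined with the fact that the search path is an ancestor chain). Since $T$ is $2$-cut, $|\delta(T_p)| \le 2$ for all $p$, so there are only the cases $|\delta(T_p)| \in \{0,1,2\}$; the case $0$ occurs only at the root.

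Here is how I would run the forward direction. Suppose $p \in P$ is a branching node of $C = \ch(P)$, i.e. $p$ has degree $\ge 3$ in $C$. I first want to argue $|\delta(T_p)| = 2$. The key fact is $\ch(\delta(T_p))$ has leaf set exactly $\delta(T_p)$ (Observation~\ref{p:delta-ch-leaves}), and since $T$ is $2$-cut, $\ch(\delta(T_p))$ is a path (at most two leaves), so every node of $V(T_p)$ has degree at most $2$ \emph{in the part of $S$ that stays within $V(T_p)$ plus its boundary}. More precisely: the three $C$-neighbors of $p$ lead into three distinct directions in $S$; at most one of these directions can be ``upward'' (toward the part of $S$ outside $V(T_p)$), since $p$ is separated from the rest of $S$ by at most... — let me be careful. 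Actually the cleaner route: $p \in P$ means all of $P$ above $p$ lies in $\delta(T_p) \cup \{\text{ancestors via boundary}\}$; the part of $P$ strictly below $p$ lies inside $V(T_p) \setminus \{p\}$, and splits among the components of $S[V(T_p)] \setminus p$. A branching at $p$ within $C$ forces either $\ge 2$ components of $S[V(T_p)]\setminus p$ to contain path-nodes below $p$, or one such component plus $\ge 2$ boundary directions; since $|\delta(T_p)|\le 2$, we must be in the situation where exactly one child $q$ of $p$ on $P$ has $V(T_q)$ receiving ``extra'' boundary from $p$. I would then use Lemma~\ref{p:comp-bounds} to track that $\delta(T_q) \subseteq \delta(T_p) \cup \{p\}$ and that the branching corresponds exactly to: $p$ separates two boundary nodes (so $p \in \ch(\delta(T_p))$, forcing $|\delta(T_p)| = 2$ by Lemma~\ref{lem:admr}(ii)) while simultaneously having a path-child $q$ below it whose component does \emph{not} inherit both boundary nodes — giving $|\delta(T_q)| = 1$ (it gets $\{p\}$ plus at most one of the two boundary nodes, but if it got both it would also branch there; minimality/degree count pins it to $1$).

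For the converse, suppose $|\delta(T_p)| = 2$ and $p$ has a child $q$ on $P$ with $|\delta(T_q)| = 1$. Since $|\delta(T_p)| = 2$, write $\delta(T_p) = \{a,b\}$; by Lemma~\ref{lem:admr}(ii) applied to $A = V(T_p)$, the fact that $p$ is an admissible root (it is the root of the $2$-cut subtree $T_p$) gives $p \in \ch(\{a,b\})$, i.e. $p$ separates $a$ and $b$ in $S$. So in $C = \ch(P)$, node $p$ has at least one neighbor in the $a$-direction and at least one in the $b$-direction (both $a,b \in \delta(T_p)$ are reached along $P$ — actually along ancestors — but they are leaves of $\ch(P)$ too after noting $\delta(T_p) \subseteq P$ when $T$ is Steiner-closed... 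I should check this containment, possibly it needs the Steiner-closed hypothesis directly). Now $q$ is a child of $p$ on $P$ with $\delta(T_q) = \{p\}$ (it must contain $p$ and have size $1$), so by Lemma~\ref{p:comp-bounds}, $V(T_q)$ is a component of $S[V(T_p)] \setminus p$ that contains \emph{neither} $a$ nor $b$ in its boundary — meaning the path from $p$ into $T_q$ goes in a \emph{third} direction, distinct from both the $a$- and $b$-directions. Since $q \in P \subseteq C$, that third direction is also present at $p$ in $C$, giving $p$ degree $\ge 3$ in $C$, i.e. $p$ is a branching node.

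The main obstacle I anticipate is pinning down exactly why $|\delta(T_q)| = 1$ rather than $2$ in the forward direction, and dually why the three directions are genuinely distinct — this is where the Steiner-closed (equivalently $2$-cut) hypothesis does the real work, preventing a node of degree $\ge 3$ from hiding inside $\ch(\delta(T_p))$, which is forced to be a path. I would lean on Observation~\ref{p:delta-ch-leaves} (leaf set of $\ch(\delta(\cdot))$ is the boundary) and Lemma~\ref{lem:admr}(ii) (admissible roots lie on $\ch(\delta(\cdot))$ when the boundary has size $k=2$) as the two workhorses, plus the elementary observation that $\delta(T_p)$ and $\delta(T_q)$ each sit inside the search path when $T$ is Steiner-closed. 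A secondary bookkeeping point is to make sure ``$p$ has \emph{a} child $q$ on $P$'': on the search path, $p$ has a unique child that is also on $P$ (the next node of $P$), so the quantifier is really about that specific node, which simplifies matters. I expect the whole argument to be a few lines once these lemmas are invoked correctly.
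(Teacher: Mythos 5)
Your overall approach mirrors the paper's: both rely on \Cref{p:comp-bounds}, \Cref{p:delta-ch-leaves}, and \Cref{lem:admr}, and both decompose the equivalence into the same two implications. The converse direction is essentially sound (modulo small slips: invoking \Cref{lem:admr}(ii) to get $p \in \ch(\delta(T_p))$ is exactly right, and your concern about $\delta(T_p) \subseteq P$ is resolved by the remark after \Cref{p:comp-bounds} --- boundary nodes are ancestors, hence on the search path --- without any appeal to Steiner-closedness).

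The forward direction, however, has a genuine gap, and it is precisely the one you flag as ``the main obstacle.'' You argue $|\delta(T_q)| = 1$ via ``it gets $\{p\}$ plus at most one of the two boundary nodes, but if it got both it would also branch there; minimality/degree count pins it to $1$.'' Ruling out that $q$ inherits \emph{both} boundary nodes only gives $|\delta(T_q)| \le 2$; it does not rule out $\delta(T_q) = \{p, y_1\}$, and the phrase ``minimality/degree count'' does not carry any actual argument. The paper closes this by choosing three witnesses $y_1, y_2, y_3 \in P$ pairwise separated by $p$, each chosen so that no other node of $P$ separates it from $p$. At most one of them can be a descendant of $p$ (by the search-tree property), and at least one must be, since otherwise the search path to $p$'s \emph{parent} would contain all three and fail Steiner-closedness. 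So exactly one, say $y_3$, is a descendant; it lies in $V(T_q)$ for the unique path-child $q$ of $p$, and $\delta(T_p) = \{y_1, y_2\}$. Because $p$ separates $y_3$ (hence every node of $V(T_q)$) from both $y_1$ and $y_2$, neither $y_1$ nor $y_2$ can lie in $\delta(T_q)$, giving $\delta(T_q) = \{p\}$. That last separation step is the missing piece in your write-up. A secondary issue: ``the part of $P$ strictly below $p$ ... splits among the components of $S[V(T_p)]\setminus p$'' is incorrect --- it lies entirely in the single component $V(T_q)$ --- and the uniqueness of $q$ is actually what makes the argument go through.
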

\begin{proof}
	Let $p \in P$ be a branching node. Then there are three nodes $y_1, y_2, y_3 \in P$ that are pairwise separated by $p$. W.l.o.g., no node in $P$ separates $p$ from either of the nodes $y_1, y_2, y_3$. By the search tree property, at most one of $y_1, y_2, y_3$ can be a descendant of $p$. If $y_1,y_2,y_3$ are all ancestors of $p$, then the search path to the parent of $p$ is not Steiner-closed. As such, one of the nodes, say $y_3$, is a descendant of $p$, and $y_1,y_2$ are ancestors of $p$, which implies $\delta(T_p) = \{y_1, y_2\}$. (Since $T$ is Steiner-closed and thus a $2$-cut tree, $|\delta(T_p)| = 2$.)
	
	Moreover, since $p$ has a descendant $y_3$ on $P$, it also has a child $q$ on $P$, which must be in the same connected component of $S \setminus p$ as $y_3$. We have $p \in \delta(T_q)$ and $\delta(T_q) \setminus \{p\} \subseteq \delta(T_p) = \{y_1, y_2\}$. But since $p$ separates $q$ from both $y_1$ and $y_2$, we have $\delta(T_q) = \{p\}$.
	
	For the other direction, suppose $q$ is the child of $p$ and $|\delta(T_p)| = 2$, $|\delta(T_q)| = 1$. Clearly, $\delta(T_q) = \{p\}$. Say $\delta(T_p) = \{y_1, y_2\}$. Since $y_1, y_2 \notin \delta(T_q)$, the three nodes $y_1, y_2, q$ must be pairwise separated by $p$. This means that $p$ has degree $3$ in $\ch(\{y_1, y_2, q\})$, and thus degree at least $3$ in $\ch(P)$. \qedd
\end{proof}

\if0
\begin{lemma}\label{lemprop} Let $T,x,P, C$ be defined as above. Then the following hold:
\begin{compactenum}[(i)]
\item all branching nodes of $C$ are in $P$,

\item all branching nodes of $C$ have degree exactly $3$,
\item if a node has two branching ancestors, then it may not separate them in $C$, 

\item all branching nodes of $C$ are in the same component of $C \setminus x$,
\item a node on the search path cannot be branching if its parent is branching,
\item the root of $T$ is not branching,
\item node $x$ is not branching.
\end{compactenum}
\end{lemma}
\begin{proof} \ \\
For $a,b \in P$, we say that $a$ is the $C$-neighbor of $b$ if no node in $P \setminus \{a,b\}$ separates  $a$ and $b$. 

\begin{compactenum}[(i)]
\item A branching node $q \in V(C) \setminus P$ would make $P$ non-Steiner-closed.

\item Suppose some $q \in C$ has $C$-neighbors $q_1,q_2,q_3,q_4$. If any three of $q_1, q_2, q_3, q_4$ are above $q$ on the search path, then the search path of the parent of $q$ is not Steiner-closed. So $q$ must be above at least two of the four nodes, but then those two nodes cannot be on the search path as they are separated by $q$.

\item Suppose $y$ separates two branching ancestors $q,p$, and let $q_1, q_2$, resp.\ $p_1, p_2$ be their $C$-neighbors away from $y$. Then, $q_1, q_2$ must be above $q$ on the search path, and $p_1, p_2$ must be above $p$ on the search path. It follows that there is a search path that contains $q, p_1, p_2$ but not $p$ or $p,q_1,q_2$ but not $q$, in both cases not Steiner-closed.

\item Otherwise $x$ would separate two branching ancestors, contradicting (iii).

\item Suppose $p$ and $q$ are branching nodes with $p$ the parent of $q$ in $T$. Let $q_1$ be a $C$-neighbor of $q$ away from $x$ and $p$. Let $p_1$, $p_2$ be $C$-neighbors of $p$ away from $x$. Then there is a search path that contains $q_1, p_1,p_2$ but not $p$, which is not Steiner-closed.
\item Suppose the root $r$ is branching, with $C$-neighbors $q_1, q_2, q_3$. All three nodes appear below $r$ on the search path, which is a contradiction, since they are separated by $r$. 
\item Suppose $x$ is branching, with $C$-neighbors $q_1, q_2, q_3$. All three appear before $x$ on the search path, which is then not Steiner-closed. \qedd
\end{compactenum}
\let\qed\relax
\end{proof}
\fi

From Lemma~\ref{p:branch-equiv} it follows, in particular, that neither the root $r$ of $T$, nor the searched node $x$ can be branching (since $|\delta(T_r)| = 0$, and $x$ has no child on $P$). The following lemma shows that we can only lose the Steiner-closed property by rotating the edge between a branching node and its relevant child.

\begin{lemma}\label{p:branching-destr-Steiner}
	Let $T$ be a Steiner-closed STT, let $q \in V(T)$ with parent $p \in V(T)$. Let $T'$ be the tree obtained by rotating the edge $\{q,p\}$. Then $T'$ is Steiner-closed if and only if $|\delta(T_p)| \neq 2$ or $|\delta(T_q)| \neq 1$.
\end{lemma}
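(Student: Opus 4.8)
By \Cref{p:branch-equiv}, the condition ``$|\delta(T_p)|=2$ and $|\delta(T_q)|=1$'' says exactly that $p$ is a branching node of the search path to $q$, with $q$ the child witnessing it, so the lemma is the promised formalisation of ``we only lose Steiner\hbox{-}closedness by rotating a branching node with its relevant child''. To prove it I would argue with boundaries. By \Cref{p:Steiner-2cut-equiv}, $T'$ is Steiner-closed iff it is a $2$-cut tree. Write $A=V(T_p)$ and $B=V(T_q)$ (a connected component of $S[A]\setminus p$), and let $A'=V(T'_p)$, which by the definition of rotation is the connected component of $S[A]\setminus q$ containing $p$. By \Cref{p:rot-destroys-kcut}, $V(T'_x)=V(T_x)$ for all $x\notin\{p,q\}$ and $V(T'_q)=A$, so $|\delta(T'_x)|\le 2$ for $x\notin\{p,q\}$ and $|\delta(T'_q)|=|\delta(T_p)|\le 2$. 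Hence $T'$ is $2$-cut iff $|\delta(A')|\le 2$, and the statement reduces to showing
\[
|\delta(A')|\ge 3 \iff |\delta(T_p)|=2 \text{ and } |\delta(T_q)|=1 .
\]

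For $v\in\delta(A)$ let $v'$ denote its unique neighbour in $S[A]$ (unique since $S$ is a tree). Applying \Cref{p:comp-bounds} to the subtree $S[A]$ with removed node $q$ and reading off the component $A'$, one gets $\delta(A')=\{q\}\cup\{\,v\in\delta(A):v'\in A'\,\}$ (note $v'\in A'$ already implies $v'\ne q$). In particular $|\delta(A')|\le 1+|\delta(T_p)|$, which settles $|\delta(T_p)|\in\{0,1\}$: there $T'$ is $2$-cut. So assume $|\delta(T_p)|=2$ and write $\delta(A)=\{a,b\}$, with $a\ne b$ and $a,b\ne q$ (as $q\in A$). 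Then $|\delta(A')|\ge 3$ iff $a'\in A'$ and $b'\in A'$. Since $p\in\delta(T_q)$ always and $\delta(T_q)\subseteq\delta(T_p)\cup\{p\}$, we have $|\delta(T_q)|\in\{1,2\}$, and two cases remain.

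If $|\delta(T_q)|=1$, i.e.\ $\delta(T_q)=\{p\}$, then applying \Cref{p:comp-bounds} to $S[A]$ with removed node $p$, each $v\in\{a,b\}$ fails the membership test for $\delta(T_q)$, so either $v\in N(p)$ (hence $v'=p\in A'$), or $v'$ lies in a component of $S[A]\setminus p$ other than $B$; in the latter case $p$ separates $v'$ from $q$ in $S[A]$, so $v'$ is joined to $p$ without passing through $q$, i.e.\ $v'\in A'$. Either way $v'\in A'$, so $a',b'\in A'$ and $|\delta(A')|\ge 3$; thus $T'$ is not Steiner-closed.

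If instead $|\delta(T_q)|=2$, then $\delta(T_q)=\{p,a\}$ for one of the boundary nodes, say $a$; I claim $a'\notin A'$, whence $|\delta(A')|\le 2$ and $T'$ is Steiner-closed. This is the heart of the argument. Since $T$ is $2$-cut, $q$ is a $2$-admissible root of the $2$-admissible set $B$ (the components of $S[B]\setminus q$ are the subtrees $V(T_c)$ of the children $c$ of $q$, each with boundary at most $2$), so by \Cref{lem:admr}(ii), using $|\delta(B)|=2$, the node $q$ lies on the path from $p$ to $a$ in $S$. As $\delta_S(B)=\{p,a\}$ and $S$ is a tree, $B$ separates $p$ from $a$, so this path enters $B$ through the single edge between $p$ and $B$ and leaves $B$ at $a'$ (which lies in $B$ by \Cref{p:comp-bounds}, since $a\in\delta(T_q)$). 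Hence $q$ lies on the subpath from $p$ to $a'$. If $q=a'$ then $a'\notin A'$ trivially; otherwise $q$ is an interior vertex of this subpath, so deleting $q$ from $S[A]$ separates $a'$ from $p$, again giving $a'\notin A'$. This exhausts all cases. The hardest point is precisely this last one: pinning down the position of $q$ on the path from $p$ to $a$ via \Cref{lem:admr} and ruling out the degenerate placements where $q$ is an endpoint of the $B$-portion of that path; everything else is routine bookkeeping with \Cref{p:comp-bounds}.
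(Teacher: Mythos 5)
Your proof is correct, and it takes a recognisably different route from the paper's. You stay entirely in the ``$2$-cut'' (boundary-size) picture, via \Cref{p:Steiner-2cut-equiv}: you observe that only $\delta(T'_p)$ can have size exceeding $2$, write $\delta(T'_p) = \{q\} \cup \{v \in \delta(T_p) : v' \in V(T'_p)\}$ using the unique entry point $v'$, and then read off whether $a', b'$ land inside or outside $V(T'_p)$ by tracking how the two cut vertices $p$ and $q$ separate them. The hardest case ($|\delta(T_q)|=2$) is settled by \Cref{lem:admr}(ii), which places $q$ strictly inside the $B$-portion of the path from $p$ to $a$, forcing $a' \notin V(T'_p)$. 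The paper, by contrast, argues in the convex-hull language of the definition: for the forward direction it invokes the branching-node characterisation of \Cref{p:branch-equiv} directly ($p$ has degree $3$ in $\ch(\{y_1,y_2,q\})$, and the rotation removes $p$ from $q$'s search path); for the converse it reduces to $|\delta(T'_p)|\ge 3$ as you do, but then derives the contradiction by exhibiting a degree-$3$ vertex $z$ of $\ch(\{p,q,y_i\})$ that is not an ancestor of $p,q$, violating Steiner-closedness of the search path of $q$ in $T$. Both arguments hinge on the same structural facts (only $p$'s new boundary can grow; the problematic case requires locating $q$ relative to the two cut vertices of $T_q$), so they are parallel, but your uniform boundary computation is a genuine alternative to the paper's convex-hull reasoning and has the small advantage of treating both implications by one formula for $\delta(T'_p)$.
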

\begin{proof}
	Suppose that $|\delta(T_p)| = 2$ and $|\delta(T_q)| = 1$, i.e.\ $p$ is a branching node of the search path of $q$. Then $p$ has two ancestors $y_1, y_2$ such that $p$ has degree $3$ in $\ch(\{y_1, y_2, q\})$. The search path $P'_q$ of $q$ in $T'$ contains $y_1, y_2, q$, but not $p$, so $P'_q$ is not Steiner-closed.
	
	Conversely, suppose that $T'$ is not Steiner-closed, so $|\delta(T'_p)| > 2$ by \Cref{p:rot-destroys-kcut}. Since $\delta(T'_p) \subseteq \delta(T'_q) \cup \{q\}$, and $|\delta(T'_q)| \le 2$ by \Cref{p:rot-destroys-kcut}, we have $|\delta(T'_q)| = 2$, say $\delta(T'_q) = \{y_1, y_2\}$ and $\delta(T'_p) = \{y_1, y_2, q\}$. We already have $|\delta(T_p)| = |\delta(T'_q)| = 2$. It remains to show that $|\delta(T_q)| = 1$. We claim that $\delta(T_q) = \{p\}$.
	
	Suppose, to the contrary, that $y_i \in \delta(T_q)$ for some $i \in \{1,2\}$. Then $q$ cannot separate $y_i$ from $p$, since $y_1 \in \delta(T_p)$. Moreover, no ancestor of $p, q$ is in $\ch(\{p,q\})$. In particular, $\ch(\{p,q,y_i\})$ has a node $z$ of degree $3$, which is not an ancestor of $p,q$. As such, the search path of $q$ in $T$ is not Steiner-closed, a contradiction. \qedd
\end{proof}

Our goal is to remove all branching nodes from the search path $P_x$ before splaying $x$ to the root. Note that by \Cref{p:branch-equiv}, a branching node can never be the child of another branching node. Our strategy is to splay up a branching node $b$ to become the child of its lowest branching ancestor $b'$, or to become the root, if $b$ has no branching ancestors. In both cases $b$ loses the branching property. It is crucial that during this process no new branching nodes are created on $P_x$ and $b'$ (if exists) stays a branching node. We show this in the following.

\begin{lemma}\label{p:branch-rot-unaff}
	Let $T$ be an STT, let $P_x$ be the search path of $x$ in $T$, and let $q \in P$ be a node with non-branching parent $p \in P$. Let $T'$ be the tree obtained by rotating the edge $\{q,p\}$, and let $P'_x$ be the search path of $x$ in $T$. Then
	\begin{compactenum}[(i)]
		\item Every branching node of $P'_x$ is a branching node of $P_x$;
		\item Every branching node of $P_x$, except possibly $q$, is a branching node of $P'_x$.
	\end{compactenum}
\end{lemma}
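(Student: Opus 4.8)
The plan is to analyze how rotating the edge $\{q,p\}$ affects which nodes of the search path become branching, using the characterization from \Cref{p:branch-equiv}: a node $z \in P_x$ is branching if and only if $|\delta(T_z)| = 2$ and $z$ has a child on $P_x$ with boundary size $1$. By \Cref{p:rot-destroys-kcut}, the rotation only changes $V(T_x)$ for $x \in \{p,q\}$, hence the boundary sizes $|\delta(T'_z)|$ differ from $|\delta(T_z)|$ only for $z \in \{p,q\}$. Also the search path $P'_x$ and $P_x$ can differ only by swapping the relative order of $p$ and $q$ (and possibly one child of $q$ being reattached to $p$); all other ancestor/descendant and parent/child relations along the path are unchanged. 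So the verification reduces to checking, for each node $z$ on the path, whether the ``branching predicate'' can flip, and this is only possible for $z \in \{p, q\}$ or for the nodes whose parent is $p$ or $q$.

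First I would record the two structural facts above precisely. Then for part (i): let $b$ be a branching node of $P'_x$; I must show $b$ is branching in $P_x$. If $b \notin \{p,q\}$ and $b$'s relevant child is not $p$ or $q$, then both the boundary size of $b$ and the boundary size of its child on the path are unchanged, and the parent/child relation on the path is unchanged, so $b$ is branching in $P_x$. The remaining cases are: $b = q$ (but $q$ in $T'$ has parent equal to the old parent of $p$, and its child on $P'_x$ is the old $p$ unless $p$ got no path-child reassigned — here I'd use that $p$ was non-branching in $T$, so by \Cref{p:branch-equiv} either $|\delta(T_p)|\neq 2$ or $p$ has no path-child of boundary size $1$; combined with \Cref{p:branching-destr-Steiner}-style reasoning this forces that if $q$ is branching in $T'$ then $p$ must have been branching in $T$, contradiction — so actually $q$ is not branching in $T'$, which is even stronger than needed); $b = p$ (in $T'$, $p$ is a child of $q$; I show $|\delta(T'_p)| = 2$ with a path-child of boundary size $1$ forces $|\delta(T_p)|=2$ with such a child already in $T$, essentially re-running the argument of \Cref{p:branching-destr-Steiner}); and the case where $b$'s path-child is $p$ or $q$ (handled by noting which of $p,q$ is the child, and tracking the boundary size of that child across the rotation, which can only decrease or stay equal for the ``lower'' one). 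For part (ii): let $b$ be branching in $P_x$ with $b \neq q$; symmetric case analysis shows $b$ remains branching in $P'_x$, the only threatened node being $q$ itself, which is explicitly excepted.

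The cleanest way to organize this is probably to split on whether $p$ is an ancestor of $q$ (it is, being the parent) and note that after rotation $q$ becomes the parent of $p$; then the only path nodes whose ``branching status'' is in question are $p$, $q$, and the node (if any) that was a child of $q$ on $P_x$ and becomes a child of $p$ in $T'$ by rule (ii) of the rotation definition. For each of these three nodes I'd write down the before/after boundary sizes (using \Cref{p:comp-bounds} to compute how $\delta$ splits when removing $p$ or $q$) and the before/after path-child, and check the \Cref{p:branch-equiv} predicate. The main obstacle I anticipate is the bookkeeping around case (ii) of the rotation: when $q$ has a child $y$ with $V(T_y)$ containing a neighbor of $p$, that child $y$ is re-parented to $p$ in $T'$, so the set of children of $q$ (and of $p$) on the search path changes, and I must be careful that $y$ — if it lies on $P_x$ — does not newly become branching. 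Using \Cref{p:branch-equiv}, $y$ could only become branching if $|\delta(T'_y)| = 2$, but $\delta(T'_y) = \delta(T_y)$ by \Cref{p:rot-destroys-kcut} and $y$'s path-child is unchanged, so in fact $y$'s status is preserved; the subtle point is just that its \emph{parent}'s status (now $p$ instead of $q$) is what I must control, which loops back to the analysis of $p$. So the heart of the proof is a careful, but elementary, case check on the at-most-three affected nodes, leaning entirely on \Cref{p:branch-equiv}, \Cref{p:rot-destroys-kcut}, and \Cref{p:comp-bounds}; no genuinely new idea is needed beyond disciplined case analysis.
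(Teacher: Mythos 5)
Part (ii) of your plan matches the paper: you analyze the only potentially affected nodes via \Cref{p:branch-equiv} and \Cref{p:rot-destroys-kcut}, which is exactly what the paper does (the relevant facts being that $b\notin\{p,q\}$ since $p$ is non-branching, so $|\delta(T'_b)|=|\delta(T_b)|$, and $b$'s path-child can only change from $p$ to $q$, in which case $|\delta(T'_q)|=|\delta(T_p)|=1$). For part (i), however, you miss a much simpler route, and the case analysis you sketch contains a false intermediate claim.

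For (i) the paper works directly with the definition of a branching node --- a node of degree greater than $2$ in $\ch(P)$ --- rather than with the characterization of \Cref{p:branch-equiv}. Rotating $\{q,p\}$ does not create any new ancestors of $x$, so $V(P'_x)\subseteq V(P_x)$, hence $\ch(P'_x)\subseteq\ch(P_x)$ as subtrees of $S$, and every node's degree in $\ch(P'_x)$ is at most its degree in $\ch(P_x)$. Thus ``branching in $P'_x$'' implies ``branching in $P_x$'' immediately, with no case analysis at all. Your case analysis for (i) is not only heavier but wrong at $b=q$: you claim $q$ can never be branching in $T'$, deducing a contradiction from ``if $q$ is branching in $T'$ then $p$ was branching in $T$.'' That implication fails. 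Take $|\delta(T_p)|=2$, $|\delta(T_q)|=2$, and a path-child $c$ of $q$ with $|\delta(T_c)|=1$; then $p$ is non-branching (its path-child $q$ has boundary size $2\neq 1$), yet $q$ is branching in $P_x$. After the rotation $|\delta(T'_q)|=|\delta(T_p)|=2$, and $c$ is not the reattached child (since $\delta(T_c)=\{q\}$ means $p$ is not adjacent to $V(T_c)$), so $c$ remains $q$'s path-child with $|\delta(T'_c)|=|\delta(T_c)|=1$; hence $q$ is still branching in $P'_x$. This is consistent with conclusion (i) --- $q$ was already branching in $P_x$ --- but it refutes the claim your argument rests on. Patching the $b=q$ case in the \Cref{p:branch-equiv} framework essentially forces you back to the degree-monotonicity observation, which you should adopt directly.
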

\begin{proof}
	(i) Since $V(P'_x) \subseteq V(P_x)$, no non-branching node on the search path becomes branching and no new branching node enters the search path of $x$.
	
	(ii) Let $b \neq q$ be a branching node of $P_x$, with child $c$ on $P_x$. The branching node $b$ is not affected by the rotation, so $|\delta(T'_q)| = |\delta(T_q)| = 2$ by \Cref{p:rot-destroys-kcut}. If $c \neq q$, then $c$ is not affected by the rotation either, so $|\delta(T'_c)| = |\delta(T_c)| = 1$. Otherwise, if $c = q$, then $|\delta(T'_c)| = |\delta(T_q)| = 1$. Either way, $b$ stays a branching node. \qedd
\end{proof}

\if0
\begin{lemma}%
Let $T$ be a Steiner-closed STT, let $P_x$ be the search path of $x$ in $T$, and let $B$ be the set of branching nodes on $P_x$. Let $y$ be an arbitrary node on $P_x$ (possibly $x$ itself), and let $b$ be the lowest ancestor of $y$ such that $b \in B$, or $b = null$ if $y$ has no branching ancestor. Let $T'$ be the tree resulting from executing splay$(y, b)$, let $P'_x$ be the search path of $x$ in $T'$ and let $B'$ be the set of branching nodes on $P'_x$
Then:
\begin{compactenum}[(i)]
\item $B' = B \setminus \{y\}$, and
\item $T'$ is Steiner-closed.
\end{compactenum}
\end{lemma}

\begin{lemma}%
	Let $T$ be a Steiner-closed STT, let $P_x$ be the search path of $x$ in $T$, and let $B$ be the set of branching nodes on $P_x$. Let $y$ be an arbitrary node on $P_x$ (possibly $x$ itself), and let $b$ be the lowest ancestor of $y$ such that $b \in B$, or $b = null$ if $y$ has no branching ancestor. Let $T'$ be the tree resulting from executing splay$(y, b)$, let $P'_x$ be the search path of $x$ in $T'$ and let $B'$ be the set of branching nodes on $P'_x$
	Then:
	\begin{compactenum}[(i)]
		\item $B' = B \setminus \{y\}$, and
		\item $T'$ is Steiner-closed.
	\end{compactenum}
\end{lemma}

\begin{proof}

(i)
Since $V(P'_x) \subseteq V(P_x)$, no non-branching node on the search path becomes branching and no new branching node enters the search path of $x$, so $B' \subseteq B$.

If $B \setminus \{y\} = \emptyset$, then $y$ is the root after the splaying, so $\delta(T'_y) = 0$ and $y$ is not a branching node of $P'$ by \Cref{p:branch-equiv}.

Otherwise, let $c$ be the child of $b$ on $P$. We have $|\delta(T'_y)| = |\delta(T_c)| = 1$, so $y$ is not a branching node in $T'$ and $b$ stays a branching node in $T'$. For other branching nodes in $B$, neither them or their children are affected by the rotations, so they stay branching nodes (recall \Cref{p:rot-destroys-kcut}).

(ii) Clearly, for every rotation at a node $q$, the parent $p$ is not a branching node. By \Cref{p:branching-destr-Steiner}, the Steiner-closed property is preserved after each rotation.
\end{proof}
 \fi

We are now ready to describe the search operation in our generalized SplayTT structure (Algorithm~\ref{alg:splay2}). Like Splay, it starts with a normal search to the target node $x$ in the (Steiner-closed) search tree $T$, identifying the search path of $x$ (Line~\ref{spll2}). Then, in a first phase of splaying (Lines~\ref{spll3}--\ref{spll6}), it transforms the search path, such as to remove all branching nodes. In Line~\ref{spll3} the branching nodes $x_1, \dots, x_k$ on the search path are found, indexed from the root towards $x$. Then, in Lines~\ref{spll5}--\ref{spll6}, from bottom-to-top, each branching node $x_i$ is splayed up, until it becomes the child of its nearest branching ancestor $x_{i-1}$ (or it becomes the root, in case of $x_1$). By \Cref{p:branch-rot-unaff}, splay$(x_i, x_{i-1})$ creates no new branching nodes on the search path of $x$, and afterwards, $x_i$ is no longer a branching node. As such, the search path of $x$ after the first phase contains no branching nodes, and by Lemma~\ref{p:branching-destr-Steiner}, the tree remains Steiner-closed.
In the second phase (Line~\ref{spll7}), node $x$ is splayed all the way to the root on the transformed search path. Again, by Lemma~\ref{p:branching-destr-Steiner}, the tree remains Steiner-closed.

 \begin{algorithm}
  \caption{SplayTT search operation}\label{alg:splay2}
  \begin{algorithmic}[1]
    \Statex \textbf{Input:} Search tree $T$ on $S$. Node $x$ to be searched. 
    \Procedure{\textsc{search}$(x)$}{}
    \State Follow search path in $T$ from root to $x$.\label{spll2}
    \State Identify branching nodes $x_1, \dots, x_k$ on the search path of $x$.\hfill$\triangleright$ (phase 1)\label{spll3}
    \State Let $x_0 = null$.\label{spll4}
    \For{$i = k, \dots, 1$ }\label{spll5}
	\State splay$(x_i, x_{i-1})$\label{spll6}
	\EndFor

	\State splay$(x, null)$\hfill$\triangleright$ (phase 2)\label{spll7}
   \EndProcedure
   
    \end{algorithmic}
\end{algorithm}

Note that the search path of $x$ after phase $1$ is a subset of the initial search path, in particular it contains the previously branching nodes $x_1, \dots, x_k$.
We illustrate the operation of SplayTT with an example in Figure~\ref{fig6}.

\medskip

As both phases involve a single pass through the search path, the entire process can be implemented in time linear in the length of the search path. The ZIG, ZIG-ZIG, and ZIG-ZAG steps take a constant number of rotations and oracle calls, the splaying in both phases thus clearly takes linear overall time. To find the branching nodes of the search path $P$ (Line~\ref{spll3}), we calculate the boundary sizes of the search path nodes as in \Cref{sec:rot-impl}, and identify the branching nodes using the characterization in \Cref{p:branch-equiv}. Note that, in the spirit of self-adjusting structures,  the algorithm does not need to know the global structure of $S$, and does not require persistent book-keeping between searches.

\begin{figure*}
  \centering
  \includegraphics[width=14cm]{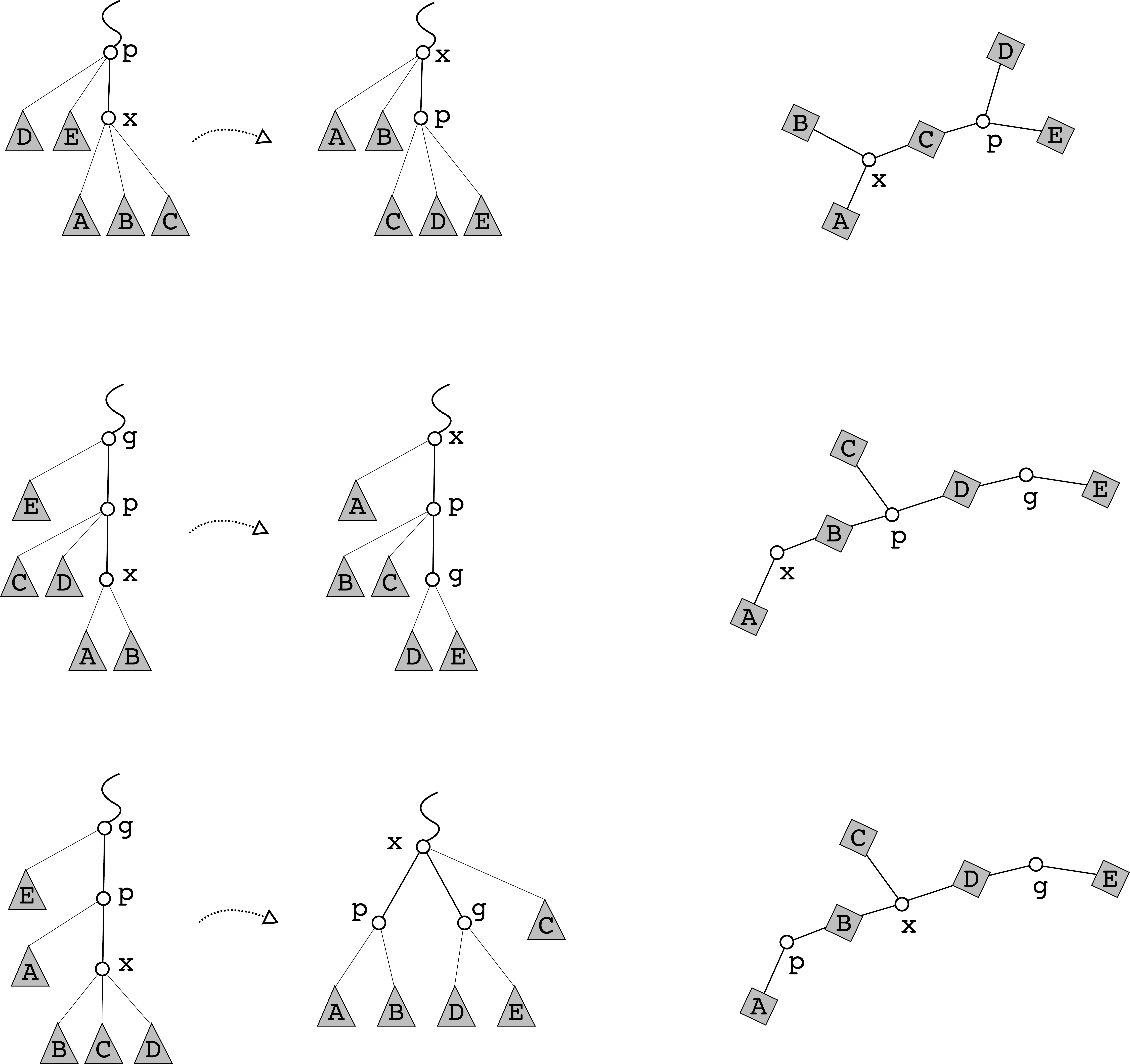}
  \caption{~~(\emph{left}) Splay transformation of search tree $T$ (from top to bottom) in ZIG, ZIG-ZIG, and ZIG-ZAG case.  ~~(\emph{right}) Underlying search space $S$. Triangles, resp.\ diamonds represent subtrees of $T$, resp.\ $S$. In contrast to classical Splay, the single-rotation (ZIG) case may be used also when $p$ is not the root. \label{fig4}}
\end{figure*}

\newpage

\section{Static optimality}\label{sec6}

As discussed in \S\,\ref{sec:intro}, Splay trees have several powerful adaptive properties, including static optimality. In this section we prove a similar property of our generalized SplayTT algorithm from \S\,\ref{sec5}. %

\restatethmc*

Let $R'$ be an \emph{optimal tree} for the search sequence $X$, i.e.\ so that $\cost_{R'}(X) = \OPT(X)$. By Lemma~\ref{lemdepth}, there is a Steiner-closed search tree $R$ on $S$ with $\cost_{R}(X) \leq 2 \cdot \OPT(X)$. It is important to note that neither $X$, nor $R',R$ are known to the algorithm, and are only used in the analysis. The initial state of SplayTT is an \emph{arbitrary} Steiner-closed STT on $S$. Let $T$ denote the state of the search tree before the operation search$(x)$. As before, let $T_x$ denote the subtree of $T$ rooted at $x$, and let us denote its node set $N_x = V(T_x)$. 

We define the \emph{node-potential} $\phi(x) =  - \min_{y \in N_x}{\bigl\{\hgt_R{(y)}\bigr\}}$ for all $x \in V(S)$. In words, the node potential is, without the minus sign, the smallest depth (in $R$) of $x$ or a descendant of $x$ (in T). The \emph{total potential} is  $\phi(T) = d \cdot \sum_{x \in V(T)} {\phi(x)}$, for a constant $d>0$ to be chosen later.\footnote{The potential function is inspired by a similar one suggested by T.~Saranurak for the analysis of classical Splay~\cite[\S\,3.2]{landscape}. For Splay (i.e.\ in the BST setting) the corresponding potential function can be seen to be essentially equivalent to the classical ``sum-of-logs'' potential of Sleator and Tarjan, but this seems not to be the case in the STT setting.}

We next bound the amortized cost of searching in $T$ by \Cref{alg:splay2} in terms of the cost in the optimal tree $R$. Let $T'$ denote the tree after searching $x$.

\begin{lemma} \label{lem:am}
With the above definitions, for some constant $c > 0$, $$\hgt_T(x) + \phi(T') - \phi(T) \leq c \cdot \hgt_R(x).$$
\end{lemma}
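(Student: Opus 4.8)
I need to prove the amortized cost bound $\hgt_T(x) + \phi(T') - \phi(T) \leq c \cdot \hgt_R(x)$ for a single search operation in SplayTT. The search consists of two phases: phase 1 splays up each branching node of the search path (bottom-to-top), and phase 2 splays $x$ to the root. Since potential telescopes across rotations, the plan is to bound, for each individual rotation (or ZIG/ZIG-ZIG/ZIG-ZAG step), the actual cost plus the change in $\phi$, and then sum.

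**Key observations about the node-potential.** First I would record the basic structural facts. The potential $\phi(x) = -\min_{y \in N_x} \hgt_R(y)$ is monotone along ancestor chains: if $p$ is the parent of $q$ in $T$, then $N_q \subsetneq N_p$, so $\min_{y \in N_p}\hgt_R(y) \le \min_{y \in N_q}\hgt_R(y)$, hence $\phi(p) \le \phi(q) \le 0$; also $\phi(x) \ge -\hgt_R(x)$ since $x \in N_x$, and $\phi(\rt(T)) = -1$ (as the root $R$-depth-$1$ node lies in every subtree... more precisely $\min$ over all nodes is $1$). Next I would observe that by Observation~\ref{p:rot-destroys-kcut}, a rotation at $q$ with parent $p$ changes only $N_p$ and $N_q$ (and $N_q^{\text{new}} = N_p^{\text{old}}$), so $\phi$ changes only at $p$ and $q$; moreover the new $\phi(q) = \phi(p)^{\text{old}}$, so the total change after one rotation is $\phi(p)^{\text{new}} - \phi(q)^{\text{old}}$, which is $\le 0$ by monotonicity — every single rotation can only \emph{decrease} potential. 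The gain per rotation is $\phi(q)^{\text{old}} - \phi(p)^{\text{new}}$.

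**The crux: ZIG-ZIG and ZIG-ZAG steps.** The heart of the argument, exactly as in the classical Splay analysis, is that a ZIG-ZIG or ZIG-ZAG step decreases the potential ``enough'' except in a controlled number of places. I expect the main obstacle to be showing that each ZIG-ZIG / ZIG-ZAG pair of rotations at $x$ (bringing $x$ two steps closer to the top, along a search path with parent $p$, grandparent $g$) yields a potential drop of at least, say, $1$, \emph{unless} $\hgt_R$ of the relevant node is small — and to make this work I would track $\mu(z) := \min_{y\in N_z}\hgt_R(y) = -\phi(z)$. The standard trick: in a ZIG-ZIG at $x$ with $p=\,$parent, $g=\,$grandparent, after the step $x$ sits where $g$ was, and one shows $\mu$ at the new occupant of the middle position is $\ge \mu(x)^{\text{old}}+1$ unless already $\mu(x)^{\text{old}}$ equals the ``local minimum'' contributed by $R$-depth — the place where this fails is charged to $\hgt_R(x)$. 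Concretely I would aim to prove: along the whole two-phase splay, the number of rotations is $O(\hgt_T(x))$ (linear, since both phases are single passes — this is stated in the text), the total potential change is $\le 0$, and the number of rotations at which the per-rotation potential-gain is $< 1$ (i.e.\ ``wasted'' rotations that we cannot charge against $\hgt_T(x)$) is $O(\hgt_R(x))$. Combining: $\hgt_T(x) \le (\text{number of rotations}) + O(1) \le (\text{potential decrease}) + O(\hgt_R(x)) = (\phi(T)-\phi(T')) + O(\hgt_R(x))$, choosing the constant $d$ in the definition of $\phi(T)$ large enough to convert the ``$\ge 1$ per good rotation'' drops into a full charge. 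Rearranging gives the lemma.

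**Why $O(\hgt_R(x))$ wasted rotations.** The intuition I would make rigorous: assign to each node $z$ on the (pre-splay) search path its value $\mu(z) \in \{1,\dots,\hgt_R(x)\}$ (it is $\le \hgt_R(x)$ since $x$ is a descendant of every path node, wait — no: $z$ is an ancestor of $x$ so $x \in N_z$, hence $\mu(z) \le \hgt_R(x)$; also $\mu$ is non-decreasing going down the path). A ZIG-ZIG/ZIG-ZAG step fails to drop the potential by a full unit only when the three nodes $x,p,g$ involved have $\mu$-values that are ``flat'' in a way forced by $R$; since $\mu$ takes values in a set of size $\hgt_R(x)$ and is monotone along the path, the number of such flat configurations is $O(\hgt_R(x))$. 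Phase 1 contributes similarly: each branching-node splay is itself a single pass with the same accounting, and there are at most $\hgt_T(x)$ branching nodes total spread over disjoint path segments. I would also need a short separate argument for the single ZIG steps (at most one per sub-splay, so $O(\hgt_R(x))$ many in the worst case, or simply absorbed). Finally I would fix $c$ and $d$ to make the arithmetic close, noting $d$ must be chosen \emph{after} $c$ (or rather, the constant $c$ in the lemma depends on $d$ and on the per-step drop). The genuinely delicate point, and where I'd spend the most care, is verifying the per-step potential inequality in the STT setting — unlike BSTs, the subtrees $N_p, N_g$ can interact through higher-degree branchings, so one must use Steiner-closedness (equivalently the $2$-cut property, Lemma~\ref{p:Steiner-2cut-equiv}) and the fact that phase 1 has already removed branching nodes from the path, guaranteeing that in phase 2 every ZIG-ZIG/ZIG-ZAG step is ``BST-like'' locally.
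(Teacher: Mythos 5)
The proposal has a fatal error at its foundation: the claim that ``every single rotation can only decrease potential'' is false. After rotating at $q$ with parent $p$, the net change is indeed $\phi(p)^{\mathrm{new}} - \phi(q)^{\mathrm{old}}$, but monotonicity does \emph{not} compare these two quantities. Monotonicity gives $\phi(p)^{\mathrm{new}} \leq \phi(p)^{\mathrm{old}}$ (because $N_p$ shrinks) and $\phi(q)^{\mathrm{old}} \leq \phi(p)^{\mathrm{old}}$ (because $q$ is a descendant of $p$); both are bounded above by the \emph{same} quantity, which says nothing about their relative order. The sets $N_p^{\mathrm{new}}$ and $N_q^{\mathrm{old}}$ are not nested, so neither inequality holds in general. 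Concretely: let $S = R = (a,b,c)$ be a path with $R$ rooted at $a$, let $T$ be the path $a \to b \to c$, and rotate at $b$. The total potential goes from $-1-2-3 = -6$ to $-1-1-3 = -5$, an \emph{increase}. This breaks the entire accounting scheme, which relies on ``potential decrease $\geq$ (number of rotations) $-$ (number of wasted rotations)'' and would need all non-wasted rotations to contribute a decrease.

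Because per-step potential changes can be positive and of non-constant magnitude (up to roughly the depth difference $\phi(g)-\phi(x)$), the ``count the $O(\hgt_R(x))$ exceptions'' scheme you sketch cannot directly close. The paper instead bounds $\Delta\phi$ for each ZIG-ZIG/ZIG-ZAG step by a \emph{telescoping} term $3(\phi'(x)-\phi(x))$ minus a constant savings of $1$, so the increase is absorbed into a sum that collapses to $O(\hgt_R(x))$ once you use $\phi(x) \geq -\hgt_R(x)$ and $\phi(\cdot) \leq 0$. The crucial ingredient you gesture at but do not pin down is \Cref{am:prop}(iv): for any subtree $S[A]$ the minimizer of $\hgt_R$ is unique. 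It is this uniqueness (not a vague ``local minimum contributed by $R$-depth'') that forces $\phi(x)+\phi'(g)+1 \leq 2\phi'(x)$ in the ZIG-ZIG case (the disjoint sets $N_x$ and $N'_g$ inside $N'_x$ cannot both contain the unique minimizer), giving the $-1$ savings. Your formulation about ``the new occupant of the middle position'' ($p$, not $g$) is also the wrong node. Finally, the two-phase structure of \Cref{alg:splay2} needs dedicated work that is not addressed: the telescoping $3(\phi'(x_i)-\phi(x_i))$ terms from the phase-1 mini-splays must themselves telescope across $i$ (the paper uses $N'_{x_i} \subseteq N_{x_{i-1}}$ for this), and the combined count of ZIG-ZIG/ZIG-ZAG steps $z_1+z_2 \geq \hgt_T(x)/4 - 1$ requires the observation that phase-1 branching nodes stay on $x$'s search path and are re-used as savings in phase 2 — otherwise many short odd-length segments in phase 1 would ruin the bound.
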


Before proving Lemma~\ref{lem:am}, we make some easy observations. 

\begin{lemma} \label{am:prop}
For an arbitrary tree $T$ of size $n$ and $\phi$ defined as above:
\begin{compactenum}[(i)]
\item The node potential satisfies $-\depth_R(x) \leq \phi(x) \leq -1$ for all $x \in V(T)$.
\item The total potential satisfies $-dn^2  \leq \phi(T) \leq 0$.
\item If $x$ is an ancestor of $y$ in $T$, then $\phi(x) \geq \phi(y)$.
\item For every subtree $S[A]$ of $S$, there is a unique $x \in A$ that minimizes $\hgt_R{(x)}$.\footnote{This property can be interpreted as saying that \emph{depth in $R$} gives a \emph{unique minimum coloring} (see e.g.~\cite{EvenSmorodinsky}) of the hypergraph formed by nodes and subtrees of $S$.}
\end{compactenum}
\end{lemma}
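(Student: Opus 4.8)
The plan is to treat the four items essentially separately: (i)--(iii) are one-line consequences of the definition $\phi(x) = -\min_{y \in N_x}\hgt_R(y)$ together with elementary facts about depths in a rooted tree, while (iv) is the one statement carrying real content, and that is where I would put the effort.

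For (i): since $x \in N_x$ we get $\min_{y \in N_x}\hgt_R(y) \le \hgt_R(x) = \depth_R(x)$, hence $\phi(x) \ge -\depth_R(x)$; and since every node of $R$ has depth at least $1$, the minimum is at least $1$, hence $\phi(x) \le -1$. For (ii): summing (i) over the $n$ nodes gives $-n^2 \le \sum_x(-\depth_R(x)) \le \sum_x \phi(x) \le -n \le 0$, using $\depth_R(x) \le n$, and multiplying by $d > 0$ yields $-dn^2 \le \phi(T) \le 0$. For (iii): if $x$ is an ancestor of $y$ in $T$ then $N_y = V(T_y) \subseteq V(T_x) = N_x$, so the minimum of $\hgt_R$ over $N_x$ is no larger than over $N_y$, and negating gives $\phi(x) \ge \phi(y)$.

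The main work is (iv). Existence is immediate since $A$ is a nonempty finite set, so I would argue uniqueness by contradiction: assume $x \ne x'$ in $A$ both attain $\min_{z \in A}\hgt_R(z)$, and let $v$ be the lowest common ancestor of $x$ and $x'$ in $R$. If $v$ equalled $x$ (or $x'$), then $x$ would be a proper ancestor of $x'$ in $R$, forcing $\hgt_R(x) < \hgt_R(x')$ and contradicting $\hgt_R(x) = \hgt_R(x')$; so $v$ is a proper ancestor of both and $\hgt_R(v) < \hgt_R(x)$. The crucial step is to show $v \in A$. At $v$ the two search paths split, so $x \in V(R_c)$ and $x' \in V(R_{c'})$ for two \emph{distinct} children $c, c'$ of $v$ in $R$. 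By \Cref{obs1}, $R_v$ is a search tree on $S[V(R_v)]$ and its subtrees $R_c, R_{c'}$ are search trees on two distinct connected components of $S[V(R_v)] \setminus v$; hence $v$ separates $x$ and $x'$ in $S$, i.e.\ $v$ lies on the $S$-path between $x$ and $x'$. Since $S[A]$ is connected and contains $x, x'$, that entire path lies in $A$, so $v \in A$ — contradicting minimality of $\hgt_R(x)$. Hence $x = x'$.

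I expect the only non-routine point to be the observation underlying (iv) that the $R$-ancestor at which two search paths diverge must separate the two nodes in $S$; once that is in hand, connectedness of $S[A]$ closes the argument immediately. It is worth noting that this step uses only that $R$ is an STT, not that it is Steiner-closed, and that the hypothesis ``$S[A]$ is a subtree'' is exactly what supplies the needed connectedness and non-emptiness.
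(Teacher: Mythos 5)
Your proof is correct and follows essentially the same route as the paper's: (i)--(iii) are read off the definition of $\phi$, and (iv) is the lowest-common-ancestor-in-$R$-must-separate-in-$S$-hence-lies-in-$A$ argument, which is exactly the paper's (more tersely stated) reasoning. Your write-up merely spells out the details the paper leaves implicit, namely why the LCA is a \emph{proper} ancestor (else the two depths could not be equal) and why it separates the two nodes in $S$ (via \Cref{obs1} applied at the LCA).
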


Lemma~\ref{lem:am} and Lemma~\ref{am:prop}(ii) together imply Theorem~\ref{thm4} by a standard amortization argument: since the work performed by SplayTT is at most $c' \cdot \hgt_T(x)$, for some constant $c' > 0$, by scaling $\phi$ with the same constant, we obtain the upper bound $cc' \cdot \depth_R(x) \in \fO(\depth_R(x))$ on the amortized cost of searching. This telescopes over the sequence of searches, yielding the bound $O(\OPT(X))$ on $\cost_T(X)$, with an additive term $O(n^2)$ due to the final potential.

In the remainder we prove the two lemmas.

\begin{proof}[Proof of \Cref{am:prop}] \ \\
\vspace{-0.15in}
\begin{compactenum}[(i)]
\item Follows from $1 \leq \hgt_R(x) \leq n$ and $\phi(x) \geq - \hgt_R(x)$ for all $x \in V(T)$.
\item Immediate by summing (i).
\item Follows as the minimum in $\phi(x)$ is taken over a larger set.
\item Suppose there are two nodes $x_1, x_2$ of minimum depth in $R$. Then they have a proper least common ancestor $x$ in $R$ and since $x$ separates $x_1,x_2$ in $S$, we have $x \in A$.\qedd
\end{compactenum}
\let\qed\relax
\end{proof}

\begin{proof}[Proof of \Cref{lem:am}]
\ \\
Both phases consist of ZIG, ZIG-ZIG, and ZIG-ZAG steps, as defined in Algorithm~\ref{alg:splay}.

We bound the \emph{increases} in node potential due to these elementary steps, denoted $\Delta \phi$, then sum up the individual increases, multiplied by $d$ to get a bound on the total increase in potential. We refer to node names as in Algorithm~\ref{alg:splay} and Figure~\ref{fig4}. By $\phi$ and $\phi'$ we denote respectively, the potential before and after the elementary step. By $N_x$ and $N'_x$ we denote the set of nodes in the subtree rooted at $x$ before, resp.\ after the elementary step.

\medskip

\noindent\textbf{ZIG:} Only nodes $p$ and $x$ change potential.
\begin{align*}
\Delta \phi & = \phi'(p) + \phi'(x) - \phi(p) - \phi(x)\\ & \leq 2 (\phi'(x) - \phi(x)) \tag*{$\triangleright$ by Lemma~\ref{am:prop}(iii)} \\ & \leq 3 (\phi'(x) - \phi(x)).\tag*{$\triangleright$ since $\phi'(x) \geq \phi(x)$}
\end{align*}
\textbf{ZIG-ZIG:} Only nodes $g$, $p$, and $x$ change potential.

\noindent \Cref{am:prop}(iv) implies that there is a unique node $y \in N'_x$ with minimum depth in $R$. Since $N_x \cap N'_g = \emptyset$ and $N_x \cup N'_g \subseteq N'_x$, either $y \notin N_x$, which implies $\phi'(x) \geq  \phi(x) + 1$, or $x \notin N'_g$, which implies $\phi'(x) \geq \phi'(g)+1$. Thus, $\phi(x) +\phi'(g) + 1 \leq 2\phi'(x)$. It follows that:
\begin{align*}
\Delta \phi & = \phi'(g) + \phi'(p) + \phi'(x) - \phi(g) - \phi(p) - \phi(x) \\
& =  \phi'(g) + \phi'(p) - \phi(p) - \phi(x) \tag*{$\triangleright$ since $\phi'(x) = \phi(g)$}\\
& \leq (2\phi'(x) - \phi(x) - 1) + \phi'(p) - \phi(p) - \phi(x) \tag*{$\triangleright$ by Lemma~\ref{am:prop}(iv)}\\
& \leq 3 (\phi'(x) - \phi(x)) - 1.\tag*{$\triangleright$ by Lemma~\ref{am:prop}(iii)}
\end{align*}

\noindent \textbf{ZIG-ZAG:} Only nodes $g$, $p$, and $x$ change potential.

\noindent Since $N'_g \cap N'_p = \emptyset$ and $N'_g \cup N'_p \subseteq N'_x$, by Lemma~\ref{am:prop}(iv), we have either $\phi'(x) \geq \phi'(g) + 1$, or $\phi'(x) \geq \phi'(p) + 1$, so $\phi'(g) + \phi'(p) + 1 \leq 2 \phi'(x)$. 
It follows that:
\begin{align*}
\Delta \phi & = \phi'(g) + \phi'(p) + \phi'(x) - \phi(g) - \phi(p) - \phi(x) \\
& =  \phi'(g) + \phi'(p) - \phi(p) - \phi(x) \tag*{$\triangleright$ since $\phi'(x) = \phi(g)$}\\
& \leq  \phi'(g) + \phi'(p) - \phi(p) - \phi(x) + (2\phi'(x)-\phi'(g) - \phi'(p) - 1) \tag*{$\triangleright$ by Lemma~\ref{am:prop}(iv)}\\
& \leq 2 (\phi'(x) - \phi(x)) - 1 \tag*{$\triangleright$ by Lemma~\ref{am:prop}(iii)}\\
& \leq 3 (\phi'(x) - \phi(x)) - 1.\tag*{$\triangleright$ since $\phi'(x) \geq \phi(x)$}
\end{align*}

\noindent Observe that the potential-increases due to the elementary steps telescope within the splaying operation, yielding an increase of $3(\phi'(x) - \phi(x)) - z$, where $\phi'(\cdot)$ and $\phi(\cdot)$ denote the node potentials before and after the splaying and $z$ is the number of ZIG-ZIG and ZIG-ZAG operations.

Recall that in phase $1$ we splay the branching nodes $x_k, \dots, x_1$ on the search path in bottom-to-top order. Each branching node $x_i$ is splayed up until it becomes a child of $x_{i-1}$, or, in case of $x_1$, until it becomes the root. After phase $1$, the search path of $x$ consists of $(x_1, \dots, x_k)$ possibly followed by a number of nodes (that were initially between $x_k$ and $x$). Let us now denote by $N_y$, resp.\ $N'_y$, the set of nodes in the subtree rooted at node $y$ before, resp.\ after phase $1$. We have $N'_{x_i} \subseteq N_{x_{i-1}}$, for all $i=k, \dots, 2$, so $\phi(x_{i-1}) \geq \phi'(x_i)$. By Lemma~\ref{am:prop}(i)(iii), $\phi(x_k) \geq \phi(x) \geq -\hgt_R(x)$, and $\phi'(\cdot) < 0$.

It follows that the total potential increase during phase $1$ is at most 
\begin{align*}
\Delta \phi^{(1)} & \leq d\Bigl(-z_1 + 3\sum_{i=1}^k{\bigl(\phi'(x_i) - \phi(x_i)\bigr)}\Bigr) \\
& \leq 3d\Bigl(\phi'(x_1) - \phi(x_k)\Bigr) - dz_1 \\
& \leq 3d \cdot \hgt_R(x) - dz_1, \tag*{$\triangleright$ by Lemma~\ref{am:prop}(i)(iii)}
\end{align*}
 where $z_1$ is the number of ZIG-ZIG and ZIG-ZAG steps in this phase.

\medskip

In phase 2 we splay $x$ to become the root, so the total potential increase in this phase is at most 
\begin{align*} 
\Delta \phi^{(2)} & \leq 3d \Bigl(\phi'(x) - \phi(x) \Bigr) -  dz_2 \\
& \leq 3d \cdot \hgt_R(x) - dz_2, \tag*{$\triangleright$ by Lemma~\ref{am:prop}(i)}
\end{align*}
 where $z_2$ is the number of ZIG-ZIG and ZIG-ZAG steps in this phase.

Denoting $z=z_1+z_2$, the total increase in potential during the two phases is therefore at most
$6d \cdot \hgt_R(x) - d z$.
Intuitively, we save constant potential on every ZIG-ZIG and ZIG-ZAG step, which we need to offset the actual cost of $\fO(\depth_T(x))$. In the first phase we may have the unfavorable situation that there are many branching nodes, with short \emph{odd-length} gaps between them, in which case we lose potential due to the many ZIG-steps. But then, since the branching nodes stay on the search path for the second phase, they form a long contiguous splaying path on which we make up for the loss.

For $i=2, \dots, k$, let $h_i$ denote the number of non-branching nodes between $x_{i-1}$ and $x_{i}$ at the beginning of phase $1$, let $h_1$ denote the non-branching nodes above $x_1$ on the path to the root, and let $h_{k+1}$ denote the number of nodes between $x$ and $x_k$. In the first phase, The number of ZIG-ZIG and ZIG-ZAG operations that $x_i$ participates in is $\lfloor h_i/2 \rfloor$, for all $i$.

Observe that $\hgt_T(x) = k + 1 + \sum_{i=1}^{k+1}{h_i}$. We thus have: 
\begin{align*}
z_1 ~~=~~ & \sum_{i=1}^k{\left\lfloor \frac{h_i}{2} \right\rfloor} ~\geq~ \sum_{i=1}^k{ \frac{h_i -1}{2} } ~=~ \frac{\hgt_T(x)  - h_{k+1}-1}{2} - k.
\end{align*}

In the second phase, $x_1, \dots, x_k$ are still ancestors of $x$ but no longer branching. In addition, the original nodes on the search path between $x_k$ and $x$ are still the ancestors of $x$. Thus, at the beginning of the phase $x$ has at least $k+h_{k+1}$ ancestors, and therefore:

\begin{align*}
z_2 ~~\geq~~ & \left\lfloor \frac{k + h_{k+1}}{2} \right\rfloor ~\geq~ \frac{k+h_{k+1}-1}{2}.
\end{align*}

In both phases together, observing that $k \leq \hgt_T(x)/2$, we have:
\begin{align*}
z ~~=~~ & z_1+z_2 ~\geq~ \frac{\hgt_T(x)-k}{2}-1 ~\geq~ \frac{\hgt_T(x)}{4} - 1.
\end{align*}
The total increase in potential is $6d \cdot \depth_R(x) - d/4 \cdot \depth_T(x) - O(1)$.

Choosing $d=4$, we obtain $\hgt_T(x) + \phi(T') - \phi(T) \leq 24 \cdot \hgt_R(x) + O(1) \leq \fO(\hgt_R(x))$. \qedd

\end{proof}

\paragraph{Remark.} The additive term $O(n^2)$ is an artifact of the proof technique, arising as an upper bound on the total potential $|\phi(T)|$. A finer bound of $|\phi(T)| \leq d \cdot \sum_{x \in V(S)}{\hgt_R(x)}$ is immediate from Lemma~\ref{am:prop}(i)(ii). This shows that (1) the $O(n^2)$ upper bound is rather loose, unless the optimal tree $R$ is very unbalanced, and (2) if every node is searched \emph{at least once} in $X$, then $|\phi(T)| \leq d \cdot \OPT(X)$, and the additive term can be absorbed into $\fO(\OPT)$. %

\vfill

\begin{figure*}[h]
  \centering
  \includegraphics[width=8cm]{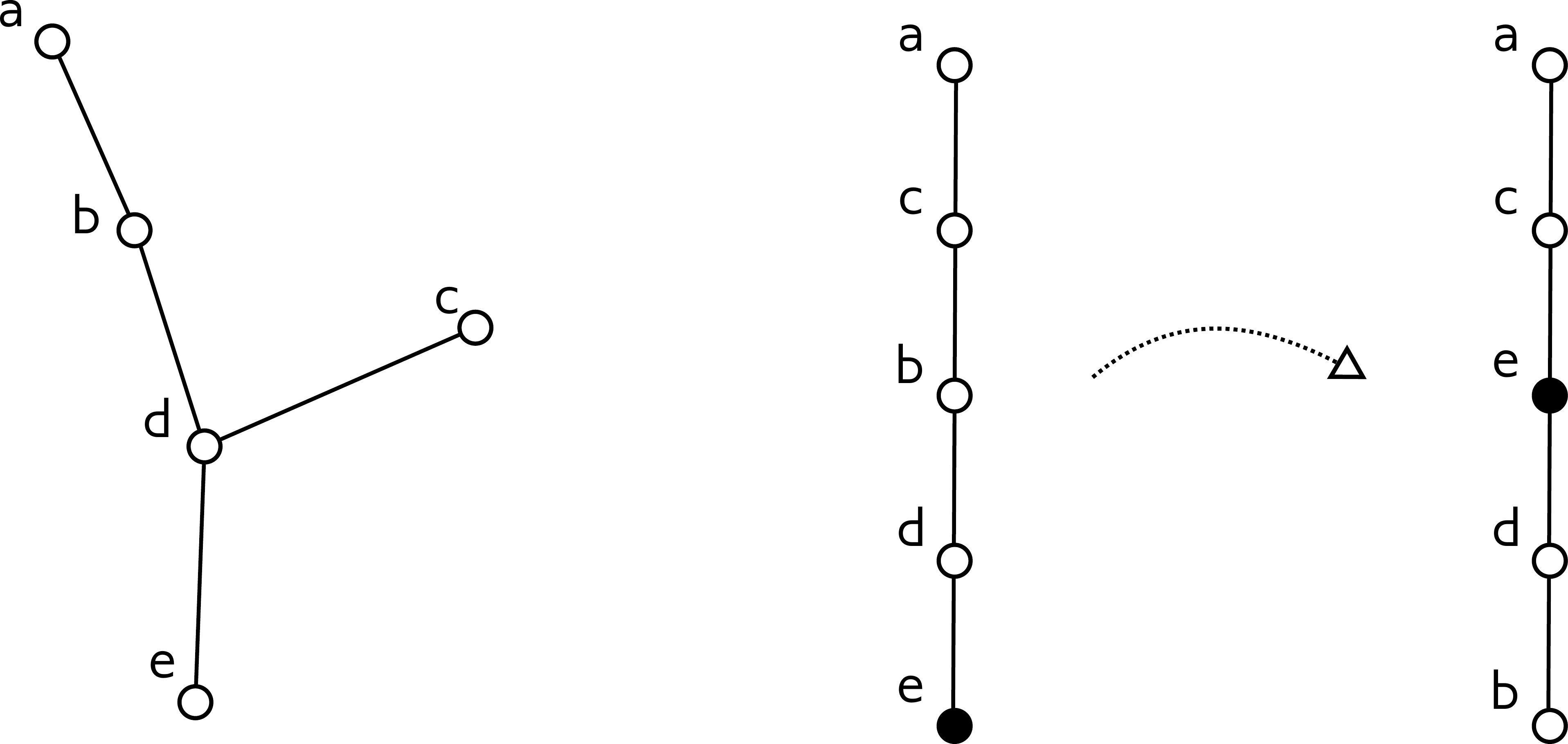}
  \caption{~~Example showing that na\"{i}vely splaying in a Steiner-closed tree may destroy the property. (\emph{left}) Underlying tree $S$. (\emph{right}) Splaying node $e$. After the first (ZIG-ZIG) step, the search tree is no longer Steiner-closed. \label{fig7}}
\end{figure*}

\begin{figure*}
  \centering
  \includegraphics[width=16cm]{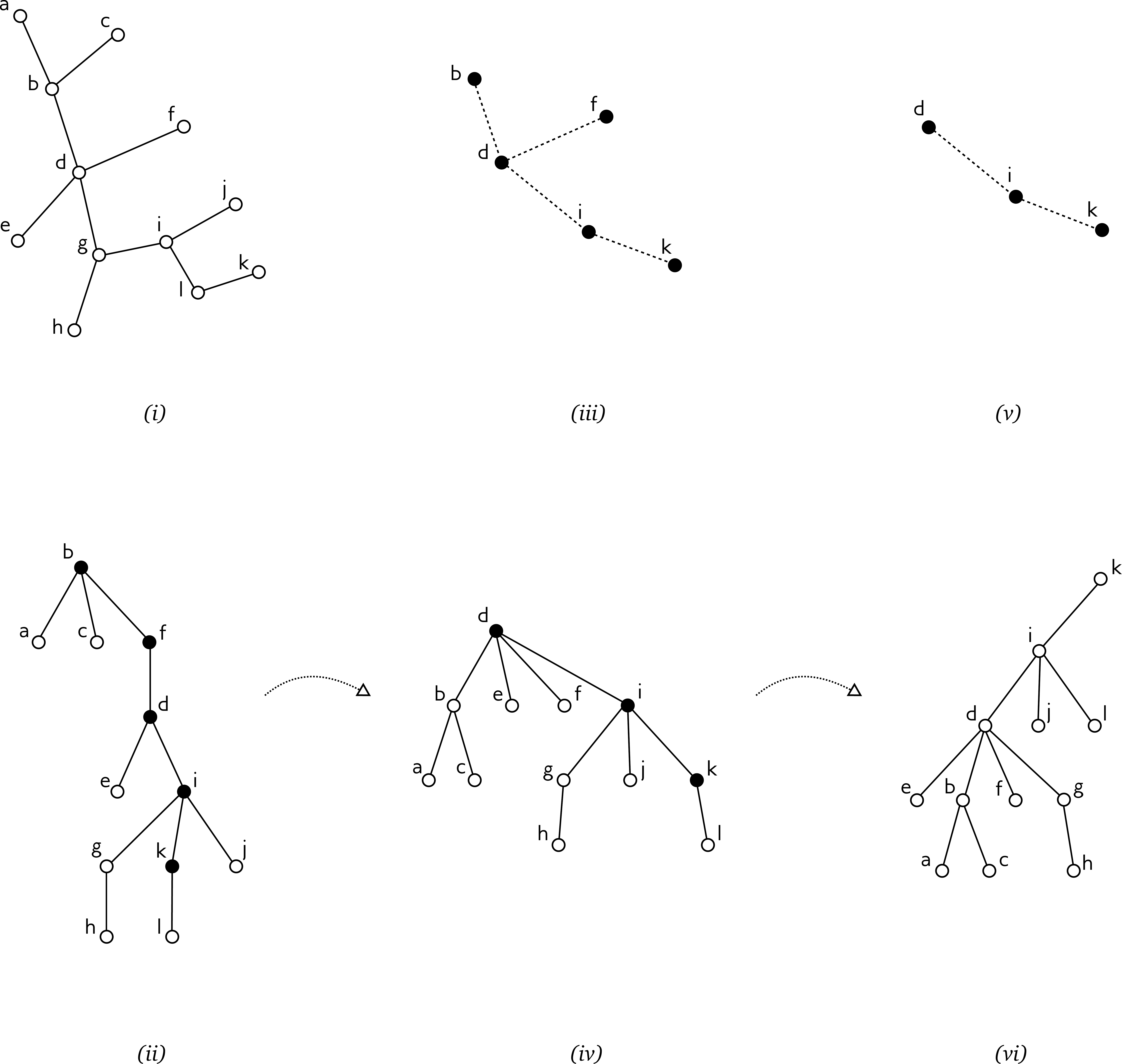}
  \caption{~~Example operation in SplayTT: searching for node $k$. \\
  (\emph{i})~Underlying tree $S$.\\ (\emph{ii})~Steiner-closed search tree $T$ on $S$ with search path of node $k$ shaded. \\(\emph{iii}) Nodes on search path of $k$ shown within $S$, with paths between nodes indicated with dotted line. Observe that $d$ is the only branching node on the search path.\\ (\emph{iv})~Result of the first phase of splaying. The branching node $d$ is splayed up to the root with a single ZIG-ZAG step. Remaining search path of $k$ shaded. \\(\emph{v})~Remaining search path of $k$ shown within $S$. Observe that there are no remaining branching nodes.\\ (\emph{vi})~Result of the second phase of splaying. Node $k$ is splayed up to the root with a single ZIG-ZIG step. \label{fig6}}
\end{figure*}

\newpage

\if 0
\section{Greedy search trees}

...Maybe remove to keep paper more focused and explain in a paragraph...

As an illustration of the technique, we describe an STT algorithm inspired by Greedy BST~\cite{Luc88, Mun00, DHIKP09}. We call this new algorithm GreedyTT.

We first review the (offline) algorithm Greedy BST, adapted to the terminology of SSTs (with the underlying tree $S$ a path). Suppose we have the search sequence $X = (x_1, \dots, x_m)$. To search for $x = x_i$, we follow the search path $P_{x}$ in the current tree $T$. We define the set of \emph{path-parents} $\pp(y)$ of an arbitrary node $y \in V(S)$ as follows. If $y$ is in $P_x$, then $\pp(y) = \{y\}$. Otherwise, $p$ is a \emph{path-parent} of $y$, denoted $p\in \pp(y)$, if $p$ is in $P_x$ and the path between $p$ and $y$ in $S$ does not contain any other node of $P_x$.

Observe that in a BST, every node not on the search path has at most two path-parents, the nodes on the search path that bracket it on the left and on the right in the underlying search space $S$.

We now describe the Greedy BST operation (in a slightly adapted way compared to~\cite{Mun00, Luc88, DHIKP09}, to facilitate generalization).

For all $y \in P_x$ let $t(y)$ denote the earliest time when a node is searched, whose path-parent is $y$, i.e.\ $t(y) = \min{\{j~|~ j>i, \mathrm{~and~} y \in \pp(x_j)  \}}$, letting $t(y) = \infty$ if no such node is ever searched again. Let $\tau$ be the treap with node-set $P_{x}$ obeying the search tree property and with priority $t(y)$ for all $y \in P_x$, breaking ties arbitrarily. %

Replace $P_{x_i}$ by $\tau$ in $T$, obtaining a new tree $T'$. Intuitively, Greedy BST re-arranges the search path by ``earliest future access time'' of the nodes, or nodes in the subtrees whose predecessor or successor is the given node.

\medskip

We can extend Greedy to the STT setting in a straightforward way, using treaps on trees instead of simple trees. 

For a search $x = x_i$ with search path $P_x$, we define the path-parent set and the priority function on nodes of $P_x$ in the identical way. 
We let $\tau$ be the treap on tree with node set $P_{x}$ obeying the search tree property on $S[P_{x}]$ and with priority $t(y)$ for all $y \in P_x$. We wish to replace $P_{x}$ by $\tau$. Unfortunately $\tau$ may not be Steiner-closed, so we construct instead, by the algorithm of Lemma~\ref{lemdepth} a Steiner-closed tree $\scl(\tau)$. Computing $\scl(\tau)$ from $\tau$ involves $O(n^2)$ computation outside the STT model. We then rotate the search path $P_{x_i}$ to the tree $\scl(\tau)$ by the rotation sequence obtained in Theorem~\ref{thm3}, with a number of operations linear in $|P_{x_i}|$, i.e.\ in the cost of the search. By Lemma~\ref{lemrepl}, the entire tree $T'$ obtained from $T$ is Steiner-closed. We call the resulting (offline) algorithm \emph{GreedyTT}. 

\medskip

We make a brief observation, that although not used further, may provide a useful intuition on why Steiner-closed trees preserve several properties of BSTs. We earlier observed that for BSTs, the number of path-parents of a node is at most $2$. Although this is not true for STTs, it is, in fact, true for Steiner-closed STTs. (This is implied by Lemma~\ref{lem:adm}, since each path-parent of $x$ in $T$ contributes an edge to $\cut{(T_x)}$.)

\medskip 

Greedy BST is known to have powerful adaptive properties, essentially matching or exceeding the known properties of Splay trees (see e.g.\ \cite{FOCS15,LI16}). There is reason to believe that GreedyTT is similarly efficient in the STT model, due to the identical use of future information, and the depth-preserving property of the transformation in Lemma~\ref{lemdepth}. 

We intend GreedyTT as a showcase of the linear-time-rotation in the STT model. An analysis of GreedyTT requires, however, the development of further techniques, which we leave for future work.

\paragraph{Treaps on trees.} A binary treap $T$ (also known as a Cartesian tree~\cite{Vuillemin, SeidelA96}) is a BST whose nodes are indexed by \emph{pairs} of values, so that $T$ fulfills the search tree property w.r.\ to the first entry of each pair, and the (min-)heap property w.r.\ to the second entry of each pair. The first entries are usually referred to as keys, and the second entries as priorities. In case of equality, ties in priorities can be broken arbitrarily. It is a well-known fact that with distinct priorities, the treap for a given set of keys is unique. 

We generalize treaps to the STT setting in a straightforward way. Consider an STT $T$ of size $n$ on $S$, and associate with each node a priority from an ordered set. We call $T$ a \emph{treap on the tree} $S$ if the nodes of $T$ fulfill, in addition to being a valid STT on $S$, the min-heap property w.r.\ to the priorities of the nodes. Observe that in contrast to treaps, treaps on trees are not necessarily binary. Assuming that the keys are sorted, there are several linear-time methods for building a binary treap, see e.g.\ \cite{GBT}. For treaps on trees, for our current purposes, a na\"ive $O(n^2)$ construction method suffices: make the smallest priority node $r$ the root, and recursively treapify the connected components of $S \setminus r$.

\newpage

\fi

\newpage

	\newpage 
	
	\bibliography{paper}{}
	\bibliographystyle{alpha}
	
	\newpage

\begin{appendix}
\section{Leaf centroid}\label{appa}
\begin{lemma}
	Every tree $S$ with $n \ge 3$ nodes has a leaf centroid, that can be found in $\fO(n)$ time.
\end{lemma}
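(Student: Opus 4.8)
The statement to prove is: every tree $S$ with $n \ge 3$ nodes has a leaf centroid, and it can be found in $\fO(n)$ time. Recall the definition: a non-leaf node $v$ is a leaf centroid if every connected component of $S \setminus v$ has at most $\floor{\ell/2}+1$ leaves, at most $\floor{\ell/2}$ of which are leaves of $S$ (here $\ell$ is the number of leaves of $S$). The plan is to mimic the classical centroid argument: define for each non-leaf node $v$ a ``weight vector'' over its components (counting leaves of $S$ in each component), and show that the node minimizing the maximum such weight is a leaf centroid; then argue the search can be implemented by walking toward the heavy component.

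**Step 1: reduce to counting leaves of $S$.** First I would observe that in a component $C$ of $S \setminus v$, the leaves of the induced subtree $S[V(C)]$ are either leaves of $S$, or the unique neighbor of $v$ inside $C$ (which becomes a leaf after deleting $v$). Hence $C$ has at most (number of leaves of $S$ in $C$) $+ 1$ leaves as a tree. So the condition ``at most $\floor{\ell/2}+1$ leaves, at most $\floor{\ell/2}$ of them leaves of $S$'' collapses to the single requirement: every component of $S \setminus v$ contains at most $\floor{\ell/2}$ leaves of $S$. This is the key simplification, and it makes the problem structurally identical to the ordinary centroid problem but with leaves as the ``mass.''

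**Step 2: existence via a minimax / descent argument.** For a non-leaf $v$, let $f(v)$ be the maximum, over components $C$ of $S\setminus v$, of the number of leaves of $S$ in $C$. I would pick $v$ minimizing $f(v)$ among non-leaf nodes (non-empty since $n\ge 3$ forces at least one non-leaf) and claim $f(v) \le \floor{\ell/2}$. Suppose not: then some component $C$ has more than $\floor{\ell/2}$ leaves of $S$, so all other components together (plus $v$'s own leaf-status, which is none since $v$ is a non-leaf) have strictly fewer than $\ell - \floor{\ell/2} = \ceil{\ell/2} \le \floor{\ell/2}+1$, i.e.\ at most $\floor{\ell/2}$ leaves. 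Let $u$ be the neighbor of $v$ in $C$. If $u$ is a non-leaf, consider its components: the one containing $v$ has all $\ell$ leaves except those strictly inside $C$ ``below'' $u$ — a standard computation shows this component has $\le \ell - \ceil{\ell/2} - (\text{something positive}) < f(v)$ leaves, and every other component of $S\setminus u$ is a sub-component of $C$, hence has fewer leaves than $C$ had, so $f(u) < f(v)$, contradicting minimality. The one edge case is when $u$ is itself a leaf of $S$, i.e.\ $C = \{u\}$ is a single leaf; then $f(v) \le 1 \le \floor{\ell/2}$ already (using $\ell \ge 2$, which holds since $n\ge 3$), contradicting the assumption. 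I'd also handle $\ell = 1$, impossible for $n \ge 2$, and small $n$ directly. This gives existence.

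**Step 3: linear-time computation.** Root $S$ at an arbitrary node and compute, by one DFS, for every node the number of leaves of $S$ in its subtree; total leaf count $\ell$ is then known. For a candidate $v$, the leaf-counts of the components of $S\setminus v$ are: the subtree-leaf-counts of each child of $v$, and $\ell$ minus the subtree-leaf-count of $v$ for the ``upward'' component. So $f(v)$ is computable in time proportional to $\deg(v)$. The descent of Step 2 is monotone — $f$ strictly decreases along the walk — and each node is visited at most once, with per-node work proportional to its degree; summed over a path this is $\fO(n)$ total (or one can just start anywhere and follow ``the child/parent leading to the heavy component'' until the condition holds). I expect the \textbf{main obstacle} to be bookkeeping the boundary cases of the existence argument precisely — in particular verifying that the heavy component cannot be a single leaf once we're at the minimizer, and getting the off-by-one right in the inequality $\ell - \floor{\ell/2} \le \floor{\ell/2}+1$ versus the stricter $\le \floor{\ell/2}$ needed — but this is routine parity arithmetic rather than a conceptual difficulty.
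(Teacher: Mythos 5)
Your Step 1 reduction (the condition collapses to ``every component of $S\setminus v$ has at most $\lfloor \ell/2\rfloor$ leaves of $S$'') is correct and matches what the paper observes. The gap is in Step 2. You claim that if $v$ is a non-leaf node with a heavy component $C$ (more than $\lfloor\ell/2\rfloor$ leaves of $S$) and $u$ is the neighbor of $v$ inside $C$, then $f(u) < f(v)$ because ``every other component of $S\setminus u$ is a sub-component of $C$, hence has fewer leaves than $C$ had.'' This is false when $u$ has degree $2$ and is not a leaf of $S$: then $S\setminus u$ has exactly one component on the far side from $v$, namely $V(C)\setminus\{u\}$, and since the removed node $u$ is not a leaf of $S$, this component has exactly the same number of leaves of $S$ as $C$ did. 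So $f(u) = f(v)$ and you get no contradiction with minimality of $f(v)$, and your claimed strict monotonicity of the descent (used again in Step 3 for termination) also fails along any chain of degree-$2$ nodes inside $C$.

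The missing idea is precisely what the paper supplies: instead of comparing nodes by $f(v)$ alone, it compares by the lexicographic pair $t(v)=(\ell(v),k(v))$, where $k(v)$ is the \emph{node count} of the heavy component. At a degree-$2$ node $u$ as above, the leaf count ties but the node count strictly drops (the heavy component shrinks by one node), so $t(u)$ is strictly smaller and the descent still terminates; the paper then handles the degree-$\ge 3$ case and the degree-$2$ case separately. Alternatively you could repair your argument without the tie-breaker by iterating along the degree-$2$ chain until you hit a branching node or a leaf of $S$ inside $C$ and arguing strict decrease there, but as written the single-step contradiction ``so $f(u)<f(v)$'' does not hold and the proof is incomplete. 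Everything else (the reduction to leaf-of-$S$ counts, the $\ell\ge 2$ sanity checks, and the $\fO(n)$ implementation via one DFS plus a walk charging queries to edges) is sound and essentially the same as the paper's.
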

\begin{proof}
	Let $L$ be the set of leaves of $S$ and let $\ell = |L|$. For an arbitrary node $v \in V(S)$, let $t(v) = (\ell(v), k(v))$ be the lexicographically largest tuple $(|L \cap V(C)|, |V(C)|)$ among the connected components $C$ of $S \setminus v$. We claim that if $\ell(v) > \floor{\frac{\ell}{2}}$, then there is a neighbor $u$ of $v$ such that $t(u)$ is lexicographically smaller than $t(v)$. By repeatedly moving to such a neighbor we will find a node $v$ with $\ell(v) \le \floor{\frac{\ell}{2}}$, as there are only finitely many possible values for $t(v)$. 
	Each connected component $C$ of $S \setminus v$ %
	has at most one leaf that is not a leaf of $S$, namely the node that is adjacent to $v$. %
	It follows that $v$ is a leaf centroid. %
	
	We now prove the claim. Let $v \in V(S)$ with $\ell(v) > \floor{\frac{\ell}{2}}$. Let $C$ be the connected component of $S \setminus v$ with $|L \cap V(C)| = \ell(v) > \floor{\frac{\ell}{2}}$ and let $u$ be the neighbor of $v$ in $S$ which is contained in $V(C)$. Let $S' = S[V(S) \setminus V(C)]$.
	
	Clearly, $S'$ is one connected component of $S \setminus u$. First, consider the case when $u$ has degree at least 3 in $S$. Then $S \setminus u$ has at least two connected components besides $S'$, each containing at least one leaf of $S$. Each of these components thus contains at most $|L| - |V(S')\cap L| - 1 \le |V(C)| - 1 < \ell(v)$ leaves of $S$. Moreover, $|S' \cap L| \le |L| - |L \cap V(C)| < \ell - \floor{\frac{\ell}{2}} = \ceil{\frac{\ell}{2}} \le \ell(v)$. Thus, we have $\ell(u) < \ell(v)$.
	
	Second, consider the case when $u$ has degree 2 in $S$. Then $S \setminus u$ has two connected components, $S'$ and, say, $S''$. We have $|L \cap V(S'')| = |L \cap V(C)| > \floor{\frac{\ell}{2}}$, i.e.\ $S''$ has more than half the leaves of $S$, and in particular $|L \cap V(S'')| > |L \cap V(S')|$. Thus, by definition, $t(u) = (|L \cap V(S'')|, |V(S'')|)$. But $|V(S'')| = |V(C)| - 1 = k(v) - 1$. As such, $t(u)$ is lexicographically smaller than $t(v)$.
	
	The tuples $t(v)$ can be computed for all nodes $v$ in $O(n)$ time via a depth-first traversal of $S$. When deciding where to move from $v$, we need to query the tuples of neighbors of $v$, and pick the smallest. For each neighbor $u$ we charge the query to the (oriented) edge $(v,u)$ of $S$. As we cannot revisit a vertex, each edge is charged at most once. \qedd
\end{proof}

\newpage
\section{Improvement of Lemma~\ref{p:depth_approx}}\label{appb}

We describe a modification of Algorithm~\ref{alg:trans-st}, slightly improving the depth-approximation factor. 
The idea is to execute the root replacement of \Cref{alg-line10} only when necessary. Suppose we call $\Call{Fix}{T, x}$. Let $A = V(T_x)$ and suppose $|\delta(A)| = k$, and $x$ is a $k$-admissible root of $A$ (see \S\,\ref{sec:k-cut-opt-find}). Then, we can simply recurse on the children, as in Lines~\ref{alg-line:just-recurse-1}--\ref{alg-line:just-recurse-2}. From the discussion in \S\,\ref{sec:k-cut-opt-find} it is clear that the result is still a $k$-cut tree.

Now suppose that $x$ is not a $k$-admissible root, i.e.\ $x \notin \ch(\delta(A))$. Then, in \Cref{alg-line:leaf-centroid}, we choose $v$ to be the leaf centroid of the tree $\ch(\{x\} \cup \delta(A))$ instead of $\ch(\delta(A))$. Observe that the maximum boundary size of components of $S[A] \setminus v$ is $\floor{\frac{k+1}{2}}+1 = \ceil{\frac{k}{2}}+1 \le k$, so the algorithm is still correct. 

We consider how the depth of each node changes by this modified transformation. Let $C$ be the component of $S[A] \setminus v$ that contains $x$. When changing $T_x$ to $T'_v$, a node $u$ in $C$ may either gain $v$ as an ancestor, or otherwise only some nodes on the search path of $u$ are permuted. The depth of $u$ thus increases by at most one. All other nodes in $A \setminus V(C)$ lose $x$ as an ancestor and may, at worst, gain $v$ as an ancestor, their depth thus does not increase.

We now argue that, while following the search path from the root to some node $u$ in the final tree $T^*$ (which corresponds to the recursion tree of $\Call{Fix}{}$), the depth of a node can only increase every $\floor{k/2}+1$ steps. We already observed that an increase in depth can only happen in the following situation: $x$ is not a $k$-admissible root of $A = V(T_x)$ and $u$ is in the same connected component $C$ of $S[A] \setminus v$ as $x$. We know that $v \in \delta(C)$. As $x \in V(C)$, and because of our choice of $v$, there are at most $\floor{\frac{k+1}{2}}-1 = \ceil{k/2}-1$ other nodes in $\delta(C)$. This means that after executing the non-recursive portions of $\Call{Fix}{T,x}$ (which replaced $x$ with $v$) and then $\Call{Fix}{T',x}$ (which cannot change the tree), the depth cannot increase for the next $\floor{k/2}$ calls on the path to $u$.

Thus, the depth of each node increases by a factor of at most $1 + 1/\floor{k/2}$. We obtain the following. %

\begin{lemma} [Strengthening of Lemma~\ref{p:depth_approx}]\label{p:depth_approx-improved}
	Given an STT $T$ on $S$ and $k \ge 2$, we can find a $k$-cut STT $T^*$ on $S$, so that $\depth_{T^*}(x) \le (1+\epsilon_k) \cdot \depth_T(x)$ for all $x \in V(S)$, where $$\varepsilon_k = \frac{1}{\left\lfloor \frac{k}{2} \right\rfloor}.$$
\end{lemma}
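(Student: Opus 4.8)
The plan is to run a lightly modified version of Algorithm~\ref{alg:trans-st}, and then re-use almost verbatim the depth-increase bookkeeping from the proof of Lemma~\ref{p:depth_approx}; the modification avoids ``premature'' root replacements, which is exactly where the unmodified procedure loses a factor. In the call $\Call{Fix}{T,x}$ with $A = V(T_x)$ I would distinguish three cases. If $|\delta(A)| < k$, recurse on the children of $x$ as in Lines~\ref{alg-line:just-recurse-1}--\ref{alg-line:just-recurse-2}. If $|\delta(A)| = k$ and $x \in \ch(\delta(A))$, then $x$ is already a $k$-admissible root of $A$ by Lemma~\ref{lem:admr}(ii), so again recurse on the children with no rotation. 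Only if $|\delta(A)| = k$ and $x \notin \ch(\delta(A))$ do I replace the root, taking $v$ to be the leaf centroid of $\ch(\{x\}\cup\delta(A))$ instead of $\ch(\delta(A))$; here $x$ is a leaf of $\ch(\{x\}\cup\delta(A))$ (it is attached to $\ch(\delta(A))$ by a pendant path), the hull has at least $3$ nodes, so a leaf centroid exists and, being a non-leaf, differs from $x$.

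For correctness I would show, by induction down the recursion as in \S\,\ref{sec:k-cut-approx}, that every recursive call is made on a set of boundary size at most $k$. The one new step is to bound the components of $S[A]\setminus v$ produced by a replacement: by Lemma~\ref{p:comp-bounds} the boundary of such a component $C'$ is $U'\cup\{v\}$ with $U'\subseteq\delta(A)$, and $U'$ lies inside a single component of $\ch(\{x\}\cup\delta(A))\setminus v$; as that hull has at most $k+1$ leaves and all of $U'$ are leaves of it, the leaf-centroid property gives $|U'|\le\lfloor\frac{k+1}{2}\rfloor$, whence $|\delta(C')|\le\lceil\frac{k}{2}\rceil+1\le k$. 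The improvement comes from the component $C$ containing $x$: there, one of the at most $\lfloor\frac{k+1}{2}\rfloor$ leaf slots on that side is taken by $x$ itself, so $|U|\le\lceil\frac{k}{2}\rceil-1$ and $|\delta(C)|\le\lceil\frac{k}{2}\rceil$. I would also verify that ``rotating $v$ up to become the parent of $x$'' leaves $x$ as a child of $v$ with $V(T'_x)=C$, so that the ensuing call $\Call{Fix}{T',x}$ sees $|\delta(T'_x)|\le\lceil\frac{k}{2}\rceil<k$ and changes nothing.

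For the depth bound I would follow the template of Lemma~\ref{p:depth_approx}. Fix a node $u$; let $P$ be its search path in $T$, let $P'$ be its search path in the output $T^*$, and let $i_1<\dots<i_s$ be the positions of the nodes of $P'$ not on $P$. A position of $P'$ can be new only when it is occupied by a leaf centroid $v$ inserted by a replacement whose vertex $x$ lies, together with $u$, in the component $C$ above, and the depth of $u$ then increases by exactly one. Immediately below such an event the relevant subtree has boundary size at most $\lceil\frac{k}{2}\rceil$, so by Lemma~\ref{p:cutsize-inc} at least $\lfloor\frac{k}{2}\rfloor$ further levels must elapse before the boundary can climb back to $k$ and allow the next event; hence $i_{j+1}\ge i_j+\lfloor\frac{k}{2}\rfloor+1$, and since $|\delta(T)|=0$ also $i_1\ge k+1\ge\lfloor\frac{k}{2}\rfloor+1$. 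Charging to each new position $i_j$ the $\lfloor\frac{k}{2}\rfloor$ immediately preceding positions $i_j-\lfloor\frac{k}{2}\rfloor,\dots,i_j-1$ — which all lie on $P$ and are pairwise disjoint across $j$ — gives $s\cdot\lfloor\frac{k}{2}\rfloor\le|P|$, and therefore $\depth_{T^*}(u)=|P|+s\le(1+\tfrac{1}{\lfloor k/2\rfloor})\,\depth_T(u)$.

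I expect the main obstacle to be the delicate part of the second paragraph: confirming that after ``rotating $v$ up'' the vertex $x$ ends up as a child of $v$ with $V(T'_x)$ exactly the component $C$ — so that $\Call{Fix}{T',x}$ is genuinely a no-op — and isolating the extra $-1$ in $|\delta(C)|\le\lceil\frac{k}{2}\rceil$. That saving is precisely what upgrades $\varepsilon_k$ from $\frac{1}{\lceil k/2\rceil-1}$ to $\frac{1}{\lfloor k/2\rfloor}$ when $k$ is even, and, in particular, makes the statement meaningful at $k=2$, where it reduces to the factor-$2$ guarantee of Lemma~\ref{lemdepth}. The remaining pieces — the boundary updates and the running-time accounting — carry over unchanged from \S\,\ref{sec:k-cut-approx}; in any case, since the lemma is only used as an existence statement, its running time is immaterial.
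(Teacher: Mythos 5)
Your proof is correct and follows essentially the same approach as the paper's Appendix~\ref{appb}: delay the root replacement until $x$ is not a $k$-admissible root, use the leaf centroid of $\ch(\{x\}\cup\delta(A))$, and observe that the component $C$ containing $x$ then has boundary size at most $\lceil k/2\rceil$, so that $\floor{k/2}$ further levels must elapse before the next replacement. The only cosmetic difference is that you account via ``new positions on $P'$'' (as in the original proof of Lemma~\ref{p:depth_approx}) while the paper counts depth increases directly; the bounds obtained are the same.
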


\if0
\newpage
\section{Proof of Lemma~\ref{lemrepl}}\label{appd}

\restatelemrepl*

\begin{proof}
Observe first that if $T$ is Steiner-closed, then $Q$ is also Steiner-closed. This is because a search  path in $Q$ %
that  is not Steiner-closed would also be non-Steiner-closed in $T$.

By Theorem~\ref{thm3} we can execute the transformation step-by-step, so that each intermediate tree between $Q$ and $Q' = T'[V(Q)]$ is Steiner-closed. It is therefore sufficient to prove Lemma~\ref{lemrepl} for a single rotation in $Q$. Suppose that this rotation is at a node $x \in V(Q)$, with parent $p$, referring to Figure~\ref{fig2} for illustration. 

Let $y$ be an arbitrary node in $V(S) \setminus V(Q)$, and let $P_y$ and $P'_y$ denote the set of nodes on the search path of $y$ in $T$, resp.\ $T'$. We need to show that $P'_y$ is Steiner-closed.

\medskip
There are 3 cases to consider:

\begin{compactenum}[(i)]
\item $y$ is in a subtree $A$ of $T$ rooted at a child of $p$ other than $x$,
\item $y$ is in a subtree $B$ of $T$ rooted at a child of $x$ that remains the child of $x$ in $T'$,
\item $y$ is in a subtree $C$ in $T$ rooted at a child of $x$ that becomes the child of $p$ in $T'$.
\end{compactenum}

\medskip
Case (iii) is the easiest, as for all $y \in C$ we have $P_y = P'_y$, and thus the search path remains Steiner-closed.

\medskip

In case (i) we have $P'_y = P_y \cup \{x\}$.
Suppose for contradiction that $P'_y$ is not Steiner closed. Then there is some $q \in \ch(P'_y) \setminus P'_y$ that has degree at least $3$ in $\ch(P'_y)$. By the search tree property of $T$, node $p$ separates $x$ from all nodes of $A$ in $S$, and therefore also in $\ch(P'_y)$.

If $q$ and $y$ are on the same side of $p$ in $\ch(P'_y)$, then $p \in \ch(\{q,x\})$. As $q$ has degree at most 2 in $\ch(P_y)$, there must be a neighbor $r$ of $q$ with $r \in \ch(P'_y) \setminus \ch(P_y)$, which implies $r \in \ch(\{q,x\})$. As $p \in \ch(\{q,x\})$, also $r \in \ch(\{q,p\}) \subseteq \ch(P_y)$, a contradiction.

If $q$ and $x$ are on the same side of $p$ in $\ch(P'_y)$, then $q \notin V(A)$, so $q \in \ch(P'_p) \setminus P'_p =  \ch(P_x) \setminus P_x$ and $q$ has degree at least $3$ in $\ch(P_x)$, contradicting that $T$ is Steiner-closed.

The only remaining case is that $q=p$, but this is impossible, since $p \notin \ch(P'_y) \setminus P'_y$.

\medskip
In case (ii) we have $P'_y = P_y \setminus \{p\}$.
Suppose for contradiction that $P'_y$ is not Steiner closed. Then there is some $q \in \ch(P'_y) \setminus P'_y$ that has degree at least $3$ in $\ch(P'_y)$. By the search tree property in $T'$, node $x$ separates $p$ from all nodes of $B$ in $S$, and therefore also in $\ch(P'_y)$. 

If $q$ and $y$ are on the same side of $x$ in $\ch(P'_y)$, then $q \neq p$, so $q \in \ch(P_y) \setminus P_y$ and $q$ has degree at least $3$ in $\ch(P_y)$, contradicting that $T$ is Steiner-closed.

If $q$ and $p$ are on the same side of $x$ in $\ch(P'_y)$, then $q \notin V(B)$, so $q \in \ch(P_x) \setminus P_x$ and $q$ has degree at least $3$ in $\ch(P_x)$, contradicting that $T$ is Steiner-closed.

The only remaining case is that $q=x$, but this is impossible, since $x \notin \ch(P'_y) \setminus P'_y$. \qedd
\end{proof}
\fi

\end{appendix}

\end{document}